\documentclass[11pt]{article}
\usepackage[T1]{fontenc}
\usepackage{lmodern,amsmath,amssymb,amsthm,mathtools,booktabs,graphicx}
\usepackage[margin=1in]{geometry}
\usepackage[colorlinks=true,linkcolor=blue,citecolor=blue,urlcolor=blue]{hyperref}
\usepackage{microtype}
\newtheorem{theorem}{Theorem}[section]
\newtheorem{lemma}[theorem]{Lemma}
\newtheorem{corollary}[theorem]{Corollary}

\newcommand{\eps}{\varepsilon}
\newcommand{\Prob}{\mathbb P}
\newcommand{\E}{\mathbb E}
\newcommand{\R}{\mathbb R}
\newcommand{\F}{\mathcal F}
\newcommand{\ind}{\mathbf1}
\newcommand{\osc}{\operatorname{osc}}

\allowdisplaybreaks
\title{Restart Degeneration of the Swing Filter and a Simple Repair}
\author{Yue Chen}
\date{September 12, 2026\\\small Working draft}
\begin{document}
\maketitle

\begin{abstract}
The Swing filter extends a feasible line through its previous recording
and restarts from a fitted value at the last accepted sample. We show
that this inherited endpoint causes a singular loss of compression on
rough input. For uniformly sampled Brownian motion at fixed tolerance,
every fixed number of recording times converges to the first continuous
feasibility boundary, with alternating boundary errors. The longest
post-first segment tends to zero in probability. The segment-count ratio
against a continuous approximation through true sampled knots diverges
almost surely whenever that boundary precedes the observation horizon.

For unit Gaussian increments, constrained least squares and no maximum
lag, the first mean segment length has quadratic order in the tolerance
$E$, the second has linear order, and every fixed mean from the third
onward has order $E\log E$. We explain the logarithm through the
conditional survival tail and prove a geometric bound on the subsequent
approach to the stationary coefficient in the large-tolerance limit.
The stationary mean satisfies
$C_E\sim\gamma_{\rm Swing}E\log E$. An explicit one-dimensional
kernel specifies the single coefficient
$\gamma_{\rm Swing}\approx1.8120703$ and admits a $3/4$ Wasserstein
contraction and convergent evaluation brackets. The decimal is not a
certified error interval.

A half-budget slope corridor, true endpoint chords and one-sample
bridges repair the degeneration while preserving the original error
bound, continuous output and constant working memory. Including bridge
costs, the long-stream mean span per output segment is asymptotic to
$21\zeta(3)E^2/(8\pi^2)$. Its explicit coefficient follows from the
known Brownian anchored lifetime. Thus a change to the recording rule
restores the quadratic scale.
\end{abstract}

\section{Introduction}

Online piecewise-linear approximation with a prescribed maximum error is
used to compress numerical streams. The Swing filter of Elmeleegy
et al.~\cite{swing} maintains a slope interval through the preceding
recording. When a sample makes that interval empty, the filter chooses
by constrained least squares a feasible line for the old prefix, records
its value at the preceding sample time, and carries the rejected sample
into the next interval. This recording need not equal the observed value.

That distinction determines the behavior on rough input. At the first
continuous feasibility boundary, the fitted endpoint reaches the error
boundary. Brownian motion started with this inherited error admits no
feasible line for a positive future interval. A discrete filter continues
to advance, but the resulting segments have a singular dense-sampling
limit. We analyze this specific recording and restart rule and give a
simple repair with the same error tolerance and continuous output.

The first segment makes the change particularly visible. For unit Gaussian
increments and large tolerance $E$, its mean is asymptotic to
$21\zeta(3)E^2/\pi^2$. The second segment already loses a power of $E$,
but its mean is $o(E\log E)$, smaller even than the stationary mean.
Thus the first restart and the approach to equilibrium are separate
phenomena. We explain the exceptional second segment before averaging
over the states encountered far along the stream. We also prove that
the logarithmic mean order is present by the third segment and quantify
its subsequent approach to the stationary coefficient.

Our main conclusions are as follows.
\begin{enumerate}
\item For fixed-tolerance Brownian sampling, the recording times collapse
onto the first feasibility boundary and the endpoint errors alternate
between the two error boundaries. The longest post-first segment tends
to zero in probability. The number of segments nevertheless diverges
almost surely after the first boundary, whereas an explicit continuous
approximation through true sampled knots has uniformly bounded count.
These conclusions hold for every causal feasible slope choice.
\item For native constrained least squares with unit Gaussian increments,
$\E L_{2,E}=\Theta(E)$ and $\E L_{k,E}=\Theta_k(E\log E)$ for
every fixed $k\ge3$. The normalized later means approach one stationary
coefficient with a geometric large-tolerance bound. At each fixed $E$
the restart chain has a unique stationary law, whose mean satisfies
$C_E\sim\gamma_{\rm Swing}E\log E$. We derive an explicit limiting
kernel and a $3/4$ Wasserstein contraction, which also gives convergent
brackets for evaluating $\gamma_{\rm Swing}$. These bounds do not
assert a mixing time for the finite-$E$ stream.
\item A half-budget corridor, followed by a true endpoint chord and a
one-sample bridge, preserves the final error bound with $O(1)$ processing
per observation and $O(1)$ working memory. Its cycle lengths are iid for
Gaussian input, and its long-stream mean span per output segment is
$21\zeta(3)E^2/(8\pi^2)$ asymptotically. Both segments of each cycle are
counted in this coefficient.
\end{enumerate}

The two asymptotic regimes are distinct. Brownian degeneration fixes the
error tolerance and observation horizon and refines the grid. The Gaussian
stationary result takes the segment-number limit at each fixed $E$ before
$E\to\infty$. It does not give a native segment-count law in a window whose
length varies with $E$. The repair has its own renewal count law, proved
for its specified stopping-time restarts.

\paragraph{Prior work and the role of the repair.}
Swing's failure to minimize segment count is already recognized by
Luo et al.~\cite{luo}, who distinguish continuous, disconnected and mixed
representations. Endpoint selection also has a substantial history:
Bristol's swinging-door patent discusses interior endpoints and the last
in-bounds observation~\cite{bristol}; classical sliding-window segmentation
grows sample windows until an error test fails~\cite{keogh}.
This windowing scheme does not by itself require the fitted segment
to pass through the first observed value. Elmeleegy et al. themselves
discuss recording the last observation before choosing the least-squares
recording~\cite[Section 3.2]{swing}. We do not claim that anchoring at a true
sample is a new principle. The contribution is a quantitative degeneration
result for inherited fitted recordings and a complete error and scaling
analysis of an explicitly specified repair. The bridge is included in both
the output and its segment cost.

Fresh Brownian affine approximation has a different, quadratic lifetime
scale. Chen~\cite[Theorem 1.2 and Lemma 2.1]{chenBrownian} evaluates the
anchored mean as $21\zeta(3)/\pi^2$ at unit tolerance, with quadratic
scaling at other tolerances. We use that result to evaluate the repair's
coefficient after proving the new algorithm's regeneration and mean-limit
conditions. We do not apply a fresh-start lifetime to the selected carry
state of native Swing. Free-knot stochastic approximation~\cite{free}
also permits local choices that do not impose this inherited recording.
Brownian-input parameter studies of swinging-door compression~\cite{ishii}
provide additional motivation for specifying the endpoint rule precisely.

The maximum-lag branch in the original algorithm fixes a candidate line
before feasibility is exhausted and then continues with a linear filter.
It has different dynamics and is excluded throughout. Our results do not
contradict Swing's pointwise error guarantee and do not claim that the
repair is optimal on every input.

Section~\ref{sec:algorithm} specifies the recording rule and proves
Brownian degeneration. Section~\ref{sec:means} derives the conditional
tail and explains the exceptional second-segment mean.
Section~\ref{sec:stationary} identifies the stationary coefficient and
shows why logarithmic mean order appears from the third segment onward.
Section~\ref{sec:repair} constructs the repair and proves its quadratic
mean-span law. Section~\ref{sec:evidence} evaluates the one unknown
coefficient and states the scope of the comparison. The appendices give
the probability estimates and explicit computational formulas.

\section{The first segment and the loss at restart}\label{sec:algorithm}

Let $f$ be sampled at times $0,h,2h,\ldots$. A recording $(a,q)$ is fixed
throughout the current interval. For the samples already accepted after $a$,
the feasible slopes form
\begin{equation}\label{eq:cone}
 I=[L,U]=\bigcap_{a<t_i\le b}
 \left[\frac{f(t_i)-q-\eps}{t_i-a},
       \frac{f(t_i)-q+\eps}{t_i-a}\right].
\end{equation}
The next sample is accepted exactly when its interval intersects $I$.
Otherwise the filter chooses $m\in I$, records
\begin{equation}\label{eq:record}
 (b,q')=(b,q+m(b-a)),
\end{equation}
and initializes a new interval at this recording using the rejected sample
at $b+h$. The rejected sample is excluded from the old line-selection loss.
The initial recording is $(0,f(0))$. An end-of-file output uses a feasible
line on the final prefix.

The scalar least-squares choice in~\cite{swing} is
\[
 m=\max\{L,\min\{A,U\}\},\qquad
 A=\frac{\sum_i(t_i-a)(f(t_i)-q)}{\sum_i(t_i-a)^2}.
\]
The Brownian degeneration results apply to any feasible slope choice measurable when rejection
is observed. We use the natural Brownian filtration for this causality
requirement. Independent auxiliary randomization can be included without
changing the conditional Brownian arguments.
The Gaussian second-segment, stationary and leading-coefficient results
below specifically use the constrained least-squares rule.

The first segment and the restarted segments solve different initial-value
problems. The first line passes through an observed value. After rejection,
the next line passes through the previous fitted recording, and the
rejected observation is already part of the new constraint set.
Translating coordinates preserves both this inherited error and the
selection of the carried observation.

\paragraph{The Gaussian scales at a glance.}
At unit sampling, let $L_{k,E}$ be the $k$th complete segment length
at tolerance $E$, and let $C_E=\lim_{k\to\infty}\E L_{k,E}$. Then
\begin{equation}\label{eq:scalesummary}
\begin{aligned}
 \E L_{1,E}&\sim\frac{21\zeta(3)}{\pi^2}E^2,
 &\E L_{2,E}&=\Theta(E),\\
 \E L_{k,E}&=\Theta_k(E\log E)\quad(k\ge3\text{ fixed}),
 &C_E&\sim\gamma_{\rm Swing}E\log E.
\end{aligned}
\end{equation}
See Theorems~\ref{thm:firstmean}, \ref{thm:secondsegment},
\ref{thm:transientconvergence} and~\ref{thm:leadingcoefficient}.
The first coefficient is the anchored Brownian mean in
\cite[Theorem 1.2]{chenBrownian}. The second mean is smaller than the
stationary mean by a logarithmic factor: immediate degeneration precedes
equilibration. The subsequent geometric bound concerns the means after
normalization by $E\log E$; it does not say that the second segment is
already near the stationary coefficient.

Write $b_{k,h}$ for the $k$th rejection recording time, $b_{0,h}=0$, and
\[
 J_{k,h}=b_{k,h}+h,\quad
 L_{k,h}=b_{k,h}-b_{k-1,h},\quad
 \eta_{k,h}=q_{k,h}-f(b_{k,h}),\quad e_{k,h}=\eta_{k,h}/\eps.
\]
The $J_{k,h}$ are stopping times; the $b_{k,h}$ generally are not.
For $k\ge1$ the exact identity
\begin{equation}\label{eq:durationidentity}
 J_{k+1,h}-J_{k,h}=L_{k+1,h}
\end{equation}
will allow a duration budget at the correct stopping times.
The process is defined on an infinite sampled path before restriction to a
finite horizon. Let $T_h=h\lfloor T/h\rfloor$ and
$R_h(T)=\#\{k:J_{k,h}\le T_h\}$. Initial and terminal outputs are excluded
from this count.

For a continuous path $f(0)=0$ and an initial fitted value $\eta$ with
$|\eta|<\eps$, define
\[
 L_f(t)=\sup_{0<s\le t}\frac{f(s)-\eta-\eps}{s},\qquad
 U_f(t)=\inf_{0<s\le t}\frac{f(s)-\eta+\eps}{s},\qquad
 \sigma=\inf\{t>0:L_f(t)=U_f(t)\}.
\]

\begin{lemma}\label{lem:pinch}
If $0<\sigma<\infty$, the feasible slope at $\sigma$ is a unique finite
number $m_*$ and $|f(\sigma)-\eta-m_*\sigma|=\eps$.
For standard Brownian motion, $\sigma$ is finite almost surely, is a
stopping time, and is the first feasibility-failure boundary. Starting
from either fitted value $B_\sigma\pm\eps$ admits no feasible line on
any positive-length future interval almost surely.
\end{lemma}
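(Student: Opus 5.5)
We split the proof into a deterministic part, which handles the first sentence for any continuous $f$, and a probabilistic part for Brownian motion.

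\emph{Deterministic part.} For $0<t<\sigma$ we have $L_f(t)<U_f(t)$. The function $L_f$ is nondecreasing and $U_f$ is nonincreasing in $t$.

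\begin{itemize}
\item Near $s=0$, the numerators $f(s)-\eta\mp\eps$ tend to $-\eta\mp\eps$. These limits are strictly negative for the lower family and strictly positive for the upper family, because $|\eta|<\eps$. Hence $L_f(t)\to-\infty$ and $U_f(t)\to+\infty$ as $t\downarrow0$, and $\sigma>0$ automatically.
\item On $[\delta,t]$ the quotients are continuous, so $L_f$ and $U_f$ are finite for every $t>0$ and continuous in $t$.
\item Since $L_f\le U_f$ up to $\sigma$ and both are continuous, $L_f(\sigma)=U_f(\sigma)=:m_*$, which is finite. The feasible set $[L_f(\sigma),U_f(\sigma)]$ is therefore the single point $m_*$.
\end{itemize}

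To show the pinch occurs at $s=\sigma$, suppose the sup defining $L_f(\sigma)$ were attained only at some $s_1<\sigma$ and the inf defining $U_f(\sigma)$ only at some $s_2<\sigma$. Then $L_f(\sigma)=L_f(\max(s_1,s_2))$ and $U_f(\sigma)=U_f(\max(s_1,s_2))$, so the two functions would already coincide at an earlier time. This contradicts the definition of $\sigma$ as an infimum.

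Hence $s=\sigma$ attains either the sup in $L_f$ or the inf in $U_f$. In the first case $f(\sigma)-\eta-\eps=m_*\sigma$. In the second case $f(\sigma)-\eta+\eps=m_*\sigma$. Either way $|f(\sigma)-\eta-m_*\sigma|=\eps$.

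\emph{Probabilistic part.}

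\begin{itemize}
\item \emph{Stopping time.} $\sigma=\inf\{t:U_f(t)-L_f(t)\le0\}$ is the hitting time of a closed set by a continuous adapted process. It is therefore a stopping time for the right-continuous augmented filtration.
\item \emph{Finiteness.} We need $\sigma<\infty$ a.s. By the law of the iterated logarithm, $B$ takes arbitrarily large positive and negative values. Choose $s_1<s_2$ with $B(s_1)>M$ and $B(s_2)<-M$ for large $M$. Any line through $(0,\eta)$ is then forced to have slope at least $(M-\eta-\eps)/s_1$ and at most $(-M-\eta+\eps)/s_2$. These bounds are incompatible once $M>|\eta|+\eps$. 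So the feasible slope set becomes empty at some finite time.
\item \emph{First feasibility-failure boundary.} Before $\sigma$ the feasible set is a nondegenerate interval. After $\sigma$ it is empty a.s., which follows from the next item applied to the one-point line. So $\sigma$ is the first feasibility-failure boundary.
\end{itemize}

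\emph{No feasible line from the inherited error (main step).} Start from the recording $(\sigma,B_\sigma\pm\eps)$. By the strong Markov property, $W(u)=B_{\sigma+u}-B_\sigma$ is a standard Brownian motion. The fitted error $\pm\eps$ sits exactly on the boundary.

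Take the $+\eps$ case. A feasible line of slope $m$ on $(0,\delta]$ would require
\[
W(u)\ge mu \quad\text{for all small } u.
\]
By the law of the iterated logarithm at $0^+$,
\[
\liminf_{u\downarrow0} \frac{W(u)}{\sqrt{2u\log\log(1/u)}}=-1 \quad\text{a.s.}
\]
Hence $W(u)/u\to-\infty$ along a sequence $u\downarrow0$, so no finite $m$ works. The $-\eps$ case follows by symmetry, using the limsup instead.

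The null set does not depend on $m$, because $W(u)/u$ is unbounded below along a single sequence. This gives the conclusion simultaneously for all slopes and all $\delta>0$.

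The delicate points are the attainment argument in the deterministic part and the careful use of the strong Markov property at $\sigma$. I expect the rest to be routine.
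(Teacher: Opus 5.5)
Your proof is correct and follows the paper's argument. The deterministic part uses the same contradiction: an endpoint residual strictly inside the band would make the collapse happen earlier. The Brownian part uses the strong Markov property and the law of the iterated logarithm at $0^+$ to get $\liminf_{u\downarrow0}W(u)/u=-\infty$ and $\limsup_{u\downarrow0}W(u)/u=+\infty$, which excludes every slope at once and also stops the unique old line from extending past $\sigma$.

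The one small difference is how you prove $\sigma<\infty$. You exhibit incompatible constraints directly from times with $B(s_1)>M$ and $B(s_2)<-M$. The paper instead supposes a line feasible forever and uses $B_t/t\to0$ to reach a contradiction. Both routes are valid.
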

\begin{proof}
The optimized functions tend respectively to $-\infty$ and $+\infty$
at zero. Their extrema on a positive interval are attained away from zero,
and the running extrema are continuous. If the endpoint residual at
$\sigma$ were strictly interior, both active constraints would have been
attained earlier. The interval would therefore have collapsed earlier.

For Brownian motion, if all prefixes were feasible, nested compact slope
intervals would give a line feasible forever. The property $B_t/t\to0$
forces its slope to be zero, whereas Brownian motion does not stay in any
fixed bounded interval. At the finite stopping time $\sigma$, the strong
Markov property, time inversion, and the two-sided law of the iterated
logarithm give, for $W_s=B_{\sigma+s}-B_\sigma$,
\begin{equation}\label{eq:oscillation}
 \limsup_{s\downarrow0}W_s/s=+\infty,\qquad
 \liminf_{s\downarrow0}W_s/s=-\infty
 \quad\hbox{almost surely}.
\end{equation}
These standard Brownian facts are stated in~\cite{mp}, Theorems 1.9,
2.16, 5.1 and Remark 5.2. A line through $B_\sigma+\eps$ would require
$W_s\ge ms$, and one through $B_\sigma-\eps$ would require $W_s\le ms$.
Equation~\eqref{eq:oscillation} excludes both, simultaneously for all finite
$m$. It also prevents the unique old line from extending past $\sigma$.
\end{proof}

\subsection{Boundary concentration and alternating errors}\label{sec:boundary}

\begin{theorem}\label{thm:fixed}
Let $B$ be standard Brownian motion and let $\sigma,m_*$ refer to the
continuous initial problem with true anchor. Set
$S=(m_*\sigma-B_\sigma)/\eps$. For every fixed $K\ge1$,
\begin{equation}\label{eq:fixed}
 (b_{k,h},e_{k,h})_{k=1}^K
 \longrightarrow (\sigma,(-1)^{k-1}S)_{k=1}^K
 \quad\hbox{almost surely}.
\end{equation}
Moreover $\Prob(S=1)=\Prob(S=-1)=1/2$, and $S$ is independent of $\sigma$.
On $\{\sigma<T\}$, $R_h(T)\to\infty$, whereas $hR_h(T)\to0$ almost surely.
\end{theorem}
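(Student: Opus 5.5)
We argue by induction on $k$. Each step combines a deterministic continuity statement for the discrete slope cone~\eqref{eq:cone} with the pathwise oscillation~\eqref{eq:oscillation} from Lemma~\ref{lem:pinch}.

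\emph{First segment.} Fix a Brownian path in the full-measure event where $\sigma<\infty$, Lemma~\ref{lem:pinch} holds, and the first touching constraints are attained at unique times. On this event the discrete cone $[L_h(t),U_h(t)]$ built from grid points in $(0,t]$ satisfies two facts. It is nonempty for every $t<\sigma-\delta$ once $h$ is small, because the continuous cone has positive width on $[0,\sigma-\delta]$ and the discrete extrema are bounded by the continuous ones. It is empty for some $t\le\sigma+\delta$ once $h$ is small, because just after $\sigma$ the continuous cone is empty: by uniform continuity, the two violating constraints are witnessed by nearby grid points. Hence $b_{1,h}\to\sigma$.

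Any slope $m_h$ chosen in the last nonempty discrete cone lies between $L_h(b_{1,h})$ and $U_h(b_{1,h})$. Both of these converge to $m_*$, since the continuous running extrema are continuous and the grid approximation error tends to zero. So $q_{1,h}=m_hb_{1,h}\to m_*\sigma$, which gives $e_{1,h}\to(m_*\sigma-B_\sigma)/\eps=S$. This step is causality-free: it holds for \emph{every} feasible choice.

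For the law of $S$, use the symmetry $B\mapsto -B$. It preserves the law, maps $\sigma$ to itself, and flips $S$. Hence $(\sigma,S)\stackrel{d}{=}(\sigma,-S)$. Together with $|S|=1$ from Lemma~\ref{lem:pinch}, this gives independence and $\Prob(S=\pm1)=1/2$.

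\emph{Induction step.} Suppose $b_{k,h}\to\sigma$ and $e_{k,h}\to(-1)^{k-1}S$, so that for small $h$ the recording $q_{k,h}$ lies near the boundary value $B_\sigma+(-1)^{k-1}S\eps$. We must show three things: $b_{k+1,h}\to\sigma$, the next recorded error converges to the \emph{opposite} boundary, and the segment length satisfies $L_{k+1,h}\to0$.

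For the length, use~\eqref{eq:oscillation} at $\sigma$. Fix $\delta>0$. In any window $(\sigma,\sigma+\delta)$ there are grid points where $W_s/s$ is arbitrarily large and others where it is arbitrarily small, and these approximate uniformly. A line through a value within $o(1)$ of $B_\sigma+\eps$ with bounded slope then fails within time $\delta$, since $W$ dips below any line of slope $m$. Unbounded slopes are excluded by the slope-sandwiching constraints from the first accepted points after the recording.

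For the error flip, consider the constraint set at rejection time $b_{k+1,h}$. It consists of a few samples in a shrinking window, and the cone width collapses. If the recording sits near the upper boundary $+\eps$, the binding lower constraint $f(t_i)-q-\eps$ forces any feasible line to end within $o(1)$ of $f-\eps$, giving error near $-\eps$. The delicate point is to make this rigorous uniformly over all causal choices. We will handle it via the pinch lemma applied to the fresh problem started from $(b_{k,h},q_{k,h})$. The rescaled feasibility failure must occur with the new endpoint residual on the opposite side, since the same-side constraint is already nearly tight at time zero.

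\emph{Counting.} On $\{\sigma<T\}$, fix $K$. Then $J_{K,h}=b_{K,h}+h\to\sigma<T$, so eventually $J_{K,h}\le T_h$ and $R_h(T)\ge K$. Hence $R_h(T)\to\infty$. For $hR_h(T)\to0$, note that each segment after the first spans at least one grid step. A bound of the form $hR_h\le T\cdot(\text{fraction})$ is too weak, so instead we will show that segments started from a near-boundary error have length at least $c\,h^{-1/2}\cdot h$ in the typical sense, or we will use the Brownian local scaling: a line from an error of order $\eps$ survives a number of steps that diverges.

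\emph{Main obstacle.} The main obstacle is this last counting statement together with the uniform alternation. I would first try to show that after the collapse the errors stay bounded away from $\pm\eps$ by at most $O(\sqrt h)$, and that each restart survives $\gtrsim$ a diverging number of grid steps. If that fails, I would argue ergodically via the identity~\eqref{eq:durationidentity} at the stopping times $J_{k,h}$.
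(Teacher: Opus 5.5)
Your argument for the first segment, the reflection argument for the law of $S$, and the proof of $R_h(T)\to\infty$ agree with the paper. The gap is in the two items you list as the main obstacle, alternation and $hR_h(T)\to0$, and the route you suggest for alternation would not work.

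\emph{Alternation.} Lemma~\ref{lem:pinch} gives no sign information for a restart. From a boundary error the continuous problem is infeasible at once, and in general the lemma only gives $|{\rm residual}|=\eps$.

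What decides the side is small path oscillation. Once $b_{k,h},b_{k+1,h}\to\sigma$, the whole $(k+1)$st interval, including its rejected sample, lies in a vanishing window. So the displacements $x_i$ from its starting point satisfy $|x_i|\le d<\eps$.

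Suppose the starting error is $\eta\ge0$. Every old lower bound satisfies $(x_i-\eta-\eps)/s_i\le(d-\eta-\eps)/s_j$, because the numerator is negative and $s_i<s_j$. The new upper bound is at least $(-d-\eta+\eps)/s_j$. These differ by at least $2(\eps-d)/s_j>0$, so only an opposite-side rejection can occur (Lemma~\ref{lem:direction}).

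At that rejection the selected line gives $0\le\eps+\eta_{\rm new}<\Delta f_j-mh$. This forces $\eta_{\rm new}\to-\eps$ only if $\sup|m|h\to0$ uniformly over recordings, which you have not established. Your picture of ``a few samples in a shrinking window'' is the wrong one: you need the number of samples per interval to diverge.

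\emph{Counting.} The same missing fact settles $hR_h(T)\to0$ deterministically. No typical-case, $O(\sqrt h)$ or ergodic argument is needed, and a bound holding only ``in the typical sense'' would not give an almost sure statement anyway.

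Fix $n$ and a recording at time $a$ with $|\eta|\le\eps$. The line through the values $f(a)+\eta(1-i/n)$ at $a+ih$ has residual at most $\eps(1-i/n)+\omega_f(nh)\le\eps$ for $1\le i\le n$, once $\omega_f(nh)<\eps/n$. Hence the least number $k_h$ of new samples in a completed interval tends to infinity for every continuous path (Lemma~\ref{lem:uniform}). This gives $hR_h(T)\le T/k_h\to0$. By endpoint feasibility it also gives $|m|h\le(2\eps+\osc(f))/k_h\to0$, which is exactly what the alternation step needs.

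\emph{Length step (minor).} The first accepted points after a recording bound the slope only by $O(1/h)$. Instead, suppose the interval reached $\sigma+u$ along a subsequence. Use an accepted sample near $\sigma+u/2$ to bound the slope, and extract a convergent slope subsequence. Pass to a limiting finite-slope line through $B_\sigma\pm\eps$ that fits a positive interval, contradicting Lemma~\ref{lem:pinch}. This compactness step is also what absorbs the $o(1)$ offset of $q_{k,h}$ from the boundary.
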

The first cone converges to the continuous pinch described above.
Every later fixed recording inherits a boundary error, so a positive
limiting interval would contradict Brownian oscillation at that boundary.
The rejection inequality then forces the error sign to reverse. Uniform
boundary estimates and the induction are proved in
Appendix~\ref{app:boundary}.

\begin{theorem}\label{thm:empirical}
On $\{\sigma<T\}$, almost surely,
\[
 \frac1{R_h(T)}\sum_{k=1}^{R_h(T)}\delta_{e_{k,h}}
 \ \Rightarrow\ \tfrac12\delta_{-1}+\tfrac12\delta_{+1}.
\]
On the same event, for every $a>0$, the proportion of completed intervals
of length at least $a$ tends to zero almost surely.
\end{theorem}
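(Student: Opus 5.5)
The second assertion is a counting argument, and I would dispose of it first. The completed intervals $(b_{k-1,h},b_{k,h}]$ with $J_{k,h}\le T_h$ are disjoint subintervals of $[0,T]$. So the number with $L_{k,h}\ge a$ is at most $T/a+1$, uniformly in $h$. Theorem~\ref{thm:fixed} gives $R_h(T)\to\infty$ almost surely on $\{\sigma<T\}$, so the proportion of long intervals is at most $(T/a+1)/R_h(T)\to0$. The same bound shows that for every $a>0$, all but $O(1)$ indices $k\le R_h(T)$ have $L_{k,h}<a$. I will reuse this in the first assertion.

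For the empirical measure I would avoid using Theorem~\ref{thm:fixed} index by index, since it only controls a fixed number $K$ of recordings. Instead I would prove a deterministic, pathwise statement that holds uniformly in $k$. Fix a path in the full-measure event where Lévy's modulus of continuity holds on $[0,T+1]$, that is $\omega(r)=\sup_{|t-s|\le r}|B_t-B_s|\le C\sqrt{r\log(1/r)}$, and where Theorem~\ref{thm:fixed} holds. The key local claim is:
\begin{quote}
for every $\delta>0$ there exist $a>0$ and $h_0>0$ such that, for $h<h_0$ and $L_{k,h},L_{k+1,h}<a$, the errors satisfy $|e_{k+1,h}+e_{k,h}|\le\delta$ and $\bigl||e_{k+1,h}|-1\bigr|\le\delta$, except when $k$ belongs to a set of indices whose cardinality is bounded independently of $h$.
\end{quote}

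The mechanism behind the claim is a discrete form of the argument in Lemma~\ref{lem:pinch} and the rejection inequality. On a short interval the feasible cone is determined by a few constraints. The recorded endpoint error is $e_{k+1}=\bigl(q_k+m(b_{k+1}-b_k)-f(b_{k+1})\bigr)/\eps$, and rejection of the sample at $b_{k+1}+h$ gives the inequality
\[
 |e_{k+1}\eps+mh-(f(b_{k+1}+h)-f(b_{k+1}))|>\eps \quad\text{for all } m\in I .
\]
Combining this with the bound $|m|\le(2\eps+\omega(a))/L$ on the selected slope should force the endpoint residual to within $o(1)$ of a boundary. The sign reversal would come from the same reasoning that produces the alternation in Theorem~\ref{thm:fixed}: a line starting within $o(1)$ of $+\eps$ over a short interval cannot end near $+\eps$ again without the cone staying open longer. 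I would quantify this using the modulus $\omega$ in place of the law of the iterated logarithm.

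Granting the claim, the proof of the first assertion finishes quickly. All but $O(1)$ consecutive pairs $k\le R_h(T)$ satisfy the claim, so the errors follow an alternating sequence $\pm(1+O(\delta))$ with $O(1)$ defects. Each defect can shift the balance between the two signs by at most a bounded amount. Hence the fraction of indices with $e_{k,h}$ within $\delta$ of $+1$, and likewise of $-1$, is $\tfrac12+O(1/R_h(T))$. Letting $h\to0$ and then $\delta\to0$ gives weak convergence to $\tfrac12\delta_{-1}+\tfrac12\delta_{+1}$.

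The main obstacle is the local claim. The difficulty is that Theorem~\ref{thm:fixed} uses the continuous pinch at the single time $\sigma$, whereas here the recordings sit at many distinct times near $\sigma$, all within a shrinking window. I would first try to show, using $hR_h(T)\to0$ and Theorem~\ref{thm:fixed}, that $b_{R_h(T),h}\to\sigma$, so that every recording lies in $[\sigma,\sigma+o(1)]$. Then I would apply the oscillation property \eqref{eq:oscillation} uniformly over that window, in the quantitative form $\sup_{t\in[\sigma,\sigma+r]}\limsup$ replaced by a modulus bound. If this uniformity fails, the fallback is to count the exceptional pairs directly. Each exception would need to be shown to consume either a length bounded below or an error change bounded below, and both resources are limited by $T$ and by a $\pm\eps$ budget argument respectively.
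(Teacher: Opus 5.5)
\textbf{There is a genuine gap.} Your treatment of the second assertion is correct and is the paper's counting argument. The first assertion, however, rests entirely on your local claim. You leave that claim unproved, and the route you propose for it fails at several points:
\begin{itemize}
\item The step ``show $b_{R_h(T),h}\to\sigma$, so that every recording lies in $[\sigma,\sigma+o(1)]$'' is false. Neither $hR_h(T)\to0$ nor the fixed-$K$ limits in Theorem~\ref{thm:fixed} say where the $R_h(T)$th recording sits. In fact Theorem~\ref{thm:max} includes the terminal clipped interval, so it gives $T_h-b_{R_h(T),h}\to0$ in probability. On $\{\sigma<T\}$ the recordings therefore fill all of $[\sigma,T]$.
\item Applying \eqref{eq:oscillation} ``uniformly over a window'' is also unavailable. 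That property holds at stopping times, and Brownian paths have exceptional one-sided supporting times.
\item Your slope bound $|m|\le(2\eps+\omega(a))/L$ pushes the endpoint residual to the boundary only if $L/h\to\infty$ uniformly in $k$, which you do not establish. An interval with $O(1)$ samples leaves $|m|h$ of order $\eps$, and the rejection inequality then forces nothing.
\item The fallback's ``$\pm\eps$ budget'' on error changes does not exist: the errors jump by about $2\eps$ at every restart.
\end{itemize}

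\textbf{What closes the gap.} Two deterministic facts suffice, using only continuity of the path and no localization near $\sigma$.

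\emph{Uniform sample count and boundary concentration.} Bound the number of samples in every interval from below. From any recording $(a,q)$ with $|\eta|\le\eps$, the line taking values $f(a)+\eta(1-i/n)$ at $a+ih$ fits $n$ new samples once $\omega_f(nh)<\eps/n$. Hence the least sample count $k_h\to\infty$ over all intervals simultaneously. Endpoint feasibility then gives $\sup_k|m|h\le(2\eps+\osc(f))/k_h\to0$. The rejection inequality \eqref{eq:signedbound} gives $\sup_k(\eps-|\eta_{k,h}|)\le\omega_f(h)+\sup_k|m|h\to0$. This is boundary concentration for every $k$ at once, with no exceptional indices, and it makes all errors nonzero for small $h$.

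\emph{Alternation on short intervals.} Suppose the path displacement $d$ over an interval, including its rejected sample, is below $\eps$, and the starting error is $\eta\ge0$. Then every old lower slope bound is at most $(d-\eta-\eps)/s_j$, while the new upper bound is at least $(-d-\eta+\eps)/s_j$. So lower rejection is impossible. The rejection is therefore from above, and \eqref{eq:signedbound} places the new error in $[-\eps,-\eps+\omega_f(h)+|m|h)$.

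\emph{Counting.} Choose $\rho$ with $\omega_B(\rho)<\eps$. A repeated sign then requires an interval longer than $\rho$, and there are at most $2T/\rho$ of these for small $h$. The two sign counts differ by at most $1+2T/\rho$. Dividing by $R_h(T)\to\infty$ finishes the proof. Your local claim follows from these two facts, but it is exactly where all the work lies.
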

Uniform boundary concentration and alternation outside a bounded number
of long intervals give this empirical statement; see
Appendix~\ref{app:boundary}.

\subsection{The longest post-first interval}\label{sec:longest}

Define the maximum duration after the first recording, clipped at the
observation horizon, by
\begin{equation}\label{eq:maxdefinition}
 \mathcal M_h(T)=\max_{k\ge1}
   \bigl(\min\{b_{k+1,h},T_h\}-b_{k,h}\bigr)_+.
\end{equation}
It is zero if no rejection occurs before the horizon. It includes the
terminal incomplete interval, but excludes the initial interval.

\begin{theorem}\label{thm:max}
For every fixed $T,\eps>0$ and every causal feasible slope selector,
$\mathcal M_h(T)\to0$ in probability as $h\downarrow0$.
\end{theorem}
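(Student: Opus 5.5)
We argue by contradiction with a uniform oscillation estimate. The key point is that \emph{every} post-first recording $(b_{k,h},q_{k,h})$ lies at the edge of the feasible set of a line that was just rejected. Suppose $\Prob(\mathcal M_h(T)>2a)\ge\delta$ along a sequence $h\downarrow0$, for some $a,\delta>0$. On this event some index $k\ge1$ has $b_{k,h}<T_h$ and $\min\{b_{k+1,h},T_h\}-b_{k,h}>2a$. The filter then accepted every sample in $(b_{k,h},b_{k,h}+2a]\cap h\mathbb Z$, and these samples lie within $\eps$ of a single line through $(b_{k,h},q_{k,h})$. We want to show this is impossible with high probability, uniformly over the random (non-stopping) time $b_{k,h}$ and the selected slope.

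\textbf{Step 1: the inherited error is at the boundary.} At rejection, the slope interval $I$ of the old segment, enlarged by the rejected sample's constraint at $J_{k,h}=b_{k,h}+h$, is empty. Hence any feasible $m\in I$ satisfies one of two inequalities. Either
\[
f(J_{k,h})-q_{k,h}-m h>\eps,
\]
or the mirror inequality holds. So the new line starts at $q_{k,h}$, and the very next sample lies outside the $\eps$-band of the old line extended by one step. We therefore need a quantitative replacement for Lemma~\ref{lem:pinch}. After the rejected sample, the new line through $q_{k,h}$ must keep all samples on $(J_{k,h},J_{k,h}+a]$ within $\eps$. Combining this with the old constraints on the previous segment shows that the path on a window of length about $a$ around $b_{k,h}$ is nearly confined to a band of width $2\eps$ about a broken line. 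The kink is at a boundary-touching point.

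\textbf{Step 2: a modulus argument without stopping times.} Since $b_{k,h}$ is not a stopping time, we avoid conditioning. Instead we prove a deterministic statement on a high-probability event for the whole path on $[0,T+1]$. Fix small $\rho>0$ and the Lévy modulus event
\[
|B_t-B_s|\le C\sqrt{|t-s|\log(1/|t-s|)}.
\]
Discretize candidate times into a grid of mesh $\rho$. Use the following uniform fact: with probability at least $1-\delta/2$, for every grid point $u\le T$, every pair of lines with a common point at time $u$, and residual pattern of the form above, the Brownian path fails to stay within $\eps$ of them on both $[u-a,u]$ and $[u,u+a]$ while touching the boundary on the correct side at $u$, up to slack $o(1)$.

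For a single $u$ this is the statement of Lemma~\ref{lem:pinch}, in the form ``no line through $B_u\pm\eps$ works on $[u,u+a]$''. Made quantitative via \eqref{eq:oscillation}, it becomes: for each $u$, almost surely
\[
\inf_m\sup_{0<s\le a}\bigl(\pm(W_s-ms)\bigr)>0,
\]
and this infimum is bounded below by a positive random variable. A union bound over the finitely many grid points, together with continuity of the oscillation functional in the starting point (via the modulus event), transfers this to all $u$, including $u=b_{k,h}$. Convergence of the sampled grid $h\mathbb Z$ to continuous time on the modulus event then shows that the discrete filter's acceptance over $(b_{k,h},b_{k,h}+2a]$ would force a continuous-time near-feasible line, which is a contradiction.

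\textbf{Main obstacle.} The delicate step is the uniformity over the random time $b_{k,h}$ with a boundary error that is only approximately $\pm\eps$. At a deterministic time $u$, Brownian motion touches the band boundary $q\mp\eps$ only at special exceptional times. So the statement must be stated for \emph{all} times at which the path attains a local one-sided extremum relative to a line. The first attempt is to use the fact that the rejected sample together with the old cone gives a two-sided trap. The old line (fitted on $[b_{k-1,h},b_{k,h}]$, possibly very short) and the new line must jointly keep the path within $\eps$ while the path jumps across at $J_{k,h}$. We then show via the law of the iterated logarithm at local extrema — the relevant times are local extrema of $B_t-mt$, where the path behaves like a Brownian meander with $\sqrt{s}$ growth — that a slack-$\theta$ version fails with probability tending to one as $\theta\to0$, uniformly over $m$ in a compact set. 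Slopes are a priori bounded because any feasible segment of length $\ge a$ has $|m|\le(2\eps+\osc B)/a$. If this fails, we fall back on the alternation structure of Theorem~\ref{thm:fixed} and Appendix~\ref{app:boundary}, and bound separately the intervals before and after $\sigma$.
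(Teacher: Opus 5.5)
There is a genuine gap in Step~2, and your ``main obstacle'' paragraph names it without resolving it. The uniform fact you need is false: it would say that, on a high-probability event, no time $u\le T$ admits a line through $B_u\pm\eps$, compatible with the past constraints and with slack $o(1)$, that fits $[u,u+a]$.

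Here is a counterexample. Fix $m$ and let $u$ be the time of the maximum of $B_t-mt$ on $[0,T+a]$. With positive probability $u\le T$ and $B_t-mt$ never drops by more than $2\eps$ on $[u-a,u+a]$. Then the line of slope $m$ through $(u,B_u-\eps)$ keeps the path in the band on both sides, touching the boundary at $u$ with zero slack.

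The rejection structure does not rule this out. For an upper rejection, the carried sample satisfies $W_h-m_{\rm old}h>\theta\ge W_h-m_{\rm new}h$, where $\theta=\eps+\eta_k$. Hence $m_{\rm new}>m_{\rm old}$, and the path need only lie below a convex broken line with its kink at $u$. That is weaker than lying below a single line of any intermediate slope.

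Consequently, with $W^u_s=B_{u+s}-B_u$, the functional $\inf_{|m|\le M}\sup_{0<s\le a}(\pm(W^u_s-ms))$ is positive almost surely at each fixed time or stopping time, but its infimum over $u\in[0,T]$ is not bounded below. A union bound over a $\rho$-grid combined with the L\'evy modulus therefore cannot transfer the fixed-time statement to all $u$. The meander heuristic also points the wrong way: at such extrema the line fits exactly. (Minor point: over all real $m$ your infimum is $0$, since it vanishes as $|m|\to\infty$ in the appropriate direction, so a compact slope range is needed from the start.)

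Your fallback on Theorem~\ref{thm:fixed} does not help either. It controls only a fixed number of recordings, whereas the long interval may have an index growing like $1/h$. Finally, a pathwise contradiction valid for all small $h$ on one fixed event would give almost sure convergence, which is more than the theorem asserts.

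The missing idea is that exceptional times exist. What must be shown is that the \emph{filter} rarely places a recording at one. This is a probabilistic statement, and it has to be proved at the rejection stopping times $J_{k,h}$, where the future is a fresh Brownian motion.

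Lemma~\ref{lem:ratioestimates} does this. Conditional on $\F_{J_{k,h}}$, with $d=w_++\sqrt h$, the probability of a bounded-slope surviving interval of length $a$ is $O(d+\sqrt{h\log(1/h)})$. The conditional truncated mean duration is at least $cd\log(t_0/(Cd))$. Because of the extra logarithm, the ratio $\beta_h(\gamma)$ of these two bounds tends to zero.

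The number of restarts diverges, so a plain union bound over restarts is useless. Instead, the exact identity $J_{k+1,h}-J_{k,h}=L_{k+1,h}$ gives the pathwise budget $\sum_k(L_{k+1,h}\wedge t_0)\le T+t_0$. The tower property then yields $\Prob(\mathcal M_h(T)\ge a)\le\beta_h(\gamma)(T+t_0)+o(1)$. Your argument lacks this comparison of survival probability with expected duration at stopping times, summed against a time budget.
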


\begin{corollary}\label{cor:relative}
For each fixed observation horizon,
$\mathcal M_h(T)/L_{1,h}\to0$ in probability. The horizon may also be
replaced by $A L_{1,h}$ for any fixed $A>1$.
\end{corollary}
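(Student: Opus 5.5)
The corollary should follow from Theorem~\ref{thm:max}, once we know that $L_{1,h}$ stays bounded away from zero in probability. By definition $L_{1,h}=b_{1,h}$. Theorem~\ref{thm:fixed} with $K=1$ gives $b_{1,h}\to\sigma$ almost surely. Lemma~\ref{lem:pinch} gives $0<\sigma<\infty$ almost surely; positivity holds because at small times the first cone contains slope $0$, since $|B_s|<\eps$ near zero. So $L_{1,h}\to\sigma>0$ almost surely.

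For fixed $T$ we write $\mathcal M_h(T)/L_{1,h}$ as the product of $\mathcal M_h(T)\to0$ in probability and $1/L_{1,h}\to1/\sigma$ almost surely. The second factor is a finite random variable, so it is tight. The product of a sequence tending to zero in probability and a tight sequence tends to zero in probability. (If no rejection occurs before $T_h$, then $\mathcal M_h(T)=0$ by convention, and this causes no trouble.)

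The replacement of the horizon by $AL_{1,h}$ with $A>1$ is the main point, because this horizon is random and depends on $h$. We do not need uniformity in the horizon. The quantity $\mathcal M_h(\cdot)$ is monotone nondecreasing in the horizon: enlarging $T$ only enlarges the clipped durations $\min\{b_{k+1,h},T_h\}-b_{k,h}$ and adds intervals. Given $\delta>0$, choose a deterministic $T$ with $\Prob(A\sigma+1>T)<\delta$. On the event $\{AL_{1,h}\le T\}$ we have $\mathcal M_h(AL_{1,h})\le\mathcal M_h(T)$, where $T_h$ is computed with the rounding $h\lfloor\cdot/h\rfloor$ applied consistently.

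Because $AL_{1,h}\to A\sigma$ almost surely, $\limsup_h\Prob(AL_{1,h}>T)\le\Prob(A\sigma\ge T)<\delta$. Therefore
\[
\limsup_{h\downarrow0}\Prob\bigl(\mathcal M_h(AL_{1,h})/L_{1,h}>\eta\bigr)\le\delta+\limsup_{h\downarrow0}\Prob\bigl(\mathcal M_h(T)/L_{1,h}>\eta\bigr)=\delta,
\]
where the last limit is zero by the fixed-horizon case. Letting $\delta\downarrow0$ finishes the argument.

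The restriction $A>1$ only ensures that the horizon lies beyond the first recording, so the statement is not vacuous; the argument does not use it. The one subtle point is that the infinite-path construction of the filter makes $\mathcal M_h$ at a random horizon a well-defined pathwise functional. Monotonicity in the horizon then applies pathwise, and no stopping-time property of $AL_{1,h}$ is needed.
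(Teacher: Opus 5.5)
Your proposal is correct and follows the paper's proof. The paper's bound $\Prob(\mathcal M_h(T)>ac)+\Prob(L_{1,h}<c)$ is your $o_P(1)\cdot O_P(1)$ product argument, using $L_{1,h}=b_{1,h}\to\sigma>0$ from Theorem~\ref{thm:fixed} together with Theorem~\ref{thm:max}. For the random horizon, the paper likewise localizes the tight family $AL_{1,h}$ below a deterministic $T$, which is what your monotonicity-in-the-horizon step does.
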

\begin{proof}
For $a,c>0$, the probability that the ratio exceeds $a$ is at most
$\Prob(\mathcal M_h(T)>ac)+\Prob(L_{1,h}<c)$.
Use Theorems~\ref{thm:fixed} and~\ref{thm:max}, then let $c\downarrow0$.
The random-horizon assertion follows by first localizing the tight family
$A L_{1,h}$ below a deterministic $T$.
\end{proof}

The fixed-number limit alone cannot control a segment whose index grows
with the grid. The proof instead works at the actual rejection stopping
times. It compares the probability of a macroscopic surviving segment
with a logarithmically larger truncated mean duration, then sums using
the total observation-time budget. This controls the entire post-first
horizon without treating the restart intervals as independent.
Appendix~\ref{app:longest} gives the conditional estimates and the proof.

\begin{corollary}\label{cor:timeweighted}
For every $a>0$, the total time covered by post-first clipped intervals
of duration at least $a$ tends to zero in probability and in $L^1$.
\end{corollary}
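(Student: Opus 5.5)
The plan is to deduce Corollary~\ref{cor:timeweighted} directly from Theorem~\ref{thm:max}. Convergence in probability comes from a simple inclusion of events. The upgrade to $L^1$ then follows from a deterministic bound on the covered time.

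Let $\mathcal T_h(a)$ denote the total time covered by post-first clipped intervals of duration at least $a$. That is,
\[
 \mathcal T_h(a)=\sum_{k\ge1}\ell_{k,h}\,\ind\{\ell_{k,h}\ge a\},\qquad
 \ell_{k,h}=\bigl(\min\{b_{k+1,h},T_h\}-b_{k,h}\bigr)_+ .
\]
Any interval counted in $\mathcal T_h(a)$ has length at most $\mathcal M_h(T)$. So if one is counted, then $\mathcal M_h(T)\ge a$. This gives the inclusion of events
\[
 \{\mathcal T_h(a)>0\}\subseteq\{\mathcal M_h(T)\ge a\}.
\]
Theorem~\ref{thm:max} applies to every causal feasible selector, so it applies here, and it shows that the right-hand probability tends to zero. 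Hence $\Prob(\mathcal T_h(a)>0)\to0$. This is stronger than convergence in probability, because $\mathcal T_h(a)$ is exactly zero with probability tending to one.

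For the $L^1$ statement, note that the clipped post-first intervals $[b_{k,h},\min\{b_{k+1,h},T_h\}]$ have disjoint interiors. They are also all contained in $[0,T_h]\subseteq[0,T]$. It follows that
\[
 0\le\mathcal T_h(a)\le T
\]
deterministically. Then
\[
 \E\,\mathcal T_h(a)\le T\,\Prob(\mathcal T_h(a)>0)\to0 .
\]
Equivalently, the family is bounded and converges in probability, so bounded convergence gives the same conclusion.

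I do not expect a real obstacle here. All the difficulty sits in Theorem~\ref{thm:max}, which is taken as given. The only points to check are bookkeeping ones. First, the clipping at $T_h$ must make the intervals disjoint and contained in the horizon; the terminal incomplete interval is included exactly as in~\eqref{eq:maxdefinition}. Second, the event where no rejection occurs before $T_h$ contributes zero to $\mathcal T_h(a)$, consistent with $\mathcal M_h(T)=0$ there.
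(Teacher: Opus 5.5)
Your proposal is correct and is essentially the paper's proof: the paper bounds the covered time by $T\ind_{\{\mathcal M_h(T)\ge a\}}$ and invokes Theorem~\ref{thm:max}. Your event inclusion $\{\mathcal T_h(a)>0\}\subseteq\{\mathcal M_h(T)\ge a\}$, combined with the deterministic bound $\mathcal T_h(a)\le T$ from the disjoint clipped intervals, is exactly that bound written out.
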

\begin{proof}
It is bounded by $T\ind_{\{\mathcal M_h(T)\ge a\}}$.
\end{proof}

\subsection{Comparison with continuous interpolation}\label{sec:comparison}

Let $M_h^*(T)$ be the minimum number of segments of a continuous PLA
fitting all samples up to $T_h$ within tolerance $\eps$, with knots at
sample times. Requiring the knots also to take their true sampled values
only makes the following comparison stronger.

\begin{theorem}\label{thm:ratio}
For every continuous path $f$ on $[0,T]$, $M_h^*(T)$ is uniformly
bounded as $h\downarrow0$. For Brownian input on $\{\sigma<T\}$,
\[
 \frac{M_h^{\rm Swing}(T)}{M_h^*(T)}\longrightarrow\infty
 \quad\hbox{almost surely},
\]
where the Swing count includes its terminal interval.
\end{theorem}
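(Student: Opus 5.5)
The plan has two independent parts: a deterministic bound on $M_h^*(T)$ that uses only uniform continuity of $f$, and the divergence of the Swing count, which follows from Theorem~\ref{thm:fixed}.

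\emph{Uniform bound on $M_h^*(T)$.} Since $f$ is continuous on the compact interval $[0,T]$, it is uniformly continuous. Fix $\delta>0$ such that $|f(s)-f(t)|<\eps/2$ whenever $|s-t|\le 2\delta$. Put $N=\lceil T/\delta\rceil$ and take grid knots $\tau_j=h\lfloor j\delta/h\rfloor$ for $j=0,\dots,N$, truncated at $T_h$. Between consecutive knots, use the chord through the true sampled values $(\tau_j,f(\tau_j))$ and $(\tau_{j+1},f(\tau_{j+1}))$. For $h<\delta$, consecutive knots are at most $2\delta$ apart. The chord then stays within $\eps/2$ of $f(\tau_j)$ on $[\tau_j,\tau_{j+1}]$, and every sample in that range is within $\eps/2$ of $f(\tau_j)$. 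Hence every sample is within $\eps$ of the chord. The interpolant is continuous, its knots carry true sampled values, and it has at most $N+1$ segments. This gives $M_h^*(T)\le N+1$ for all small $h$, a bound independent of $h$. It also proves the stronger statement mentioned just before the theorem.

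\emph{Divergence of the Swing count.} By definition, $M_h^{\rm Swing}(T)\ge R_h(T)$: each rejection with $J_{k,h}\le T_h$ closes a segment, and the count also includes the first and terminal intervals. By Theorem~\ref{thm:fixed}, on $\{\sigma<T\}$ we have $R_h(T)\to\infty$ almost surely. For Brownian input, the sample paths are almost surely continuous, so the first part applies pathwise: on that full-measure set, $M_h^*(T)\le C(\omega)<\infty$ for all small $h$. Dividing the two gives ratio divergence almost surely on $\{\sigma<T\}$.

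\emph{Remaining checks.} Two small points need care. First, the grid knots may coincide for small $j$ when $h$ is not much smaller than $\delta$; this only reduces the segment count. Second, the last knot must equal $T_h$ so that every sample up to $T_h$ is covered. I expect no real obstacle here, since the substantive input, $R_h(T)\to\infty$, is already contained in Theorem~\ref{thm:fixed}. The main thing to verify is that the definition of $M_h^{\rm Swing}(T)$, which includes the terminal interval, dominates $R_h(T)$ rather than being defined through a different stopping convention.
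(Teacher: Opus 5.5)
Your proof is correct and essentially the same as the paper's: both interpolate the true sampled values at grid knots spaced roughly $\delta$ apart, with $\delta$ taken from the modulus of continuity so each chord errs by less than $\eps$, which bounds $M_h^*(T)$ by a multiple of $T/\delta$ independent of $h$; both then use the divergence of $R_h(T)$ from Theorem~\ref{thm:fixed}. One small correction: the first completed segment is already one of the $R_h(T)$ rejections, so $M_h^{\rm Swing}(T)=R_h(T)+1$ exactly, but your lower bound $M_h^{\rm Swing}(T)\ge R_h(T)$ is all the argument needs.
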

\begin{proof}
Choose $\delta>0$ with $2\omega_f(\delta)<\eps$. Partition the grid
into blocks of $\lfloor\delta/h\rfloor$ intervals and one possible
remainder. Interpolate the true values at block endpoints. Each chord
has error at most $2\omega_f(\delta)$, and each full block has duration
at least $\delta/2$ for small $h$. Thus $M_h^*(T)\le1+2T/\delta$.
The numerator differs from $R_h(T)$ by at most one and diverges by
Theorem~\ref{thm:fixed}.
\end{proof}

The comparison can also use an explicit online rule. Test every incoming
prefix chord through its first and last true sample. On failure, output
the preceding true sample as a shared knot and carry the current sample.
Every prefix of duration at most $\delta$ is feasible by the same modulus
bound. Each completed interval has duration at least $\delta-h$, so its
segment count is uniformly bounded. Both outputs are continuous and
use the same tolerance and data, without appealing to an index layout
or to disconnected-segment representation costs.

\section{How restart states determine the mean}\label{sec:means}

The pathwise degeneration does not by itself determine a mean duration. We first fix the restart state and identify the tail
that creates a logarithm. We then return to the state selected by the
actual first rejection, before considering stationarity in the next section.

\subsection{A frozen carry state}

After reflection, a fixed effective carry state $w\in\R$ imposes the
slope constraints
\[
 \left[\frac{w+W_{j-1}-2E}{j},\frac{w+W_{j-1}}{j}\right],\qquad j\ge1,
\]
where $W$ is the fresh Gaussian future after the rejected observation;
$j=1$ is the already read carry. Denote the maximal feasible prefix by
$L_E(w)$. Use the same future for all $E$ and define
\[
 A_w=\sup_{j\ge1}\frac{-(w+W_{j-1})}j.
\]
The strong law and downward oscillation give $0<A_w<\infty$ almost surely.

\begin{theorem}\label{thm:frozen}
The maximal feasible prefix length from this frozen state satisfies
\[
 \frac{L_E(w)}E\longrightarrow\frac2{A_w}
 \quad\hbox{almost surely}.
\]
If $w=-c<0$, then also
\[
 \frac{\E L_E(w)}E\longrightarrow\E\frac2{A_w}\le\frac2c.
\]
\end{theorem}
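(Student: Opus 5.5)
The plan is to turn feasibility into a one-sided first-passage problem for an explicit random sequence. The upper constraints do not depend on $E$. The lower constraints differ from them only by the shift $-2E$.

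\textbf{Reformulation.} Write
\[
 u_n=\min_{j\le n}\frac{w+W_{j-1}}{j},\qquad
 \ell_n=\max_{j\le n}\frac{w+W_{j-1}-2E}{j}.
\]
Then $L_E(w)=\max\{n:\ell_n\le u_n\}$, since the intervals are nested and compact. By definition $-A_w=\inf_{j\ge1}(w+W_{j-1})/j$.

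The infimum is attained at some finite random index $j^*$. To see this, note that $A_w>0$, while $(w+W_{j-1})/j\to0$ by the strong law. Hence $u_n=-A_w$ for all $n\ge j^*$.

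Introduce the nonnegative sequence $g(j)=w+W_{j-1}+A_wj\ge0$. For $n\ge j^*$, feasibility of the prefix of length $n$ is equivalent to
\[
 \max_{j\le n}g(j)\le 2E,
\]
because $\ell_n\le-A_w$ reads $w+W_{j-1}-2E\le -A_wj$ for every $j\le n$. For $E$ large (random threshold), the prefixes of length $n<j^*$ are feasible automatically. Indeed, their lower bounds are below $-2E/j^*+O(1)$, which lies below $u_{j^*-1}$. So for large $E$ we get exactly
\[
 L_E(w)=\max\Bigl\{n:\max_{j\le n}g(j)\le 2E\Bigr\}.
\]

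\textbf{Almost sure limit.} Since $W_j/j\to0$, we have $g(j)/j\to A_w$, and therefore $\max_{j\le n}g(j)/n\to A_w$. Inverting this running maximum, which is nondecreasing and tends to infinity, gives $L_E(w)/(2E)\to1/A_w$, that is, $L_E(w)/E\to2/A_w$ almost surely.

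\textbf{Convergence of means for $w=-c<0$.} The $j=1$ term gives $A_w\ge -(w+W_0)=c$, so $2/A_w\le2/c$ is immediate. For the means, I will dominate $L_E(w)$ using only the first upper constraint $s\le w=-c$. Any feasible prefix of length $n$ must satisfy
\[
 (w+W_{n-1}-2E)/n\le -c,\qquad\text{i.e.}\qquad W_{n-1}+c(n-1)\le 2E.
\]
Hence $L_E(w)\le \tau_E$, where $\tau_E$ is the first passage time of the Gaussian random walk with drift $c>0$ above level $2E$, plus one.

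Standard renewal theory handles this passage time. Wald's identity together with Lorden's overshoot bound, or simply second-moment bounds on $\tau_x$ for drifted walks, gives $\sup_{E\ge1}\E(\tau_E/E)^2<\infty$. Thus $\{L_E(w)/E\}$ is uniformly integrable, and the almost sure limit upgrades to $\E L_E(w)/E\to\E(2/A_w)$. Note that $2/A_w\le 2/c$ is bounded, so the limit is finite.

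\textbf{Main obstacle.} The delicate point is the reduction at small indices. I must justify that for large $E$ the lower constraints with $j<j^*$ never bind. I must also confirm that the maximal prefix is at least $j^*$, so that the identity $u_n=-A_w$ applies. This requires handling the random threshold carefully, but it is harmless for an almost sure statement. For the means, the exact reduction is unnecessary, because the domination by $\tau_E$ uses only the $j=1$ upper constraint.
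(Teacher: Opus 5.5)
Your proposal is correct and follows essentially the paper's route. For the almost-sure limit, the upper envelope settles at $-A_w$ while the lower envelope behaves like $-2E/n$. For $w=-c$, the carried point forces slope at most $-c$, so survival to $n$ requires $W_{n-1}+c(n-1)\le 2E$, which yields uniform integrability. The differences are only technical: your exact reduction to the running maximum of $g(j)=w+W_{j-1}+A_wj$ replaces the paper's sandwich with $V_E=o(E)$ at $n=\lfloor Ex\rfloor$ and rational $x$. Also, the paper obtains the integrable envelope directly from the marginal bound $\Prob(L_E(w)\ge n)\le\Phi\bigl((2E-c(n-1))/\sqrt{n-1}\bigr)\le e^{-c^2(n-1)/8}$ for $n-1\ge4E/c$, rather than citing renewal second-moment bounds for the passage time.
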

\begin{proof}
For fixed $x>0$ put $n=\lfloor Ex\rfloor$ and
$V_E=\max_{j\le n}|w+W_{j-1}|=o(E)$. Eventually the numerators below
are negative, and
\[
 \frac{-2E-V_E}n\le
 \max_{j\le n}\frac{w+W_{j-1}-2E}j
 \le\frac{-2E+V_E}n\longrightarrow-\frac2x.
\]
The running upper bound decreases to $-A_w$. Rational values of $x$
on either side of $2/A_w$, followed by monotonicity of prefix feasibility,
prove the pathwise limit.

If $w=-c<0$, the carried point forces $m\le-c$, and $A_w\ge c$.
Feasibility at index $n\ge2$ then implies
$W_{n-1}\le2E-c(n-1)$. Consequently
\[
 \Prob(L_E(w)\ge n)\le
 \Phi\left(\frac{2E-c(n-1)}{\sqrt{n-1}}\right)
 \le e^{-c^2(n-1)/8}\quad(n-1\ge4E/c).
\]
For $E\ge1$ this supplies an integrable exponential envelope for the
tails of $L_E(w)/E$ beyond a fixed level. Uniform integrability proves
the expectation assertion.
\end{proof}

This theorem explains the linear scale: an early upper slope approaches
$-A_w$, while the terminal lower slope is approximately $-2E/L$.
Their collision gives $L\approx2E/A_w$. It does not establish convergence
of the actual states $w_E$ generated by a stream, or justify averaging
the conditional result over those states.

\subsection{The logarithmic conditional mean}

For the effective state in Theorem~\ref{thm:frozen}, let
$g_E(w)=\E L_E(w)$. For $w\ge0$, define the first descending overshoot by
\[
 \tau_w=\inf\{n\ge1:w+W_n<0\},\qquad
 O_w=-(w+W_{\tau_w}),\qquad V(w)=w+\E O_w.
\]
The centered Gaussian walk crosses below zero almost surely.

\begin{theorem}\label{thm:conditionalmean}
For every fixed $w\ge0$,
\[
 \frac{g_E(w)}{E\log E}\longrightarrow4V(w).
\]
Convergence is uniform on every compact subset of $[0,\infty)$.
In particular $V(0)=1/\sqrt2$, so $g_E(0)\sim2\sqrt2 E\log E$.
There is an absolute constant $C$ such that, for $E\ge8$ and all real $w$,
\begin{equation}\label{eq:meanuniform}
 g_E(w)\le CE(1+w_+)\log(eE).
\end{equation}
\end{theorem}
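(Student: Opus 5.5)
Throughout, the starting point is the pathwise description behind Theorem~\ref{thm:frozen}. For $w\ge0$ the frozen state gives $L_E(w)/E\to2/A_w$ almost surely, where $A_w=\sup_j(-(w+W_{j-1}))/j$. The logarithm in $g_E(w)$ cannot come from the typical value of $A_w$. It must come from the lower tail of $A_w$, that is, from paths on which the walk stays above $-w$ for a long time. The plan is to show that $\E\min\{2/A_w,\,\text{cap}\}$ grows like $4V(w)\log E$ when the natural cap is of order $E$. The cap arises because the segment cannot survive beyond the first-segment scale $E^2$. So I would study the tail $\Prob(1/A_w>x)$ and show that $x\,\Prob(1/A_w>x)\to 2V(w)$, up to the correct constant, as $x\to\infty$. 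Integrating this tail from $1$ to order $E$ then gives $E\cdot 2\cdot 2V(w)\log E$ after accounting for the factor $2$ in $L\approx2E/A_w$.

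\textbf{Step 1: the tail of $A_w$.} The event $\{A_w<a\}$ means $w+W_{j-1}>-aj$ for all $j$, i.e.\ the walk with drift $+a$ started at $w$ never drops below $0$ (the $j=1$ term gives $w>-a$). For a walk with small positive drift $a$, the standard ladder/ruin estimate gives
\[
\Prob(w+W_n+an>0\ \forall n)\sim 2aV(w)\qquad(a\downarrow0),
\]
with $V(w)=w+\E O_w$ the renewal function of descending ladder heights. This is the classical Spitzer--Baxter result; one can obtain it via the optional stopping identity for the martingale $w+W_n$ at the ruin time, with overshoot correction. Consequently $\Prob(2/A_w>y)\sim 4V(w)/y$. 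For $w=0$ the expected overshoot is $\E O_0=1/\sqrt2$ for Gaussian increments, by Spitzer's formula. I would cite this rather than reprove it.

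\textbf{Step 2: truncation at scale $E$.} Theorem~\ref{thm:frozen} alone only gives convergence for fixed $A_w$; here the relevant range is $A_w\approx1/E$. I would show directly that $\Prob(L_E(w)\ge nE)\approx\Prob(A_w<2/n)$ uniformly for $1\le n\le \delta E$, with relative error tending to $0$. The upper-slope constraint up to time $nE$ still forces the survival event of Step~1 over a horizon long compared with $1/a^2$ only when $n\ll E$, so I would split the range into three parts:
\begin{enumerate}
\item For $n\le E^{1-\delta'}$, apply the drift survival estimate on a finite horizon of length $nE\gg n^2$.
\item For $E^{1-\delta'}\le n\le KE$, use only the crude bound $\Prob\lesssim V(w)/\sqrt{nE}$ from the driftless survival probability. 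Its contribution is $O(E\log K)$, negligible against $E\log E$.
\item For the tail beyond $KE^2$, use the Gaussian bound from the lower constraint, $W_{j-1}\le 2E$ with the upper constraint, exactly as in the proof of Theorem~\ref{thm:frozen}, giving an exponential envelope at scale $E^2$.
\end{enumerate}
Summing $E\sum_{n\le E^{1-\delta'}}4V(w)/n$ gives $(1-\delta')4V(w)E\log E$. Letting $\delta'\to0$ then yields the limit. Uniformity on compacts follows because $V$ is continuous and monotone and all the estimates are monotone in $w$.

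\textbf{Step 3: the uniform bound~\eqref{eq:meanuniform}.} For $w<0$, monotonicity of feasibility in $w$ (a smaller carry state tightens the constraints) reduces to $w=0$. For $w\ge0$, the crude survival bound $\Prob(\text{stay above }-w\text{ up to }m)\le C(1+w)/\sqrt m$ together with the $E^2$-scale envelope gives $\sum_n\min\{1,C(1+w)E/(nE)\}$ up to $n\sim E$, which is $O(E(1+w)\log E)$. The main obstacle is Step~2: making the comparison between the finite-$E$ survival and the infinite-horizon drift survival sharp enough to capture the exact constant $4V(w)$, especially near the crossover $n\sim E$ where both constraints bind.
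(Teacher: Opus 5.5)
Your main-term mechanism matches the paper's. For $E^{1+\delta}\le n\le E^{2-\delta}$, the event $\{L_E(w)\ge n\}$ is sandwiched between survival events of $w+W_j$ above lines of slope $-a_\pm$, where $a_\pm=(2\pm r_0)E/n$. This holds up to the negligible event $\max_{j<n}|w+W_j|>r_0E$. The asymptotic $Q^a_\infty(w+a)\sim2aV(w)$ then gives $\frac nE\Prob(L_E(w)\ge n)\to4V(w)$. The gaps are all in the error control outside that window.

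\textbf{Range 2 is false as stated.} Feasibility through $\ell$ does not keep the walk above $-w$, because the feasible slope is about $-2E/\ell$. The actual event is survival above a line of slope roughly $-(2E-w-W_{\ell-1})/\ell$, a random drift of order $E/\ell$. For $\ell\le E^2$ this drift dominates $\ell^{-1/2}$, so the true tail is of order $E/\ell$; it is asymptotically $4V(w)E/\ell$ up to $E^{2-\delta}$. At $\ell=E^{2-\delta'}$ this exceeds $C/\sqrt\ell$ by a factor $E^{\delta'/2}$. Hence $[E^{2-\delta'},E^2]$ contributes of order $\delta'E\log E$, not $O(E)$. Even your own $\ell^{-1/2}$ bound would sum to $O(\sqrt K E)$, not $O(E\log K)$.

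The limit can still close after $\delta'\downarrow0$, but only with a genuine bound $\Prob(L_E(w)\ge n)\le C(1+w_+)E/n$ for $4E+2\le n\le E^2$. That needs a ballot estimate with drift, $Q^a_n(x)\le C(x+1)(a+n^{-1/2})$, applied after the random terminal slope is made deterministic. The paper does this as follows:
\begin{itemize}
\item it conditions on $W_{n-1}=y$;
\item it writes the remaining constraints as a Gaussian bridge staying above a line with drift at most $2E/(n-1)\le1/2$;
\item it compares the bridge's first half with a free walk (density ratio at most $\sqrt2$);
\item it averages over $y$ using $x_+\le w_++2E/n+|y|/n$.
\end{itemize}
This same bound is what \eqref{eq:meanuniform} requires. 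Your Step~3 writes the right form, but derives it from the invalid driftless estimate.

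\textbf{Range 3 fails for $w\ge0$.} The envelope in Theorem~\ref{thm:frozen} rests on the deterministic constraint $m\le-c$ coming from a negative carry. For $w\ge0$ there is no such bound. Combining the upper constraints (equivalently $A_w$) with the lower band only at the final index gives at best a tail of order $\ell^{-1/2}$ beyond $E^2$, which is not summable. What is missing is two-sided confinement in the $2E$ band over independent blocks of length of order $E^2$. One way is the second-difference estimate of Lemma~\ref{lem:threepointblock}, started at $n_0=\lceil E^2\rceil$, where the tail is already $O((1+w_+)/E)$. This gives an $O(E(1+w_+))$ contribution beyond $E^2$.

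\textbf{The reduction of $w<0$ to $w=0$ by monotonicity is false.} Shifting $w$ moves the $j$th slope constraint by $w/j$. For a fixed prefix, the feasible values of $w$ therefore form an interval, not a half-line. For example, take $0<c\le2E$ and let the data after the carry fall linearly with slope $s\in(2E,2E+c]$. Then slope $-s$ fits indefinitely at $w=-c$. At $w=0$, however, the $j=1$ constraint $m\ge-2E$ contradicts $m\le-s(j-1)/j$ once $j>s/(s-2E)$. Small perturbations of this path give an event of positive probability. Negative states must instead go through the same bridge--ballot bound, which depends only on $w_+$.
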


The logarithm comes from a survival tail extending from the linear to
the quadratic scale. For fixed $w\ge0$ and $0<\delta<1/2$, the proof gives
\begin{equation}\label{eq:interiortail}
 \sup_{E^{1+\delta}\le n\le E^{2-\delta}}
 \left|\frac nE\Prob(L_E(w)\ge n)-4V(w)\right|\longrightarrow0.
\end{equation}
Summing this $E/n$ tail produces the factor $\log E$, although the random
length itself has the linear scaling limit in Theorem~\ref{thm:frozen}.
The ranges outside this interior interval are controlled separately in
Appendix~\ref{app:conditional}; an almost-sure frozen-state limit alone
would not justify the mean asymptotic.

The negative-state conclusion of Theorem~\ref{thm:frozen} and this theorem
concern the actual discrete carry model. They give different conditional
mean orders on the two sides of zero. The coefficient uses a first
overshoot, rather than a stationary overshoot distribution.

\subsection{Averaging the actual restart states}

\begin{theorem}\label{thm:firstmean}
For unit Gaussian increments and a true initial anchor,
\[
 \frac{L_{1,E}}{E^2}\Rightarrow\sigma,\qquad
 \frac{\E L_{1,E}}{E^2}\longrightarrow
 \E\sigma=\frac{21\zeta(3)}{\pi^2},
\]
where $\sigma$ is the unit-tolerance anchored Brownian lifetime.
\end{theorem}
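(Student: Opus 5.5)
The plan is to prove the distributional limit by Donsker scaling and the mean limit by uniform integrability, and then to quote the value of $\E\sigma$ from \cite[Theorem 1.2]{chenBrownian}.

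\emph{Rescaling.} With unit sampling and tolerance $E$, set $h=E^{-2}$ and $B^{(E)}_t=E^{-1}S_{\lfloor tE^2\rfloor}$, where $S$ is the Gaussian partial-sum walk. After this rescaling the first segment is exactly the discrete Swing filter with tolerance $1$ and mesh $h$, run on a path that, for Gaussian increments, can be realized exactly as a sampled Brownian motion on the grid $h\mathbb Z$. Hence $L_{1,E}/E^2$ has the same law as $b_{1,h}$ for standard Brownian motion at $\eps=1$. Theorem~\ref{thm:fixed} with $K=1$ gives $b_{1,h}\to\sigma$ almost surely. This yields $L_{1,E}/E^2\Rightarrow\sigma$ along every sequence $E\to\infty$, since the coupling is available for each fixed $E$.

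\emph{Uniform integrability.} I would show that $\sup_{E\ge1}\E(L_{1,E}/E^2)^2<\infty$, or more simply that the tails are exponential: $\Prob(L_{1,E}\ge tE^2)\le Ce^{-ct}$ uniformly in $E$. Suppose the first segment survives to index $n$. Then a single line through $(0,0)$ keeps every sample $S_j$, $j\le n$, within distance $E$. Comparing the samples at $j\approx n/2$ and $j=n$ with that line, the increment $S_j-\tfrac{j}{n}S_n$ must stay in $[-2E,2E]$ for all $j\le n$. This is the event that a discrete Gaussian bridge of length $n$ remains in a strip of width $4E$. By splitting $[0,n]$ into $\lfloor n/E^2\rfloor$ blocks of length $E^2$ and using a Markov-type argument, this probability is at most $\rho^{\lfloor n/(4E^2)\rfloor}$ for some absolute $\rho<1$. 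The argument uses the fact that a Gaussian walk over $E^2$ steps leaves any window of width $8E$ around a prescribed linear drift with probability bounded below, uniformly in $E\ge1$ and in the slope. Making the slope uniform requires a small argument. The feasible slope interval is nested, so I condition on the slope interval at the start of each block: the walk must stay within $E$ of some line from that interval, and the interval is already pinned to width $O(E/\text{block start})$. Alternatively, I would take the slope from the block's own endpoints and use the bridge representation, whose strip-escape probability does not depend on the slope. The exponential tail then gives uniform integrability, and the mean convergence $\E L_{1,E}/E^2\to\E\sigma$ follows from the weak convergence.

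\emph{Identification and the main obstacle.} $\sigma$ is exactly the anchored unit-tolerance Brownian lifetime: it is the first time the two-sided slope cone through the true anchor $(0,0)$ pinches, as Lemma~\ref{lem:pinch} describes. Therefore \cite[Theorem 1.2 and Lemma 2.1]{chenBrownian} give $\E\sigma=21\zeta(3)/\pi^2$. I expect the tail bound to be the main obstacle: it must hold uniformly in $E$ and uniformly over the data-dependent slope, because the slope is never fixed in advance. My first attempt would be the bridge reduction, namely that any feasible line forces $|S_j-\tfrac jnS_n|\le2E$. This removes the slope entirely, so what remains is a standard Gaussian-bridge small-ball estimate that can be bounded block by block with Gaussian comparison (Anderson's inequality).
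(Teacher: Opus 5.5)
Your proposal is correct and follows the paper's proof. It uses the exact embedding $S_j=E\,B(j/E^2)$ together with Theorem~\ref{thm:fixed} for the weak limit, a uniform tail that is exponential in $n/E^2$ for uniform integrability, and \cite{chenBrownian} for the value of $\E\sigma$. The paper gets the tail from Lemma~\ref{lem:threepointblock}: the second differences $S_{i+2r}-2S_{i+r}+S_i$ over disjoint blocks of length $2r\approx16E^2$ do not depend on the slope and are exactly independent. This is the cleanest form of your bridge idea, and it removes the need for Anderson's inequality or for conditioning on the slope interval.
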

The first segment ends when feasibility fails, before its slope is selected.
Exact Gaussian scaling and Theorem~\ref{thm:fixed} give the
distributional limit.  A shared disjoint-block tail estimate supplies
uniform integrability; see Appendix~\ref{app:firstmeanproof} and
Lemma~\ref{lem:threepointblock}.  The repair will use the same anchored
lifetime at half the tolerance.

The first restart changes this quadratic behavior. Its input is selected
by the preceding constrained least-squares fit, and the second segment
has linear mean order.

\begin{theorem}\label{thm:secondsegment}
For the actual constrained-SSE Swing stream with a true initial anchor,
unit Gaussian increments, carried rejection and no maximum lag, there
are absolute constants $c,C,E_0>0$ such that
\[
 cE\le\E L_{2,E}\le CE\qquad(E\ge E_0).
\]
Consequently $\E L_{2,E}/(E\log E)\to0$, and hence
$\E L_{2,E}/C_E\to0$.
\end{theorem}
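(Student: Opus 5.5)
We prove the linear order of $\E L_{2,E}$ by conditioning on the restart state produced by the first rejection and applying the frozen-state results of Theorems~\ref{thm:frozen} and~\ref{thm:conditionalmean} conditionally. By Theorem~\ref{thm:firstmean} and Gaussian scaling, the first segment converges to the continuous pinch of Lemma~\ref{lem:pinch}. Write $w_E$ for the effective carry state after reflection, measured in the same units as in the frozen model. The first step identifies how $w_E$ sits relative to the boundary. At the first rejection the cone is nonempty on the prefix and empty after adding the rejected sample, and the constrained least-squares slope lies in the old cone. The recorded value $q'$ therefore lies within $E$ of $f(b_1)$, and the rejected sample lies outside the slope window through $q'$. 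With $W$ the fresh increments after the rejected sample, the reflected carry coordinate is the signed distance $w_E = \pm(f(b_1+1)-q')+E$ minus the appropriate offset. Rejection forces the already read carry constraint at $j=1$ to be violated by the old line. In the frozen parametrization this constraint is $w\in[\,\cdot\,,\,\cdot\,]$, and the effective state is determined by $\eta_{1}$ and the single increment $\xi=f(b_1+1)-f(b_1)$. Because $W$ is independent of $\F_{J_1}$ by the strong Markov property at the stopping time $J_{1}$, we get $\E L_{2,E}=\E\, g_E(w_E)$ exactly.

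\emph{Upper bound.} Split according to the sign of $w_E$. On $\{w_E\le -c\}$ with $c$ fixed, Theorem~\ref{thm:frozen} and its exponential envelope give $g_E(w_E)\le C'E/c$. On $\{w_E>-c\}$ the uniform bound~\eqref{eq:meanuniform} gives at most $CE(1+w_{E+})\log(eE)$. The crux is therefore to show
\[
\E\bigl[(1+w_{E+})\ind_{\{w_E>-c\}}\bigr]=O(1/\log E),
\]
and, for states with $w_E\in(-c,0)$, to refine the negative-side estimate. Near zero we use $g_E(w)\lesssim E/|w|$, truncated by $E\log E$, together with a density bound for $w_E$ near $0$. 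This gives
\[
\int_{-c}^{c}\min\Bigl(\frac{E}{|w|},E\log E\Bigr)p(w)\,dw=O(E)
\]
provided the density $p$ of $w_E$ vanishes linearly at $0$, or provided $\Prob(w_E>-\delta)=O(\delta)$ with sufficient decay. The heuristic is this. At the pinch the fitted endpoint sits exactly on the boundary, and the rejected increment $\xi$ is a unit Gaussian of order one, whereas the boundary offset is of order $E$ in the first-segment scale. Reflection therefore places $w_E$ on the negative side by roughly $|\xi|$ plus the least-squares slack. I expect $\Prob(w_E\ge-\delta)\lesssim\delta^2$ or similar, from the requirement that both the rejection inequality and the near-tangency occur. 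I would derive this by a local analysis at the rejection time: conditionally on $\F_{b_1}$, the rejected increment has Gaussian density, and the rejection event restricts it to a half-line whose endpoint is the distance of $f(b_1)$ to the cone boundary.

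\emph{Lower bound.} It suffices to find a set of states of probability bounded below on which $w_E\in[-2,-1]$. On that set, Theorem~\ref{thm:frozen} gives $g_E(w)\ge c''E$ uniformly, since $L_E(w)/E\to 2/A_w$ and Fatou's lemma applies, with $A_w$ almost surely finite and monotone in $w$. A positive-probability event for the increment $\xi$ achieves this, by the conditional Gaussian density argument above.

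\emph{Consequence.} The final claims follow from Theorem~\ref{thm:transientconvergence}-type order $C_E\asymp E\log E$ stated in~\eqref{eq:scalesummary}. The main obstacle is the small-ball estimate for $w_E$ near and above zero. This step is the one that requires the constrained least-squares rule, since the least-squares slope must be shown not to place the recording deep inside the band with non-negligible probability. I would first try to express the terminal slack through the last increment and the least-squares residual, using the concentration of the first segment at the pinch (Theorem~\ref{thm:fixed}) in the scaled picture.
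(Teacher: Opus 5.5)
Your upper bound does not close near $w_E=0$, and the small-ball estimate you hope will rescue it is false. On the negative side you use only the frozen-state envelope $g_E(w)\lesssim E/|w|$, so you need $\Prob(-\delta<w_E<0)\lesssim\delta^2$. Consider instead the branch where constrained least squares clips to the cone endpoint on the rejection side, say $m=a_N$ with a lower rejection. This branch has probability bounded below in the first segment, because in the pinch limit the unconstrained least-squares slope falls on either side of $m_*$ with positive probability. Write the last accepted residual as $-E+g$ with $g\ge0$. Rejection is then exactly $Z_J<m-g$, and $w_E=R_1=g+Z_J$. Conditionally on the accepted prefix and on rejection at the next index, $R_1$ has density $\varphi(r-g)/\Phi(m-g)$ on $r<m$. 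This density is bounded below near $r=0$ for bounded $g$, while $m\to0$. Hence $\Prob(-\delta<R_1<0)\ge c\delta$ for $\delta$ down to about $1/\log E$. Even your own heuristic $w_E\approx-|\xi|-\text{slack}$ has positive density at $0$. Consequently the best your route gives is $\E[\min(E/|R_1|,E\log E);\,-c<R_1<0]\asymp E\log\log E$. That is enough for $\E L_{2,E}/(E\log E)\to0$, but not for $\E L_{2,E}\le CE$. For the same reason, your displayed crux $\E[(1+w_{E+})\ind_{\{w_E>-c\}}]=O(1/\log E)$ fails for fixed $c$.

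The missing ingredient is a conditional bound with only a logarithmic singularity, $g_E(-c)\le CE[1+\log_+(1/c)]$. With it, the linear anticoncentration $\Prob(|R_1|\le\delta)\le C\delta$ already makes $\E[\log_+(1/|R_1|);R_1<0]$ bounded. The bound comes from keeping the endpoint event in the Gaussian-bridge estimate behind \eqref{eq:durationtail}. Given $W_{n-1}=y$, the starting allowance is $x=(2E-c(n-1)-y)/n$, and $\E[(1+x)\ind_{\{x\ge0\}}]\le Ce^{-c^2(n-1)/8}$ once $n-1\ge4E/c$. So the $E/n$ tail is summed only up to $n\approx4E/c$, which produces $E\log(1/c)$ rather than $E/c$.

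The anticoncentration bound must be proved at the actual rejection index. The time $b_1$ is not a stopping time, and $Z_J$ is not conditionally Gaussian given $\F_{b_1}$. Summing per-index Gaussian density bounds would cost a factor $\E N\asymp E^2$. The paper avoids this by comparing each one-step rejection with a two-step rejection: an increment in $[r-1,r-1/2]$ is accepted and leaves a threshold at most $B+1$. Both families of events are disjoint in the rejection index, so each sums to at most one.

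On the positive side, the mechanism is not least-squares slack. It is the pathwise inequality $(R_1)_+\le|m|$, valid for every feasible slope, combined with $\E|m|=O(E^{-1/2})$; the latter holds because the first segment outlives $E^{3/2}$ except with probability $O(e^{-c\sqrt E})$. Together with anticoncentration and Markov's inequality, this gives $\Prob(R_1>0)=O(E^{-1/4})$ and a positive-side contribution $O(E^{3/4}\log E)$.

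For the lower bound you do not need a window such as $[-2,-1]$. Tightness of $R_1$ plus the pathwise bound $L_{2,E}\ge\lfloor E/((R_1)_-+2G)\rfloor$ suffices. Finally, $\E L_{2,E}/C_E\to0$ rests on $\gamma_{\rm Swing}>0$ from Theorem~\ref{thm:leadingcoefficient}, not on Theorem~\ref{thm:transientconvergence}.
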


To see why the logarithmic stationary contribution disappears, let $m_1$
be the first selected slope and let $R_1=E-|Y_{J_1}-q_1|$ be the margin
of its carried rejection. Every old feasible line rejects that sample,
which gives the pathwise inequality $(R_1)_+\le|m_1|$.
The first slope tends to zero, so this entrance has a vanishing positive
part. A uniform concentration estimate at the selected rejection controls
the layer near zero, while a negative carry cuts off the long survival
tail. The resulting conditional mean is at most
$CE[1+\log_+(1/|R_1|)]$ on the negative side, and this logarithmic
singularity is integrable under the entrance law. Appendix~\ref{app:firstrestart}
proves uniform moments for that law; Appendix~\ref{app:secondsegment}
uses them and the concentration estimate to prove the matching linear bounds.

Subsequent restart states are still selected and depend on $E$.
The next section first fixes $E$ and follows their distributions to
stationarity. It then shows how a finite number of actual restarts
produces positive margins and gives a geometric approach to the
stationary coefficient in the large-tolerance limit.

\section{From repeated restarts to the stationary coefficient}\label{sec:stationary}

We use iid $N(0,1)$ increments at unit sampling intervals and the
original constrained least-squares selector without maximum lag.
We first fix $E>0$ and take the segment-index limit of the restart chain.
We then increase the tolerance to identify its stationary coefficient
and relate it to the initial restart sequence.

At a rejection index $j$, let $q$ be the new recording at $j-1$ and put
\[
 x=y_j-q=z-\eta.
\]
This is the vertical difference between the already observed sample
and the new fitted anchor. It avoids any reflection convention when
the anchor error is zero. Let $W_0=0$ and let $W_n$ be the fresh Gaussian
partial sums after $j$. The sample values relative to the fitted anchor
are $D_i=x+W_{i-1}$ for $i\ge1$. Write
\[
 I_n=[a_n,b_n]=\bigcap_{i=1}^n
 \left[\frac{D_i-E}{i},\frac{D_i+E}{i}\right],\qquad
 N=\min\{n\ge1:I_{n+1}=\varnothing\}.
\]
The selected slope and the next state are exactly
\begin{equation}\label{eq:stateupdate}
 m=\max\left\{a_N,\min\left\{
 \frac{\sum_{i=1}^N iD_i}{\sum_{i=1}^N i^2},b_N\right\}\right\},
 \qquad x'=x+W_N-mN.
\end{equation}
Indeed, the next fitted anchor is the present anchor plus $mN$,
whereas the next rejected sample is the present anchor plus $x+W_N$.
The loss excludes the rejected sample. Both $N$ and $x'$ depend only
on $x$ and the fresh increments. The uniform tail bound below ensures
$N<\infty$ almost surely, so these measurable rules define a probability
kernel $P_E(x,A)=\Prob_x(x'\in A)$.

Put $h_E(x)=\E_xN$. Reflection symmetry gives
$h_E(x)=h_E(-x)=g_E(E+x)=g_E(E-x)$.
For the actual stream, read the first increment and set
$X_0=Z_1$ with law $\mu_0=N(0,1)$. Let $X_k$ be the state after
the $k$th completed segment. The initial observation time is 1;
subsequent observation times are rejection stopping times.
Their future increments are independent. Therefore
\begin{equation}\label{eq:meanrecursion}
 \mathcal L(X_k)=\mu_0P_E^k,\qquad
 \E L_{k,E}=\mu_0P_E^{k-1}h_E,\qquad k\ge1.
\end{equation}
This is a recursion of state distributions. It need not close as a
recursion of scalar segment means.

\begin{theorem}\label{thm:stationary}
For every fixed $E>0$, the kernel $P_E$ has a unique invariant
probability measure $\pi_E$. There are constants $0<\alpha_E<1$ and
$H_E<\infty$ such that, for every initial law $\mu$ and integer $j\ge0$,
\[
 \|\mu P_E^j-\pi_E\|_{\mathrm{TV}}
 \le(1-\alpha_E)^{\lfloor j/2\rfloor},\qquad
 1\le h_E(x)\le H_E\quad\hbox{for all }x\in\R.
\]
Here total variation means $\sup_A|\mu(A)-\pi_E(A)|$.
Consequently, the actual stream satisfies
\[
 \lim_{k\to\infty}\E L_{k,E}
 =C_E:=\int h_E\,d\pi_E\in[1,H_E],
\]
with error at most $H_E(1-\alpha_E)^{\lfloor(k-1)/2\rfloor}$.
\end{theorem}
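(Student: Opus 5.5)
The plan is a Doeblin minorization for the two-step kernel $P_E^2$, together with a uniform bound on $h_E$. The bound $(1-\alpha_E)^{\lfloor j/2\rfloor}$ suggests a two-step minorization: there is a nonzero measure $\nu_E$ of mass $\alpha_E$ with $P_E^2(x,\cdot)\ge\nu_E(\cdot)$ for every $x\in\R$. From this, standard coupling gives existence and uniqueness of $\pi_E$ and the total-variation contraction $\|\mu P_E^2-\mu' P_E^2\|_{\rm TV}\le(1-\alpha_E)\|\mu-\mu'\|_{\rm TV}$. Iterating, and using that one extra step does not increase total variation, yields the stated bound. The mean statement then follows from \eqref{eq:meanrecursion}: $|\E L_{k,E}-C_E|=|(\mu_0P_E^{k-1}-\pi_E)h_E|\le H_E\|\mu_0P_E^{k-1}-\pi_E\|_{\rm TV}$, since $0\le h_E\le H_E$ and total variation here is the sup over sets. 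One may need to subtract a constant, which is handled by $1\le h_E\le H_E$.

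The first ingredient is the uniform bound on $h_E$. The bound $h_E\ge1$ is trivial because $N\ge1$. For the upper bound I would not rely on \eqref{eq:meanuniform}, which grows in $w$. Instead I would use a uniform geometric tail. For $n\ge2$, feasibility of $I_n$ requires the interval of index $1$ and some index $i\in[n/2,n]$ to be compatible. Concretely, any feasible slope $m$ satisfies $|D_1-m|\le E$ and $|D_i-mi|\le E$. Comparing indices $i$ and $2i$ (or $i$ and $i+1$) eliminates $x$: the constraints at $i$ and $2i$ give $|D_{2i}-2D_i+x|\le3E$, that is, $|W_{2i-1}-2W_{i-1}-x|\le 3E$, still containing $x$. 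It is better to use three indices $i<j<k$ in arithmetic progression. The second difference $D_i-2D_j+D_k=W_{i-1}-2W_{j-1}+W_{k-1}$ is independent of $x$ and must be bounded by $4E$. Partition $[1,n]$ into disjoint blocks of three equally spaced indices, each with spacing $r$ chosen so that the second difference is $N(0,2r)$. Independence across blocks gives $\Prob_x(N\ge n)\le\rho_E^{\lfloor n/c\rfloor}$ uniformly in $x$, with $\rho_E=\Prob(|N(0,2)|\le 4E)<1$. This is presumably the paper's Lemma~\ref{lem:threepointblock}. Summing the tail gives $H_E$, and it also gives $N<\infty$ almost surely, so the kernel is well defined.

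The second ingredient, and the main obstacle, is the minorization. A one-step minorization seems hard because $x'=x+W_N-mN$ depends on $x$ through the selected slope in a way that can push $x'$ far out. However, the rejection forces $|x'|\le$ something. Since the rejected sample's interval misses $I_N$ while the selected line is feasible at index $N$, we get $E<|x'|\le 2E+|\text{increment}|$ roughly; more precisely $x'=D_{N+1}-m(N+1)+m$ with $|D_{N+1}-m(N+1)|>E$. I would therefore aim for two steps. In the first step, use the uniform tail and the boundedness of $m$ on $\{N\le n_0\}$ to show that $x'$ lands in a fixed compact set $K_E$ with probability at least $\beta_E>0$ uniformly in $x$. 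Here $|m|\le(|D_1|+E)$ is a problem for large $|x|$. If $|x|>E$, the first sample already forces $m$ of order $x$. Then $x'=x+W_N-mN$, and for $N=1$ the choice $m\approx x$ gives $x'=W_1+(x-m)$, with $|x-m|\le E$. So with $N=1$ forced by a large $W_1$ of the appropriate sign, $x'$ lies in a bounded set for every $x$. I would make this quantitative by conditioning on $W_1$ in a window that guarantees rejection at index $2$ with $N=1$, where $m=\mathrm{clip}(x,x-E,x+E)=x$ and hence $x'=W_1$. This exhibits an explicit density lower bound for $W_1$ on an interval, uniformly in $x$. If this one-step argument works, the minorization holds already at one step. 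The two-step version would then be a safe margin, for example when the sign constraints for rejection at index $2$ need an intermediate state. The difficulty lies in checking that, for every $x$, there is a nonempty set of $W_1$ values forcing $N=1$ with density bounded below. I would try first to verify this directly from \eqref{eq:stateupdate}, with $N=1$ and $I_2=\varnothing$ iff $|W_1-x\cdot0\ldots|$ exceeds the explicit threshold $2E$ relative to $I_1$.
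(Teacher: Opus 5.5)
Your framework is the paper's. It has four parts: the uniform bound $1\le h_E\le H_E$ from disjoint three-point second differences, a two-step Doeblin minorization, total-variation contraction of $\mu\mapsto\mu P_E^2$, and the estimate $|(\mu_0P_E^{k-1}-\pi_E)h_E|\le H_E\|\mu_0P_E^{k-1}-\pi_E\|_{\mathrm{TV}}$. The first part is Lemma~\ref{lem:threepointblock}; for fixed $E$ your spacing-one version with $\rho_E=\Prob(|N(0,2)|\le4E)<1$ suffices, since those differences do not involve $x$.

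The gap is the step you yourself flag as the main obstacle. You never exhibit a single measure $\nu_E$ with $P_E^2(x,\cdot)\ge\nu_E$ for all $x$, and the one-step minorization you hope for does not follow from your mechanism. Comparing $I_1=[x-E,x+E]$ with the constraint at index $2$ gives exactly $N=1\iff|W_1-x|>3E$; the threshold is $3E$, not $2E$. Least squares then returns $m=x$, so $x'=W_1$ and $P_E(x,dy)\ge\varphi(y)\ind_{\{|y-x|>3E\}}\,dy$. But the admissible window moves with $x$. The sets $\{y:|y-x|>3E\}$ have empty intersection over $x\in\R$, and a window ``of the appropriate sign'' still depends on the sign of $x$. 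Having, for each $x$, some window with density bounded below is not a common minorant, so this route gives no one-step Doeblin bound.

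The repair uses only the ingredients you already found, and it is exactly the paper's argument. In the first step take $W_1\in A_+=[4E+1,4E+2]$ when $x\le0$, and $W_1\in A_-=-A_+$ when $x>0$. Either choice forces $N=1$ and places $x'=W_1$ in the fixed bounded set $A_+\cup A_-$, with probability $p_E=\Prob(Z\in A_+)$. From any $y\in A_+\cup A_-$, a fresh $W_1\in B=[8E+4,8E+5]$ satisfies $|W_1-y|\ge4E+2>3E$, so again $N=1$ and the next state is $W_1$. Hence $P_E^2(x,\cdot)\ge p_E\Prob(Z\in B)\,\mathcal L(Z\mid Z\in B)$ for every $x$. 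This is the required minorization with $\alpha_E=p_E\Prob(Z\in B)$, and the rest of your argument then goes through unchanged. Your alternative first step, via $\{N\le n_0\}$ and bounds on the selected slope, is not needed.
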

Appendix~\ref{app:stationary} proves the result by a two-step common
component and a uniform duration bound. The common component establishes
existence and convergence at each fixed $E$; its probability is not
bounded below uniformly as $E$ grows.

The short-segment events in this proof establish a common probability
component; they impose no intervention on the input or the algorithm.
The displayed $\alpha_E$ can be very small, so the rate is not a
practical mixing-time estimate. Starting from $\pi_E$ makes every
segment mean equal to $C_E$. The actual initialization is $N(0,1)$,
and Theorem~\ref{thm:secondsegment} shows that the second mean is
asymptotically negligible relative to $C_E$.
A closed form for $C_E$ does not follow from this fixed-tolerance
theorem. The next two subsections determine its large-$E$ asymptotic without
interchanging the two limits in $k$ and $E$.

\subsection{Uniform control at large tolerance}

We now let $E\to\infty$ after taking the stationary law from
Theorem~\ref{thm:stationary}.  Write
\[
 R=E-|X|,\qquad X\sim\pi_E.
\]
The sign of $R$ indicates whether the carried observation lies inside
the tolerance band around the new fitted anchor.  Symmetry gives
$h_E(x)=g_E(E-|x|)$, and hence $C_E=\E_{\pi_E}g_E(R)$.

\begin{theorem}\label{thm:stationaryscale}
For every fixed $p>0$, the stationary boundary margins satisfy
\begin{equation}\label{eq:stationarymarginmoments}
 \sup_{E\ge E_0}\E_{\pi_E}|E-|X||^p<\infty
\end{equation}
for some $E_0>0$.  There is a constant $C>0$ such that
\begin{equation}\label{eq:stationarymeanscale}
 C_E\le CE\log E,\qquad E\ge E_0.
\end{equation}
\end{theorem}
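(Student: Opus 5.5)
Both assertions rest on the margin-moment bound \eqref{eq:stationarymarginmoments}. Once it holds, the mean bound follows directly. Since $C_E=\E_{\pi_E}g_E(R)$ with $R=E-|X|$, the uniform estimate \eqref{eq:meanuniform} gives
\[
 C_E\le CE\log(eE)\,\bigl(1+\E_{\pi_E}R_+\bigr).
\]
The case $p=1$ of \eqref{eq:stationarymarginmoments} then yields \eqref{eq:stationarymeanscale}. The substance is therefore a uniform-in-$E$ bound on moments of the margin under $\pi_E$. I would obtain it from a Lyapunov drift inequality for the one-step kernel $P_E$, with constants independent of $E$. The argument does not need $\pi_E$ explicitly.

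\emph{Step 1: a one-step moment bound for the new margin.} Let $x$ be the current state and $x'=x+W_N-mN$ the next one, and set $R'=E-|x'|$. I would split $R'$ into a positive and a negative part.

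For the positive part, I would reuse the pathwise inequality from the discussion of Theorem~\ref{thm:secondsegment}: every old feasible line rejects the carried sample, so $(R')_+\le|m|$. The selected slope satisfies $|m|\le |D_1|+E$, but that bound is too weak. Instead I would use that $m\in I_N$, so for each $i\le N$,
\[
 |m|\le\frac{|D_i|+E}{i}.
\]
Taking $i=N$ together with the rejection at $N+1$ gives $|m|\lesssim (E+\max_{i\le N+1}|D_i-D_1|)/N$, up to the term in $|x|$. The aim is $\E_x|m|^p\le C_p$ uniformly in $x$ and $E$. The case of small $N$ needs separate treatment: there $m$ is close to $D_1$ or to the Gaussian increment, and Gaussian tails give the bound.

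For the negative part, $(R')_-=(|x'|-E)_+$ measures how far the rejected sample lies outside the band. Since the sample $D_{N+1}$ has its interval disjoint from $I_N$, one gets
\[
 |x'|-E\le |Z_{N+1}|+|m-\text{(nearest admissible slope)}|\,(N+1),
\]
and the second term is controlled by the width collapse of the cone. I expect a bound of the form
\[
 \E_x(R')_-^p\le C_p\bigl(1+\E_x|Z_{N+1}|^p\bigr),
\]
where the last factor is bounded even though $N$ is a stopping time: handle it by the union bound $\E\max_{n\le N}$, or by the overshoot argument of Theorem~\ref{thm:conditionalmean}. Combined, Step 1 should give
\[
 \E_x|R'|^p\le K_p\quad\text{for all }x\text{ and all }E\ge E_0,
\]
which is a one-step Doeblin-type moment bound, stronger than a drift inequality.

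\emph{Step 2: passing to $\pi_E$.} If the uniform bound holds, invariance gives
\[
 \E_{\pi_E}|R|^p=\int\E_x|R'|^p\,\pi_E(dx)\le K_p
\]
at once. If only a weaker bound $\E_x|R'|^p\le K_p+\tfrac12|E-|x||^p$ can be proved, I would iterate from $\mu_0$ using the convergence in Theorem~\ref{thm:stationary}. Truncated moments pass to the limit in total variation, and letting the truncation tend to infinity gives the claim via monotone convergence.

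\emph{Main obstacle.} The hardest part is the uniform control of $|m|$, the constrained least-squares slope, at the selected rejection time, uniformly over starting states $x$ with very negative margin, where $N$ can be of order one. I would first try to bound $|m|$ by the width-collapse geometry of the cone rather than through the least-squares formula. Any feasible slope at collapse lies within $2E/N$ of both endpoints, and $(R')_+\le b_N-a_N+\text{(rejection gap)}$. Then $\E[E/N\wedge\cdot\,]$ requires a lower-tail bound on $N$. When the inherited constraint is binding, I would use the concentration estimate at the selected rejection quoted for Appendix~\ref{app:secondsegment}.
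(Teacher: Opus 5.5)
Your reduction of \eqref{eq:stationarymeanscale} to the case $p=1$, via $C_E\le CE\log(eE)(1+\E_{\pi_E}R_+)$, is correct; the paper uses the same inequality. The gap is in Step~1, which asks for something false.

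\textbf{Step 1 cannot hold uniformly in the state.} Every feasible slope lies in the carry interval $[x-E,x+E]$. Hence $|m|\ge(|x|-E)_+=R_-$, and no bound on $\E_x|m|^p$ or $\E_x|R'|^p$ can be uniform in $x$. Concretely, if $x\ge4E$, then $W_1<E$ forces $|W_1-x|>3E$. So $N=1$, $m=x$, $x'=W_1$ and $R'=E-|W_1|$, giving $\E_x|R'|^p\asymp E^p$.

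Your fallback $\E_x|R'|^p\le K_p+\tfrac12|R|^p$ also fails. Take $x=E+c$ with $c\approx\sqrt E$. Then the lower endpoint is $a_N=c$, and the unconstrained least-squares slope, about $3c/4$, is clipped to $c$. The segment ends at the first $n$ with $nc-W_n>2E$, and $R'=c-(Nc-W_N-2E)$. Choosing $c$ so that $2E/c$ has fractional part $0.9$ gives $R'\approx0.9c=0.9|R|$.

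What does hold is the one-sided bound $(R')_+\le|m|\le5R_-+10G$, with $G=\sup_{i\ge1}|W_{i-1}|/i$ independent of $x$. So the output's positive margin is controlled by the input's \emph{negative} margin. This closes only through invariance ($R\overset d=R'$ under $\pi_E$), together with an input-independent bound on $(R')_-$.

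\textbf{The missing idea is that input-independent bound.} You only need $(R')_-\le|Z|$ for the selected rejection increment $Z=D_{N+1}-D_N$, because the last accepted residual lies in $[-E,E]$; no cone-width term is needed. But a union bound over $n\le N$, or a stopped sum, pays the factor $\E N$. Under $\pi_E$ it gives $\Prob(R_->t)\le2C_Ee^{-t^2/2}$, hence only $\E_{\pi_E}R_-^p\lesssim(\log C_E)^{p/2}$, which is not bounded in $E$. Even a two-step version of your argument would need a tail bound on this selected increment that is uniform in the starting state, and neither tool you cite provides one.

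The paper gets around this with the predictable favorable branch $\mathcal G$: the completed SSE slope equals $a_N\in[0,E]$ and the rejection is from below, or the reflected case.
\begin{itemize}
\item On this branch, conditionally on each accepted history, the exterior output margin has tail ratio $\Phi(-g-t)/\Phi(-g)\le e^{-t^2/2}$. Summing over disjoint completion events gives a Gaussian tail with no $\E N$ factor.
\item Off the branch, Wald's identity and the stopped-square maximal inequality give $\Prob_{\pi_E}(\mathcal G^c)\le KC_E/E^2$.
\end{itemize}
Because this estimate contains $C_E$, the logic runs in the opposite order to yours. One first obtains $C_E\le KE\log E+K(\log E)^{3/2}C_E/E$ and absorbs the last term. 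Only then does one substitute back to get uniform moments of $R_-$, and then of $R_+$ via $(R')_+\le5R_-+10G$.
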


The proof combines the conditional mean bound with control of the
selected overshoot and the constrained least-squares slope at actual
rejection times.  An absorption argument first bounds $C_E$ and then
gives uniform moments of the microscopic margin; see
Appendix~\ref{app:margins}.  These moments supply the integrability
needed to pass to the limiting margin law and the leading coefficient.
The next theorem identifies a positive coefficient and hence also
provides the matching lower bound.

\subsection{The limiting margin kernel and the leading coefficient}
\label{sec:leadingcoefficient}

Let $\nu_E$ be the invariant law of the reflected margin $R=E-|X|$.
Throughout this subsection the increments are independent $N(0,1)$ variables,
sampling has unit spacing, and Swing uses its original constrained least-squares
rule without a maximum-lag constraint.  The order of limits is to take the
invariant law at fixed $E$ and then let $E\to\infty$.

For $r\in\R$, let $W_0=0$ and let $W$ be a standard Gaussian random walk, and set
\begin{equation}\label{eq:leadingslope}
 A_r=\sup_{i\ge1}\frac{W_{i-1}-r}{i}.
\end{equation}
Oscillation and the strong law give $0<A_r<\infty$ almost surely, and the
supremum is attained at a finite index.  When $r<0$, the distribution may have
an atom at $A_r=-r$; this branch is retained.
For each fixed $a>0$, use another independent Gaussian walk $\widetilde W$ and
write
\[
 S_n^{(a)}=an-\widetilde W_n,\qquad
 \tau_a=\inf\{n\ge1:S_n^{(a)}>0\},\qquad H_a=S_{\tau_a}^{(a)}.
\]
Let $O_a$ have the equilibrium overshoot density
\begin{equation}\label{eq:equilibriumovershoot}
 q_a(o)=\frac{\Prob(H_a>o)}{\E H_a}\ind_{\{o\ge0\}}.
\end{equation}
Thus $O_a$ is the limiting overshoot above a distant level, not the first
ladder height $H_a$ or a single truncated Gaussian increment.  Define a kernel
by first sampling $A_r$ and then, conditional on $A_r=a$, sampling $O_a$:
\begin{equation}\label{eq:limitingmarginkernel}
 K(r,dy)=\operatorname{Law}(A_r-O_{A_r})(dy).
\end{equation}
Its density is
\begin{equation}\label{eq:limitingmargindensity}
 k(r,y)=\E\left[
 \frac{\Prob(H_{A_r}>A_r-y)}{\E H_{A_r}}
 \ind_{\{y<A_r\}}\right].
\end{equation}
Here each probability and expectation involving $H_{A_r}$ is evaluated at the
realized parameter $A_r$.  The slope distribution can also be computed from
\begin{equation}\label{eq:slopedistribution}
 \Prob(A_r\le a)=\ind_{\{r+a\ge0\}}
 \Prob\left(\sup_{n\ge0}(W_n-an)\le r+a\right),\qquad a>0.
\end{equation}

\begin{theorem}\label{thm:leadingcoefficient}
The kernel $K$ has a unique invariant probability $\nu_\infty$, which is
absolutely continuous and equivalent to Lebesgue measure.  Moreover,
\begin{equation}\label{eq:marginweaklimit}
 \nu_E\Rightarrow\nu_\infty,
\end{equation}
and the stationary segment mean satisfies
\begin{equation}\label{eq:leadingcoefficient}
 \lim_{E\to\infty}\frac{C_E}{E\log E}
 =\gamma_{\rm Swing}
 :=4\int_0^\infty V(r)\,\nu_\infty(dr)\in(0,\infty),
\end{equation}
where $V$ is the Gaussian first-overshoot function in
Theorem~\ref{thm:conditionalmean}.
\end{theorem}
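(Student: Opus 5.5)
The plan has three parts: first the limiting kernel $K$ itself, then convergence of the finite-$E$ margin kernels to $K$, and finally the coefficient.

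\emph{The limiting kernel.} For existence and uniqueness of $\nu_\infty$ I will use Theorem~\ref{thm:stationaryscale} together with the contraction that the paper announces. The Wasserstein bound will come from a synchronous coupling. Drive $A_r$ and $A_{r'}$ by the same walk $W$. Since $r\mapsto A_r$ is monotone and $1$-Lipschitz, and the supremum at index $i$ carries the factor $1/i$, the displacement $|A_r-A_{r'}|$ is at most $|r-r'|$ times the reciprocal of the maximizing index. Then couple the equilibrium overshoots $O_a$ and $O_{a'}$ monotonically in the parameter. From this I aim to show that the map $r\mapsto\operatorname{Law}(A_r-O_{A_r})$ contracts in $W_1$ by a factor of at most $3/4$. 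A Banach fixed-point argument on $W_1$ then gives a unique invariant $\nu_\infty$.

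Absolute continuity and equivalence to Lebesgue measure will be read off the density~\eqref{eq:limitingmargindensity}. The factor $\Prob(H_a>a-y)$ is strictly positive for every $y<a$. The parameter $A_r$ is unbounded above with positive probability, so $k(r,\cdot)>0$ everywhere. Finally, $\nu_\infty=\int k(r,\cdot)\,\nu_\infty(dr)$ has a density because each $k(r,\cdot)$ is bounded by $1/\E H_{A_r}$ and that bound is integrable.

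\emph{Kernel convergence at large $E$.} The main step is to show that the finite-$E$ margin kernel converges to $K$. Fix a margin $r$, so the state is $x=E-r$. Reflecting as in Theorem~\ref{thm:frozen}, the early upper slope converges to $A_r$: the upper constraints $(D_i+E)/i$ reduce to $(W_{i-1}-r)/i$ up to sign. The segment length is of order $E/A_r$ in the typical case.

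At rejection, the terminal lower slope is about $-2E/N$. The least-squares slope $m$ lies in $[a_N,b_N]$ and tends to $-A_r$ up to $o(1)$, after reflection. The new margin $R'=E-|x+W_N-mN|$ is therefore the overshoot of the walk $a n-\widetilde W_n$, with $a=A_r$, across a level that is distant, of order $E$. The walk $\widetilde W$ here is the time-reversed walk seen from the rejection time. The renewal theorem for ladder heights of this drifting walk gives the equilibrium overshoot $O_a$, and hence $R'\Rightarrow A_r-O_{A_r}$.

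I would prove this convergence locally uniformly in $r$ on compacts. The steps are:
\begin{itemize}
\item a Skorokhod coupling of the early walk;
\item an independence argument between the early stretch, which fixes $A_r$, and the late stretch, which produces the overshoot; these are separated by a time gap of order $E$;
\item a separate treatment of the atom $A_r=-r$ when $r<0$.
\end{itemize}

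\emph{Convergence of $\nu_E$ and the coefficient.} Tightness of $\{\nu_E\}$ follows from the uniform moments~\eqref{eq:stationarymarginmoments}. Any subsequential limit is $K$-invariant, because the kernels converge locally uniformly and $K$ is Feller; continuity of $r\mapsto A_r$ in law gives the Feller property. Uniqueness of $\nu_\infty$ then yields $\nu_E\Rightarrow\nu_\infty$.

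For the coefficient, write $C_E/(E\log E)=\E_{\nu_E}[g_E(R)/(E\log E)]$ and split by the sign of $R$.
\begin{itemize}
\item On $R\ge0$, Theorem~\ref{thm:conditionalmean} gives uniform convergence to $4V(R)$ on compacts. The bound~\eqref{eq:meanuniform} together with the $p$-th moments gives uniform integrability.
\item On $R<0$, the analogue of Theorem~\ref{thm:frozen} gives $g_E(R)=O(E(1+\log_+(1/|R|)))$, so this part is $o(E\log E)$. The logarithmic singularity near $0$ is integrable because $\nu_E$ has uniformly bounded density near $0$. I would obtain that bound from the density formula applied at finite $E$ through a one-step smoothing by the Gaussian increment.
\end{itemize}
Since $V$ is continuous and $V\le w+C$, the limit is $4\int_0^\infty V\,d\nu_\infty$. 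This value is positive because $\nu_\infty$ charges $(0,\infty)$ and $V\ge1/\sqrt2\cdot$const$>0$, and it is finite by the moment bounds.

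\emph{Main obstacle.} I expect the hardest part to be the kernel convergence. There are two difficulties. First, I must show that the constrained least-squares slope at an actual rejection time is asymptotically $-A_r$ exactly, with error $o(1/N)\cdot N=o(1)$ in the margin; this concerns $mN$, not just $m$. Second, I must justify that the early and late parts decouple despite the selection at the rejection time. I would first try to handle the error in $mN$ by showing that the feasible interval width $b_N-a_N$ is $O_P(1/N)$ at rejection, and that the least-squares point is within $o_P(1/N)$ of the active upper endpoint.
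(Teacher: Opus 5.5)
Your overall architecture is the paper's: local-uniform convergence $K_E\to K$ from a prefix-determined slope plus a distant overshoot, tightness from \eqref{eq:stationarymarginmoments}, invariance of subsequential limits via the Feller property, uniqueness, then mean passage. However, the step you yourself flag as the main obstacle lacks its key idea, and your suggested route would not provide it.

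\textbf{Slope selection at rejection.} A width bound $b_N-a_N=O_P(1/N)$ controls $mN$ only to $O_P(1)$, which is the same order as the margin. Nothing in your plan places the selected slope within $o_P(1/N)$ of an endpoint. What actually happens is that the constraint binds macroscopically. Take $x=E-r$. Once the maximizing index of $A_r$ lies in the accepted prefix, $a_N=A_r=a$ and $N/E\to2/a$, and the strong law gives $m_{\rm LS}/a\to3/4$. So the unconstrained slope lies about $a/4$ below the lower endpoint $a_N$, and constrained least squares returns $m=a$ \emph{exactly} for all large $E$. Then $N=\inf\{n\ge0:an-W_n>2E-r-a\}$ and $R'=a-O_B$ exactly. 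Here $O_B$ is the overshoot of this \emph{forward} passage, produced by the rejected observation itself; no time reversal is involved.

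The decoupling also needs more than a time gap. You cannot condition on $A_r$, since it depends on the whole future. The paper therefore replaces $A_r$ by the truncated $A_{r,J}$ on a localization event that holds simultaneously for $|r|\le K_0$, and conditions only on the prefix. It then uses a renewal remainder uniform in the drift over $[\delta,M]$ (Blanchet--Glynn). Pointwise renewal in $a$ does not give the uniformity in $r$ you need to pass $\nu_EK_E=\nu_E$ to the limit. Likewise, the Feller property needs continuity of $a\mapsto\operatorname{Law}(a-O_a)$, not only of $r\mapsto A_r$.

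\textbf{Small negative margins.} Your treatment here rests on a uniform density bound for $\nu_E$ near zero, to be obtained by one-step Gaussian smoothing. That argument fails as stated, because the rejecting increment is selected by the stopping rule. Summing the unconditional one-step density bound over prefixes costs a factor $\E N$. Conditioning on rejection at threshold $\theta$ instead gives the density $\varphi(z)/\overline\Phi(\theta)$, which is of order $\theta$ near $z=\theta$ and hence unbounded. The paper needs the two-step charging argument of Lemma~\ref{lem:outputanticoncentration} for such a bound. Also, the $\log(1/c)$ bound you attribute to Theorem~\ref{thm:frozen} is the separate bridge estimate of Lemma~\ref{lem:secondnegativemean}; Theorem~\ref{thm:frozen} only gives $g_E(-c)\le2+4E/c+8/c^2$.

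None of this is needed. On $[-\delta,\delta]$, use the crude bound \eqref{eq:meanuniform}: the $\nu_E$-mass of this interval has $\limsup$ at most $\nu_\infty([-\delta,\delta])$, which tends to $0$ because $\nu_\infty$ has no atom. On $r\le-\delta$, use $g_E(r)\le2+4E/\delta+8/\delta^2=o(E\log E)$.

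\textbf{Existence and uniqueness.} Obtaining these from a $W_1$ contraction of $K=\mathsf A\mathsf B$ is a legitimate alternative to the paper's argument, which takes existence from limits of $\nu_E$ and uniqueness from strong Feller, positive densities and conditional Jensen. But the factor $3/4$ does not come from a monotone overshoot coupling alone. It is the Spitzer-identity bound $\beta'(a)\in[1/2,3/4]$ of Appendix~\ref{app:evaluation}, which you would have to import. Moreover, Banach's theorem gives uniqueness only among laws with finite first moment. Close this gap using $\E_{K(r,\cdot)}y_-\le\rho_{\rm S}$ and $A_r\le r_-+G$, which show that every invariant law has a finite first moment.
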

In particular, $C_E=\Theta(E\log E)$ and $C_E/E\to\infty$.

The proof in Appendix~\ref{app:limit} has three steps. An early finite
prefix determines the limiting slope; a distant crossing then produces
its equilibrium overshoot. This yields the kernel $K$ with the actual
carry branch retained. Finally, uniform margin moments allow invariant
laws and conditional means to pass to the limit. Uniqueness of the
limiting invariant law selects one coefficient for the stationary regime.

The representation~\eqref{eq:leadingcoefficient} identifies the mean
far along the stream. To relate it to the initial loss of scale, we next
return to the actual sequence of restart states.

\subsection{The third segment and the approach to stationarity}

The first rejection leaves a margin whose positive part vanishes as
$E\to\infty$, which suppresses the second segment's logarithmic mean.
One transition of the limiting kernel $K$ produces a density that is
strictly positive on the whole real line. The third segment therefore
has a positive chance of starting with positive margin. The conditional
$E/n$ survival tail then contributes $E\log E$ to its mean, and the
same mechanism persists at each later fixed index.

\begin{theorem}\label{thm:transientconvergence}
For the actual constrained-SSE stream initialized at a true sample,
with unit Gaussian increments, carried rejection and no maximum lag,
\[
 \E L_{k,E}=\Theta_k(E\log E),\qquad k\ge3\text{ fixed}.
\]
There is a finite constant $B$, independent of $k$, such that
\begin{equation}\label{eq:transient-mainbound}
 \limsup_{E\to\infty}
 \left|\frac{\E L_{k,E}}{E\log E}-\gamma_{\rm Swing}\right|
 \le B\left(\frac34\right)^{k-3},\qquad k\ge3.
\end{equation}
In particular, the left-hand side tends to zero as $k\to\infty$.
\end{theorem}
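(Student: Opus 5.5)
The plan is to work entirely with the reflected margin $R_k=E-|X_k|$ and the representation $\E L_{k,E}=\E\,g_E(R_{k-1})$ from \eqref{eq:meanrecursion}, together with the symmetry $h_E(x)=g_E(E-|x|)$. Write $\rho_{k,E}$ for the law of $R_{k-1}$, the margin at the start of segment $k$.

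\textbf{Step 1 (limit laws of the transient margins).} First show that $\rho_{k,E}\Rightarrow\rho_k$ as $E\to\infty$ for each fixed $k\ge2$, with $\rho_{k+1}=\rho_kK$. The base case $\rho_2$ is the limiting entrance law of the first rejection. By Appendix~\ref{app:firstrestart} and Theorem~\ref{thm:secondsegment}, it has uniform moments and is supported on $(-\infty,0]$ in the limit, because $(R_1)_+\le|m_1|\to0$. The inductive step reuses the one-step convergence from the proof of Theorem~\ref{thm:leadingcoefficient} in Appendix~\ref{app:limit}. That argument shows $P_E$, in margin coordinates, converges to $K$ locally uniformly on compact sets of $r$. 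We also need uniform-in-$E$ moment propagation, $\sup_E\E|R_k|^p<\infty$ for each fixed $k$. This comes from the same absorption estimate as Theorem~\ref{thm:stationaryscale}, applied along a finite number of steps rather than at stationarity.

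\textbf{Step 2 (means).} With Step 1 in hand, Theorem~\ref{thm:conditionalmean} gives $g_E(r)/(E\log E)\to4V(r_+)\ind_{\{r\ge0\}}$. It converges to zero for $r<0$, since by Theorem~\ref{thm:frozen} and the bound in the second-segment appendix the conditional mean is $O(E(1+\log_+(1/|r|)))$ there. The uniform bound \eqref{eq:meanuniform} combined with the moment bounds gives uniform integrability. Hence
\[
\lim_{E\to\infty}\frac{\E L_{k,E}}{E\log E}=4\int_0^\infty V\,d\rho_k=:\gamma_k .
\]
The delicate point is the layer near $r=0^-$. There I would use the integrable $\log_+(1/|r|)$ singularity together with the fact that $\rho_k$ has a bounded density for $k\ge3$, since $K$ has density \eqref{eq:limitingmargindensity}. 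For $k\ge3$, $\rho_k=\rho_2K^{k-2}$ has a density strictly positive on $\R$, so $\rho_k((0,\infty))>0$. Since $V\ge V(0)=1/\sqrt2>0$ on $[0,\infty)$, this gives $\gamma_k>0$ and hence the $\Theta_k(E\log E)$ claim. The upper bound is immediate from \eqref{eq:meanuniform}.

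\textbf{Step 3 (geometric rate).} The $\limsup$ in \eqref{eq:transient-mainbound} equals $|\gamma_k-\gamma_{\rm Swing}|=4\left|\int V\,d(\rho_3K^{k-3})-\int V\,d\nu_\infty\right|$. Here I invoke the $3/4$ Wasserstein contraction of $K$ stated in the abstract, $W_1(\mu K,\mu'K)\le\frac34W_1(\mu,\mu')$. I would prove it by a synchronous coupling: $r\mapsto A_r$ is monotone and $3/4$-Lipschitz in the coupled sense, and the overshoot law $O_a$ varies Lipschitz-continuously in $a$. Iterating gives $W_1(\rho_k,\nu_\infty)\le(3/4)^{k-3}W_1(\rho_3,\nu_\infty)$. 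It then remains to show that $r\mapsto V(r)\ind_{\{r\ge0\}}$ is Lipschitz. $V$ is Lipschitz on $[0,\infty)$, since $V(w)=w+\E O_w$ and the overshoot mean has bounded derivative, but there is a jump of size $V(0)$ at $0$. I would absorb the jump by applying one more step of $K$: test against $r\mapsto\int V\ind_{\{\cdot\ge0\}}\,dK(r,\cdot)$, which is Lipschitz because the densities $k(r,\cdot)$ are bounded and vary Lipschitz in $r$ in $L^1$. This costs one power of $3/4$, which is absorbed into $B$. Finiteness of $W_1(\rho_3,\nu_\infty)$ follows from the moment bounds.

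I expect the main obstacle to be Step 1, specifically passing the transient margins to the limit near $r=0$ and on the negative branch (the atom $A_r=-r$), uniformly enough to integrate $g_E/(E\log E)$. The stationary argument used invariance to close moment bounds. For fixed $k$ I instead need a finite-step propagation of moments and of the anti-concentration near zero starting from the first entrance law. I would try to prove a one-step estimate of the form $\E|R_{k+1}|^p\le C_p(1+\E|R_k|^p)$, uniform in $E$, with the same overshoot and slope controls as in Appendix~\ref{app:margins}.
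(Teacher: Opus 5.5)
Your overall architecture matches the paper's: moment induction from Lemma~\ref{lem:firstmarginmoments}, compact-uniform convergence $K_E\to K$, mean passage, positivity from the strictly positive density of $K$, and a Wasserstein contraction tested against a reward made Lipschitz by moving one kernel step. But Step~3, the quantitative part, does not work as you sketch it.

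\textbf{The $3/4$ cannot come from the margin-to-slope step.} The map $r\mapsto A_r$ is pathwise $1$-Lipschitz and no better. For $r<r'$ one has $A_r-A_{r'}=r'-r$ whenever the supremum defining $A_{r'}$ is attained at $i=1$. That event is $\{M_{|r'|}\le0\}$, i.e.\ the atom $A_{r'}=-r'$, and its probability is $p_{|r'|}\to1$ as $r'\to-\infty$. Since $A_r\ge A_{r'}$ pathwise, every coupling has $W_1\ge(r'-r)p_{|r'|}>\tfrac34(r'-r)$ there. So the entire factor must come from the slope-to-margin step $B_a=\operatorname{Law}(a-O_a)$, and an unspecified Lipschitz modulus for the overshoot law does not give it. You need two facts:
\begin{itemize}
\item $B_a$ increases stochastically in $a$, so the quantile coupling gives $W_1(B_a,B_c)=|\beta(a)-\beta(c)|$;
\item $\beta'(a)\le3/4$ for $\beta(a)=\E_{B_a}R=a/2-1/(2a)+\E M_a$, which comes from the Spitzer series \eqref{eq:newconstantmaximummean}.
\end{itemize}
Both are the content of Theorem~\ref{thm:constantcontraction}.

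\textbf{The device for the jump of $v=4V\ind_{(0,\infty)}$ at zero has a similar problem.} $L^1$-Lipschitz dependence of $k(r,\cdot)$ on $r$ would not make $Kv$ Lipschitz, because $v$ grows linearly; that dependence is also unproved. The workable form is $Kv(r)=\E h(A_r)$ with $h(a)=\frac4a\int_0^aV(r)G_a(r)\,dr$, together with $0\le h'(a)\le4V(a)p_a/a\le L_h$. This bound follows from the monotonicity of $G_a(r)/a$ in $a$, the stochastic increase of $B_a$, $V(a)\le a+\sqrt{2/\pi}$, and $p_a/a\le\sqrt2$. With these ingredients your extra $K$ step is exactly the paper's factorization $K=\mathsf A\mathsf B$, $T=\mathsf B\mathsf A$, $h=\mathsf Bv$. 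It yields $|\bar\alpha T^{k-3}h-\lambda h|\le L_hW_1(\bar\alpha,\lambda)(3/4)^{k-3}$ with $\bar\alpha=\rho\mathsf A$.

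\textbf{Smaller points.}
\begin{itemize}
\item Step~1 asserts that $R_{1,E}$ converges in law. Lemma~\ref{lem:firstmarginmoments} and Theorem~\ref{thm:secondsegment} give only tightness and $\E(R_{1,E})_+\to0$, and the paper never identifies this law. Work along a subsequence realizing the relevant $\liminf$ or $\limsup$, then extract a further subsequence with $R_{1,E}\Rightarrow\rho$. Per-limit positivity of $\rho K^{k-2}$ then suffices for the lower bound. $B$ is independent of $k$ because every possible $\rho$ has first absolute moment bounded by the constant from Lemma~\ref{lem:firstmarginmoments}.
\item The layer near $r=0$ needs neither the $\log_+(1/|r|)$ singularity nor finite-$E$ anti-concentration. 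The bound $g_E(r)/(E\log E)\le C(1+r_+)$ from \eqref{eq:meanuniform}, plus the absence of an atom at zero in the limit, already controls it.
\item The moment induction needs no absorption step. $\E N_j\le C_jE\log E$ follows directly from the inductive moment hypothesis and \eqref{eq:meanuniform}.
\end{itemize}
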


The proof needs uniform moments for the first entrance, rather than an
explicit entrance distribution. Along any convergent subsequence of
entrance laws, compact-uniform kernel convergence propagates the states.
The conditional mean theorem and moment bounds then pass their means
to the limit. Factoring $K$ into a margin-to-slope step and a
slope-to-margin step gives a slope kernel $T$ with a $3/4$ Wasserstein
contraction and a Lipschitz mean reward. This contraction gives the
same bound for every subsequential entrance law, proving
\eqref{eq:transient-mainbound}. See Appendix~\ref{app:transientconvergence}
for the mean passage and Appendix~\ref{sec:newconstant} for $T$.

The theorem fixes $k$ before taking $E\to\infty$. It quantifies the
approach of normalized means to the stationary coefficient in that
order of limits; it does not assert uniformity in a growing segment
index or a mixing time at a specified finite tolerance.

Native Swing thus loses the quadratic scale at its first restart, and
later averaging recovers only a logarithmic factor. The next section
changes the recording rule so that every new cycle starts at a true
sample and the quadratic scale is restored.

\section{A repair with continuous output and quadratic mean span}
\label{sec:repair}

The degeneration is a consequence of the recording convention in
Section~\ref{sec:algorithm}.  A conservative modification suffices to
restore the quadratic scale while keeping a continuous reconstruction,
the same final error tolerance, and constant auxiliary memory.  The
modification reserves half of the error budget for replacing a feasible
line by a chord.  It also encodes the rejected sample in a one-step
bridge, so that the next cycle starts at a genuine stopping time.
The extra bridge costs one output segment per cycle.

For reference, let
\[
 D_{\rm an}(f,t)=\inf_{a\in\R}\sup_{0\le u\le t}|f(u)-au|,
 \qquad
 \tau_{\rm an}=\inf\{t>0:D_{\rm an}(W,t)>1\},
\]
where $W$ is standard Brownian motion.  Chen
\cite[Theorem~1.2 and Lemma~2.1]{chenBrownian} gives
\begin{equation}\label{eq:repair-anchored-constant}
 \kappa_{\rm an}:=\E\tau_{\rm an}
 =\frac{21\zeta(3)}{\pi^2},
 \qquad
 \tau_{\rm an}(\eps,\upsilon)
 \overset d=\frac{\eps^2}{\upsilon^2}\tau_{\rm an}.
\end{equation}
Here the latter lifetime uses tolerance $\eps$ for
$dt+\upsilon W_t$, $\upsilon>0$, and a line through the true initial point.
These are fresh anchored lifetimes.  The cited result does not give the
lifetime of a native Swing restart, nor the segment count of an optimal
continuous PLA.

\paragraph{The rule.}
For samples $(t_i,y_i)$ with strictly increasing times, begin each cycle
at a true sample $(t_s,y_s)$ and maintain
\begin{equation}\label{eq:repair-cone}
 I_j=\bigcap_{i=s+1}^{j}
 \left[\frac{y_i-y_s-\eps/2}{t_i-t_s},
       \frac{y_i-y_s+\eps/2}{t_i-t_s}\right].
\end{equation}
On the first empty intersection at index $j$, output the true samples
$(t_{j-1},y_{j-1})$ and $(t_j,y_j)$ and restart from the latter.
Thus the completed cycle consists of the chord from $s$ to $j-1$ and
the bridge from $j-1$ to $j$.  The rejected sample is represented by the
bridge endpoint and becomes the next true anchor; no sample is discarded.
At end of input, output the last true sample if a nonempty remainder
exists.  The algorithm does not merge adjacent collinear segments.
Updating the two interval endpoints in \eqref{eq:repair-cone} takes
$O(1)$ arithmetic operations and comparisons per sample, in the
unit-cost model for real arithmetic.  The anchor, preceding sample,
and interval endpoints occupy $O(1)$ real-valued registers, excluding output.
No least-squares accumulator or stored prefix is required.

\begin{theorem}[Deterministic error and dense-sampling bound]
\label{thm:repair-error}
The rule produces a continuous piecewise-linear reconstruction whose
error is at most $\eps$ against the piecewise-linear interpolation of
the input samples.  For samples of any fixed $f\in C[0,T]$, its segment
count $M_h^{\rm repair}(T)$ is uniformly bounded as the sampling mesh
tends to zero.  In particular, on the Brownian event $\{\sigma<T\}$
from Theorem~\ref{thm:ratio},
\[
 \frac{M_h^{\rm Swing}(T)}{M_h^{\rm repair}(T)}
 \longrightarrow\infty\qquad\hbox{almost surely}.
\]
\end{theorem}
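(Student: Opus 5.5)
The proof has three parts: continuity of the output, the deterministic error bound, and the uniform segment count in the dense-sampling limit, followed by the ratio statement.

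\textbf{Continuity and error.} Every output knot is a true sample $(t_i,y_i)$. The reconstruction is therefore the polyline through a subsequence of the samples, and consecutive pieces share endpoints, so it is continuous.

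Consider a completed cycle from anchor $s$ to rejection index $j$. It has two pieces.
\begin{enumerate}
\item The bridge from $j-1$ to $j$ coincides with the sample interpolation, so its error is zero.
\item The chord from $s$ to $j-1$ needs more work. If $j-1=s+1$ the chord is again a piece of the interpolation. Otherwise $I_{j-1}\neq\varnothing$ by minimality of $j$. Pick any $m\in I_{j-1}$ and let $\ell(t)=y_s+m(t-t_s)$. Then $|y_i-\ell(t_i)|\le\eps/2$ for all $s\le i\le j-1$, and in particular $|y_{j-1}-\ell(t_{j-1})|\le\eps/2$. The chord $c$ through the true endpoints satisfies $c-\ell$ affine, vanishing at $t_s$ and bounded by $\eps/2$ at $t_{j-1}$, so $|c-\ell|\le\eps/2$ on $[t_s,t_{j-1}]$. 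Hence $|y_i-c(t_i)|\le\eps$ at every sample in the cycle.
\end{enumerate}
Between consecutive samples, both the chord and the sample interpolation are affine, so their difference is affine. Its maximum modulus is therefore attained at sample times, and the bound $\eps$ transfers to the whole interval. The terminal remainder is handled identically: output its last true sample, using the nonempty interval there (or a single interpolation step if the remainder has only one point).

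\textbf{Uniform count.} Choose $\delta>0$ with $2\omega_f(\delta)<\eps/2$, as in the proof of Theorem~\ref{thm:ratio}. Take any anchor $s$ and any $i$ with $t_i-t_s\le\delta$. The slope $m=(y_i-y_s)/(t_i-t_s)$ lies in every constraint interval with index $\le i$, because for $s<k\le i$,
\[
|y_k-y_s-m(t_k-t_s)|\le|y_k-y_s|+|y_i-y_s|\le2\omega_f(\delta)<\eps/2 .
\]
So $I_i\neq\varnothing$, and a rejection at index $j$ forces $t_j-t_s>\delta$. Each completed cycle therefore spans time at least $\delta$ and contributes two segments. This gives
\[
M_h^{\rm repair}(T)\le2\left(T/\delta+1\right)+1
\]
uniformly in $h$. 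Combined with $M_h^{\rm Swing}(T)\ge R_h(T)\to\infty$ on $\{\sigma<T\}$ from Theorem~\ref{thm:fixed}, the ratio diverges almost surely.

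The main point needing care is the chord-replacement step: the triangle-inequality bound on $c-\ell$ uses only the anchor and endpoint, and it must hold even though the interval test used feasible lines rather than the chord itself.
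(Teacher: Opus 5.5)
Your proposal is correct and follows the paper's proof. It uses the same comparison of the true-endpoint chord with an anchored feasible line ($\eps/2+\eps/2$), the same transfer to the sample interpolation by affinity on each grid interval, the same short-prefix feasibility giving every rejected cycle duration greater than $\delta$, and the same appeal to divergence of the native count on $\{\sigma<T\}$. The differences are cosmetic: the paper uses the horizontal line through the anchor, so $\omega_f(\delta)<\eps/2$ suffices, and it states explicitly that $I_{s+1}$ is never empty, hence $j\ge s+2$ and both output segments have positive length, which your case split assumes without saying.
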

\begin{proof}
At rejection, take any line $\ell$ through $(t_s,y_s)$ with slope in
the preceding nonempty interval.  It has sample error at most $\eps/2$
through $j-1$.  If $c$ is the chord through the true endpoints $s$ and
$j-1$, then, throughout that time interval,
\[
 |c(t)-\ell(t)|
 =\frac{t-t_s}{t_{j-1}-t_s}
    |y_{j-1}-\ell(t_{j-1})|\le\eps/2.
\]
The chord therefore has sample error at most $\eps$.  On every original
grid interval, both the chord and the sample interpolation are affine,
so their difference is bounded by its endpoint errors.  The bridge
coincides with the sample interpolation on its single grid interval.
The same chord argument handles the terminal remainder.  A single
new sample always admits an anchored line, so a rejected cycle has at
least two grid intervals and both of its output segments have positive
length.

Choose $\delta>0$ such that the modulus of continuity of $f$ satisfies
$\omega_f(\delta)<\eps/2$.  Every prefix of duration at most $\delta$
admits the horizontal line through its true initial sample in
\eqref{eq:repair-cone}.  A cycle ending at a rejected sample consequently
has duration greater than $\delta$.  There are at most $T/\delta$
completed cycles and one terminal remainder, giving
\[
 M_h^{\rm repair}(T)\le 2T/\delta+1.
\]
This bound holds for every sampling grid.  On $\{\sigma<T\}$ the
native count diverges almost surely by the proof of
Theorem~\ref{thm:ratio}; division by this uniform bound proves the last
claim.
\end{proof}

The error statement concerns observed data and their linear
interpolation, as in the native sampled problem.  It does not assert a
deterministic bound on an unobserved Brownian bridge at a fixed mesh.
The repair is not claimed to minimize segment count on every input.

\begin{theorem}[Gaussian renewal law and quadratic mean span]
\label{thm:repair-gaussian}
Let $Y_0=0$ and let $Y_n-Y_{n-1}$ be iid standard normal variables,
sampled at unit times.  Use final tolerance $E>0$ and define
\begin{equation}\label{eq:repair-cycle}
 K_E=\inf\left\{n\ge1:
   \inf_{a\in\R}\max_{0\le i\le n}|Y_i-ai|>E/2\right\}.
\end{equation}
The successive cycle lengths of the repair are iid copies of $K_E$,
with finite mean.  If $M_E^{\rm repair}(N)$ counts its output segments
through sample $N$, including its terminal remainder, then
\begin{equation}\label{eq:repair-renewal}
 \frac{M_E^{\rm repair}(N)}N\longrightarrow\frac2{\E K_E}
 \quad\hbox{almost surely as }N\to\infty.
\end{equation}
Moreover,
\begin{equation}\label{eq:repair-mean}
 \frac{K_E}{E^2}\Rightarrow\frac{\tau_{\rm an}}4,
 \qquad
 C_E^{\rm repair}:=\frac{\E K_E}{2}
 \sim\frac{21\zeta(3)}{8\pi^2}E^2.
\end{equation}
Thus the long-stream limit is taken at fixed $E$, before $E\to\infty$.
\end{theorem}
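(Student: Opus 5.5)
Three things are needed: iid cycle lengths with finite mean, the renewal law for the segment count, and the distributional and mean limits of $K_E$.

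\textbf{Step 1: identify a cycle with $K_E$ and prove regeneration.} A cycle anchored at a true sample $(s,Y_s)$ rejects at the first index $j$ where the intersection \eqref{eq:repair-cone} with half tolerance $E/2$ is empty. Emptiness of the interval intersection over $i\le n$ is equivalent to the anchored sup-distance exceeding $E/2$, since the slope constraints are exactly $|Y_i-Y_s-a(i-s)|\le E/2$. Hence the cycle length is $j-s=K_E$ evaluated on the shifted increments $(Y_{s+i}-Y_s)_{i\ge0}$.

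Each restart index $j$ is a stopping time for the increment filtration, because deciding emptiness uses only samples up to $j$. The new anchor is the true value $Y_j$, so nothing from the past enters the next cycle. By the strong Markov property of the iid Gaussian walk at stopping times, the shifted increments after $j$ are iid $N(0,1)$ and independent of $\F_j$. Induction gives iid copies of $K_E$.

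\textbf{Step 2: finiteness of the mean, which I expect to be the main technical point.} I would use a disjoint-block argument. Split time into blocks of length $m$ proportional to $E^2$. Within any block $[\ell m,(\ell+1)m]$, the three points at the block's start, middle and end cannot all lie within $E/2$ of a single line through the anchor if a certain second difference is large. More precisely, any line within $E/2$ at three equally spaced times forces $|Y_{\ell m}-2Y_{\ell m+m/2}+Y_{(\ell+1)m}|\le 2E$.

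The block second differences are independent Gaussians with variance of order $m$. So $\Prob(K_E>km)\le p^k$ with $p<1$ uniform in $E$ once $m=cE^2$. This is essentially Lemma~\ref{lem:threepointblock}, which I would cite with half tolerance. It gives $\E K_E<\infty$ and, in fact, uniform integrability of $K_E/E^2$.

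\textbf{Step 3: the renewal law for the segment count.} Each completed cycle contributes exactly two segments, the chord and the bridge, both of positive length by Theorem~\ref{thm:repair-error}. The terminal remainder contributes at most one. So $M_E^{\rm repair}(N)=2\,\mathcal N(N)+O(1)$, where $\mathcal N(N)$ is the number of renewals up to $N$. The elementary strong renewal theorem, $\mathcal N(N)/N\to1/\E K_E$ almost surely, then gives \eqref{eq:repair-renewal}.

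\textbf{Step 4: the scaling limit and the mean.} Embed the walk in Brownian motion, $Y_n=W_n$. Brownian scaling shows that $K_E/E^2$ has the law of the first grid time at mesh $h=4/E^2$ at which $D_{\rm an}$ of the sampled path exceeds $1$, measured in units of $\tau_{\rm an}$ at unit tolerance, with total scaling factor $(E/2)^2$.

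By the argument of Theorem~\ref{thm:fixed}, the discrete anchored cone converges to the continuous pinch. This uses the fact that strict exceedance immediately after $\tau_{\rm an}$ holds almost surely by Brownian oscillation, as in Lemma~\ref{lem:pinch}. Hence $4K_E/E^2\to\tau_{\rm an}$ almost surely in the coupling, which gives $K_E/E^2\Rightarrow\tau_{\rm an}/4$.

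Uniform integrability from Step 2 upgrades this to convergence of means. Using \eqref{eq:repair-anchored-constant}, $\E K_E\sim\kappa_{\rm an}E^2/4$, and therefore $C_E^{\rm repair}=\E K_E/2\sim 21\zeta(3)E^2/(8\pi^2)$.
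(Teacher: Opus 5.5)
Your proposal is correct and matches the paper's proof. Both arguments rest on the same ingredients: regeneration at the true rejected sample via the stopping time, Lemma~\ref{lem:threepointblock} at half tolerance for the finite mean and uniform integrability, and two segments per completed cycle plus at most one terminal segment under the renewal strong law. Both also finish with the Brownian scaling limit of the first anchored segment. The only difference is packaging: the paper handles Step~4 through the exact identity $K_E=L_{1,E/2}+1$ and a citation of Theorem~\ref{thm:firstmean}, whereas you re-derive that theorem's content inline from scaling and the first-cone part of Theorem~\ref{thm:fixed}.
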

Here $C_E^{\rm repair}$ is the limit of total span divided by output
segment count. From a cycle boundary, segment lengths alternate between
$K_E-1$ and $1$, so it is not a claim that their individual expectations
converge along segment number. Equivalently, it is the segment mean after
randomizing the initial phase uniformly between the two cycle positions.
\begin{proof}
The first rejection $K_E$ is a stopping time for the increment
filtration.  On the same input path, the native first segment at
half-tolerance has the same feasible prefixes, so exactly
\begin{equation}\label{eq:repair-firstduration}
 K_E=L_{1,E/2}+1.
\end{equation}
Its first slope is chosen only after rejection and does not affect
this identity.  Lemma~\ref{lem:threepointblock}, conditioned on the
first observation and used at tolerance $E/2$, gives $\E K_E<\infty$.

The repair starts its next cycle at the observed value $Y_{K_E}$.
The increments strictly after this stopping time are independent of
the completed cycle and have the original Gaussian law, proving the
iid cycle statement.  Each complete cycle produces two segments.
The strong law for the cycle lengths and inversion of their partial
sums therefore give~\eqref{eq:repair-renewal}; a terminal remainder
adds at most one segment to twice the completed-cycle count.

Theorem~\ref{thm:firstmean} supplies the weak limit and the limiting
expectation of the first duration divided by $(E/2)^2$.
Its limiting unit-tolerance lifetime $\sigma$ is the $\tau_{\rm an}$
defined above.  Thus~\eqref{eq:repair-firstduration} gives
\[
 \frac{K_E}{E^2}\Rightarrow\frac{\tau_{\rm an}}4,
 \qquad
 \frac{\E K_E}{E^2}\longrightarrow\frac{\kappa_{\rm an}}4.
\]
The mean passage uses the uniform integrability already proved for
the sampled anchored lifetime in Appendix~\ref{app:firstmeanproof}.
Dividing by two output segments per cycle proves~\eqref{eq:repair-mean}.
\end{proof}

Combining \eqref{eq:repair-mean} with the native stationary mean in
Theorem~\ref{thm:leadingcoefficient} gives the comparison
\begin{equation}\label{eq:repair-native-ratio}
 \frac{C_E^{\rm repair}}{C_E^{\rm Swing}}
 \sim\frac{21\zeta(3)}{8\pi^2\gamma_{\rm Swing}}
       \frac E{\log E}.
\end{equation}
The denominator here is the native stationary mean from that theorem;
\eqref{eq:repair-renewal} is a separate pathwise statement for the
repaired algorithm.  The repair establishes the quadratic order with
an explicit coefficient; optimizing its conservative half-budget and
bridge overhead is outside the present claim.

\section{Evaluating the coefficient and scope}\label{sec:evidence}

\subsection{One numerical coefficient}

The scale comparison leaves one unknown number to evaluate:
$\gamma_{\rm Swing}$. Its margin representation in
\eqref{eq:leadingcoefficient} can be reduced to a positive-slope kernel
$T$ and a reward $h$, with $\gamma_{\rm Swing}=\lambda h$ and
$\lambda T=\lambda$. Appendix~\ref{sec:newconstant} gives explicit
Gaussian formulas for both objects, including the atoms in the maximum
distribution. Theorem~\ref{thm:constantcontraction} gives a convergent
construction from the chosen boundary input zero:
\[
 q_m=(T^mh)(0),\qquad
 q_{2m}\uparrow\gamma_{\rm Swing},\qquad
 q_{2m+1}\downarrow\gamma_{\rm Swing},\qquad
 |q_m-\gamma_{\rm Swing}|\le L_hM_0(3/4)^m,
\]
where $L_h,M_0$ are explicit bounds given in that appendix. This
construction evaluates the stationary coefficient without estimating
the distribution selected by the first native rejection.

Deterministic quadrature of the limiting operator gives
\[
 \boxed{\gamma_{\rm Swing}\approx1.8120703.}
\]
Two evaluations, one using the mean reward and one using the boundary
identity, converge to the same value under mesh
refinement. A separate discretization of the Lindley equation checks
the maximum distribution. Appendix~\ref{sec:constantnumerics} records
the finite-grid output-CDF estimate, meshes, domain sensitivity and
operator mass defects. The decimal
is numerical evidence, not a certified interval: the exact iteration
bound above does not include quadrature, truncation or rounding errors.
The first-segment and repair coefficients are already explicit through
the anchored Brownian constant and require no additional numerical
constant.

\subsection{What the comparison establishes}

The results describe three successive events: the fitted recording
reaches an error boundary; the inherited margin lowers the mean from
$E^2$ to $E$ at the next segment; later positive margins create an
$E\log E$ mean. Stationary averaging changes the coefficient, while
true-sample regeneration restores the power $E^2$. This distinction
explains both the immediate loss and the effectiveness of the repair.

The mean statements have specific orders of limits. The native stationary
law is taken at fixed $E$ before increasing the tolerance, and the
transient bound fixes the segment index before that limit. The repair
has a separate, proved renewal count law. We do not infer a native
finite-window count law from its stationary segment mean.

Finite-grid checks and exact rational rescans accompany the computation
records. They check feasible cones, the constrained least-squares choice,
carried rejection and the repair's chord errors. Finite grids can still
have long post-first intervals; the Brownian longest-segment theorem
is a convergence-in-probability statement without a quantitative grid
rate. These checks support the implementations, while the probability
and error guarantees follow from the proofs above.

The repair preserves the prescribed sampled-data error and continuous
output with constant work and memory per observation. Including its
bridge cost, it restores a quadratic mean span from the native
stationary $E\log E$ scale.

\appendix
\section{Boundary concentration and fixed-recording limits}\label{app:boundary}

\begin{lemma}\label{lem:uniform}
For any fixed continuous path on a compact interval, the least number of
new samples in a completed interval tends to infinity as $h\downarrow0$.
If there is no completed interval, take this least count to be $+\infty$;
empty suprema below are taken to be zero.
Uniformly over its rejection recordings,
\begin{equation}\label{eq:uniform}
 \sup|m|h\longrightarrow0,\qquad
 \sup(\eps-|\eta_{k,h}|)\longrightarrow0.
\end{equation}
\end{lemma}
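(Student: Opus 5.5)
The plan is to derive all three assertions from uniform continuity of $f$ on $[0,T+1]$, writing $\omega=\omega_f$ for its modulus. Recordings are taken at grid times inside the horizon. We also use that every recorded line $t\mapsto q+m(t-a)$ fits the accepted samples of its interval within $\eps$. The argument does not go segment by segment. Instead we bound each quantity directly in terms of $\omega(\cdot)$ and the number of new samples $n$ in a completed interval, and then show that $n$ must be large.

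\emph{Step 1: slopes.} Consider an interval started at recording $(a,q)$ with $|q-f(a)|\le\eps$. This holds inductively, since each new recording is the value at $b$ of a line feasible at the accepted sample $b$, so $|\eta|\le\eps$. Any feasible slope $m$ satisfies $|f(a+ih)-q-mih|\le\eps$ for each accepted $i$. Taking the last accepted index $n$ gives $|m|nh\le 2\eps+\omega(nh)+\ldots$. This bound is weak when $nh$ is small. It is more useful to take the constraint at the first sample, $i=1$: $|f(a+h)-q-mh|\le\eps$ gives $|m|h\le |f(a+h)-f(a)|+|\eta|+\eps\le\omega(h)+2\eps$, which is bounded but not small. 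To make $|m|h$ small we therefore need that $n\to\infty$ and that $|m|$ cannot be of order $1/h$. The key observation is this. If $|m|h\ge c>0$, then the constraints at two consecutive accepted samples $i,i+1$ differ in line value by $|m|h\ge c$, while the data differ by at most $\omega(h)$. So both can lie within the $2\eps$-wide band only for a bounded number of steps, roughly $i\le (2\eps+\omega)/c$. Hence a large $n$ forces $|m|h\le (2\eps+1)/n\to0$. This reduces the first claim in \eqref{eq:uniform} to the count claim.

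\emph{Step 2: the count tends to infinity.} This is the main obstacle. Suppose a completed interval has $n\le n_0$ new samples and is rejected at sample $n+1$. Every line through $(a,q)$ with slope in $I$ misses sample $n+1$. Continuity suggests that the horizontal-slope-adjusted line $\ell(t)=q+m_{\rm prev}(t-a)$ reaches $f(a+(n+1)h)$ within $\omega((n+1)h)+$ (its error at the previous sample). So rejection requires the previous-sample residual to be within $o(1)$ of $\pm\eps$. The difficulty is that an inherited endpoint with $|\eta|$ near $\eps$ really could make $n$ small, and this is exactly the degeneration regime. I would therefore prove the claims jointly by an induction over recordings. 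The inductive hypothesis is that, with $\eta_{\rm prev}$ the inherited error and $m$ the slope, the rejected sample $y$ at $b+h$ satisfies $|y-q'|\le \eps+|m|h+\omega(h)$ in the new coordinates. Meanwhile the interval $I$ at the new recording after one sample has width $2\eps/h$ around $(y-q')/h$. The cone collapses only when two constraints separated by $k$ steps are incompatible, which requires $2\eps<|\text{data change}|+\ldots$ over $kh$, that is $\omega(kh)$ relative to $kh$-scaled slopes. I would make this rigorous by showing that any prefix of duration $\le\delta$ with $2\omega(\delta)<\eps$ keeps a nonempty cone once the first carried sample's band is nonempty. If this fails, the fallback is to prove that the number of samples is at least of order $\delta/h$ whenever the inherited error is not already saturated, and to handle saturation separately.

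\emph{Step 3: margins.} Given $n\to\infty$ and $|m|h\to0$ uniformly, rejection at $b+h$ means no feasible slope fits $y_{b+h}$. Comparing with the chosen line, its value at $b+h$ differs from its value at $b$ by $|m|h\to0$, and the data differ by at most $\omega(h)$. So the chosen line's residual at $b$, which is $\eta$, must satisfy $\eps-|\eta|\le |m|h+\omega(h)+(\text{width of }I\text{ times }nh)$. The last term is controlled by noting that the cone width at the final accepted index shrinks like $2\eps/(nh)\cdot h$ per step. Taking suprema over recordings gives the second claim in \eqref{eq:uniform}.
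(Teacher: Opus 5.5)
Your Steps~1 and~3 are fine in substance. Step~1 bounds the slope in terms of the count. Your monotone-residual argument gives $|m|h\le\omega_f(h)+2\eps/n$, while the paper uses the endpoint bound $|m|h\le(2\eps+\osc(f))/k_h$; the displayed $(2\eps+1)/n$ should be one of these. Step~3 is the paper's inequality~\eqref{eq:signedbound} and needs no cone-width term. The selected old line is feasible on the prefix, so it itself misses the rejected sample. This gives $|r_j|>\eps\ge|r_{j-1}|=|\eta|$ and $\eps-|\eta|<|r_j-r_{j-1}|\le\omega_f(h)+|m|h$.

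The gap is Step~2, on which both steps depend: you give no proof that the count diverges, and the route you propose is false. With an inherited error at or near $\pm\eps$, a prefix of duration at most $\delta$ need not keep a nonempty cone. For instance, take $\eta=\eps$ and $f(a+h)-f(a)=-\alpha<0$. The carried constraint forces $mh\le-\alpha$, while sample $i$ requires $m\,ih\ge f(a+ih)-f(a)-2\eps$. The cone therefore empties within about $(2\eps+\max_i|f(a+ih)-f(a)|)/\alpha$ samples. That is a duration of order $\eps h/\alpha$, which tends to zero for Brownian-size increments $\alpha\asymp\sqrt h$. A uniform duration lower bound would also contradict Theorem~\ref{thm:fixed}, where every fixed later recording converges to $\sigma$. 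Your fallback leaves exactly this saturated case open. Yet by the lemma's own second assertion, saturation is the generic situation at every restart. The underlying confusion is between duration and count: saturation makes $nh$ small, not $n$. In the example the count is about $2\eps/\alpha\to\infty$.

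The missing idea is a count bound that holds uniformly over anchors, including saturated ones. It uses a line that sheds the inherited error gradually. From a recording $(a,q)$ with $\eta=q-f(a)$, $|\eta|\le\eps$, take the line through $(a,q)$ whose value at $a+ih$ is $f(a)+\eta(1-i/n)$. For $1\le i\le n$ its residual is at most $\omega_f(nh)+\eps(1-1/n)$. So it fits all $n$ new samples, the carried one included, as soon as $\omega_f(nh)<\eps/n$; extend $f$ slightly beyond the horizon to avoid an end-of-file cut. For each fixed $n$ this holds for all small $h$, independently of $\eta$ and of the anchor. Hence $k_h\to\infty$ with no separate treatment of saturation, and your Steps~1 and~3 then finish the proof.
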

\begin{proof}
For any fixed integer $n$, from an anchor with error $|\eta|\le\eps$
consider the line with values $f(a)+\eta(1-i/n)$ at $a+ih$,
$0\le i\le n$. If the path modulus satisfies $\omega_f(nh)<\eps/n$,
the line fits all $n$ new samples. This holds uniformly over anchors.
A path may be extended to a slightly larger compact interval to avoid an
end-of-file restriction. Thus the least accepted new-point count $k_h$
tends to infinity. Endpoint feasibility gives
\[
 |m|h\le\frac{2\eps+\osc(f)}{k_h}.
\]
For an upper rejection, put $r_i=f(t_i)-\ell(t_i)$ for the selected old
line. Since $r_j>\eps$ and $r_{j-1}\le\eps$,
\begin{equation}\label{eq:signedbound}
 0\le\eps+\eta_{k,h}=\eps-r_{j-1}
 <\Delta f_j-mh.
\end{equation}
The lower-rejection inequality is its reflection. Consequently
$0\le\eps-|\eta_{k,h}|\le\omega_f(h)+|m|h$, uniformly.
\end{proof}

\begin{lemma}\label{lem:direction}
Suppose the path displacement from an interval's starting point, including
its rejected sample, is everywhere less than $\eps$ in absolute value.
If the starting recording error is nonnegative, rejection below the old
slope interval is impossible. If it is nonpositive, rejection above that
interval is impossible.
\end{lemma}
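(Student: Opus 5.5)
The plan is a direct contradiction argument using only the slope interval~\eqref{eq:cone}. It needs nothing probabilistic. Reflection $f\mapsto-f$, $\eta\mapsto-\eta$ exchanges the two assertions, so I treat only the case $\eta\ge0$ and show that rejection below is impossible. Write the recording as $(a,q)$ with $q=f(a)+\eta$ and displacements $d_i=f(t_i)-f(a)$. The hypothesis is $|d_i|<\eps$ for every accepted sample and for the rejected sample $t_j$. At rejection the old interval $I=[L,U]$ is nonempty, and it contains at least one accepted sample, since the carried observation is always in the constraint set.

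First I show that $L<0$. Each lower constraint has numerator $f(t_i)-q-\eps=d_i-\eta-\eps<\eps-\eta-\eps=-\eta\le0$. Hence every lower endpoint is strictly negative, and so is their maximum $L$. Let $i^*$ be an accepted index attaining $L$; there are finitely many constraints, so such an index exists. Then
\[
 L(t_{i^*}-a)=d_{i^*}-\eta-\eps .
\]

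Next I translate rejection below into an inequality. It means that the interval of sample $j$ lies entirely below $I$, that is $(d_j-\eta+\eps)/(t_j-a)<L$. Equivalently, even the line of slope $L$ overshoots the sample: $d_j<\eta-\eps+L(t_j-a)$. Because $t_j>t_{i^*}$ and $L<0$, we have $L(t_j-a)\le L(t_{i^*}-a)$. Substituting the formula for $L(t_{i^*}-a)$ gives
\[
 d_j<\eta-\eps+d_{i^*}-\eta-\eps=d_{i^*}-2\eps<-\eps ,
\]
which contradicts $|d_j|<\eps$. The case of nonpositive starting error and rejection above is the mirror image. There one uses $U>0$ and the upper-constraint index attaining $U$.

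The only delicate point is the sign of $L$. It is what allows the time monotonicity $L(t_j-a)\le L(t_{i^*}-a)$, and it is exactly where both $\eta\ge0$ and the strict bound $d_i<\eps$ enter. If I were worried about ties, I would check that the strict inequalities survive the boundary cases $\eta=0$ and $d_i$ near $\pm\eps$. The strict hypothesis on displacements keeps every step strict, so I expect no real obstacle.
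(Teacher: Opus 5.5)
Your argument is correct and is essentially the paper's proof. Both rest on the same two facts: when $\eta\ge0$ and $|d_i|<\eps$, every old lower-constraint numerator $d_i-\eta-\eps$ is negative, so moving to the later time $t_j$ only raises it; this is then compared with the new sample's upper constraint. The paper argues directly rather than by contradiction through the index attaining $L$: it bounds every old lower bound by $(d-\eta-\eps)/s_j$, where $d=\max_{i\le j}|x_i|$, and obtains the explicit gap $2(\eps-d)/s_j>0$.
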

\begin{proof}
Write the starting error as $\eta\ge0$, the relative samples as
$(s_i,x_i)$, the rejected index as $j$, and $d=\max_{i\le j}|x_i|<\eps$.
For $i<j$, negativity of $d-\eta-\eps$ gives
\[
 \frac{x_i-\eta-\eps}{s_i}\le
 \frac{d-\eta-\eps}{s_j},\qquad
 \frac{x_j-\eta+\eps}{s_j}\ge
 \frac{-d-\eta+\eps}{s_j}.
\]
The new upper bound exceeds every old lower bound by at least
$2(\eps-d)/s_j>0$. Reflection proves the other assertion.
\end{proof}

\begin{proof}[Proof of Theorem~\ref{thm:fixed}]
Before $\sigma$ the continuous slope interval has nonempty interior.
After $\sigma$, two strictly separated constraints witness infeasibility.
Dense samples eventually detect those constraints, so $b_{1,h}\to\sigma$.
Any first-interval chosen slope is bounded using a sample near $\sigma/2$.
Every subsequential slope limit fits $[0,\sigma]$ and must equal $m_*$.
Thus $e_{1,h}\to S$.

Suppose $b_{k-1,h}\to\sigma$. If the next interval extended to
$\sigma+u$ along a subsequence, an accepted sample near $\sigma+u/2$
would bound its slope. Lemma~\ref{lem:uniform} allows a further subsequence
whose starting error tends to $s\eps$, $s\in\{-1,1\}$.
Continuity and dense sampling would then give a finite-slope line through
$B_\sigma+s\eps$ fitting a positive future interval, contradicting
Lemma~\ref{lem:pinch}. This also rules out an interval that never ends.
Hence $b_{k,h}\to\sigma$.

The entire $k$th interval including its rejection now has vanishing path
oscillation for every fixed $k\ge2$. Lemma~\ref{lem:direction} and
\eqref{eq:signedbound} force its terminal error to the opposite boundary.
Induction proves~\eqref{eq:fixed}.
Reflection $B\mapsto-B$ fixes $\sigma$ and changes $S$ to $-S$, proving
fairness and independence. Every fixed number of recordings eventually
precedes $T$ when $\sigma<T$, so their count diverges.
Finally $hR_h(T)\le T/k_h$ with $k_h$ from Lemma~\ref{lem:uniform}.
\end{proof}

\begin{proof}[Proof of Theorem~\ref{thm:empirical}]
Choose $\rho>0$ with $\omega_B(\rho)<\eps$. For small $h$, uniform
boundary concentration makes all recording errors nonzero.
By Lemma~\ref{lem:direction} and~\eqref{eq:signedbound}, successive
recording errors of the same sign require an interval, including its
rejection, longer than $\rho$. Its accepted duration exceeds $\rho-h$.
Disjoint interiors imply at most $2T/\rho$ such transitions for small $h$.
Splitting the sign sequence at these transitions gives alternating runs.
The positive and negative counts therefore differ by at most $1+2T/\rho$.
Divide by $R_h(T)$ and use uniform boundary concentration.
The duration assertion follows from
$\#\{k:L_{k,h}\ge a,\ J_{k,h}\le T_h\}\le T/a$.
\end{proof}

\section{Controlling all post-first Brownian intervals}\label{app:longest}

The fixed-number argument cannot establish Theorem~\ref{thm:max}. An interval
selected from the entire stream can have a growing index. A limit of its
starting times need not be a stopping time, and Brownian paths do have
one-sided supporting points at exceptional times. We instead estimate
conditional probabilities at the discrete rejection stopping times.
Brownian scaling reduces the proof to $\eps=1$.

At a recording time $b$ with error $\eta$ and carry increment
$z=B_{b+h}-B_b$, choose $s=1$ when $\eta\ge0$ and $s=-1$ otherwise.
Reflect the data when $s=-1$.
At the rejection stopping time $r=b+h$, put
\[
 w=1-|\eta|+sz.
\]
Future reflected increments $W$ are conditionally standard Brownian
motion. A candidate slope $m$ must satisfy
\begin{equation}\label{eq:state}
 w+W_u-m(h+u)\in[0,2]
\end{equation}
at every future accepted grid point. In particular the carry interval is
$[(w-2)/h,w/h]$.
Let $H(a,M)$ be the existence of a line with $|m|\le M$ fitting through
the anchor-relative grid time $h\lceil a/h\rceil$.

\begin{lemma}\label{lem:ratioestimates}
Fix $a,M>0$. For states $w\ge-Mh$, write $d=w_++\sqrt h$.
There are positive constants $t_0,c,C$, depending at most on $M$, such
that, for small $h,d$ with $Cd<t_0$,
\begin{align}
 \Prob(H(a,M)\mid\F_r)
 &\le C_{a,M}\bigl(d+\sqrt{h\log(1/h)}\bigr)+C_a h^3,
 \label{eq:tail}\\
 \E[L\wedge t_0\mid\F_r]
 &\ge cd\log\frac{t_0}{Cd},\label{eq:meanlower}
\end{align}
where $L$ is the complete duration of the interval starting at $b$.
\end{lemma}
\begin{proof}
For the upper bound, $m\ge-M$ and~\eqref{eq:state} imply grid nonnegativity
of $x+W_u+Mu$, where $x=w+Mh\ge0$, for at least $a/3$ time units when
$h$ is small. Brownian motion differs from its piecewise-linear grid
interpolant by at most $\rho_h=C\sqrt{h\log(1/h)}$ except on an event of
probability $C_a h^3$. This follows by bounding each interval's difference
by twice its maximum increment, then applying reflection and a union bound.
For $q\ge a/3$, the drifted reflection formula is
\[
 \Prob\!\left(\min_{u\le q}(W_u+Mu)\ge-y\right)
 =\Phi\!\left(\frac{y+Mq}{\sqrt q}\right)
 -e^{-2My}\Phi\!\left(\frac{-y+Mq}{\sqrt q}\right).
\]
It vanishes at $y=0$ and has derivative at most
$2/\sqrt{2\pi q}+2M$ for $y\ge0$. Take $y=x+\rho_h$.

For the lower bound choose $t_0$ small, with $t_0\le1/(2M)$.
For each target time $Cd\le t\le t_0$, use the candidate $m=-1/t$.
Its carry coordinate is $w+h/t\in[0,2]$.
Require the first fresh increment after $r$ to lie in
$[\sqrt h,2\sqrt h]$, an event of fixed probability
$p_0=\Phi(2)-\Phi(1)$. At the next sample the coordinate is
\[
 v=w+W_h+2h/t.
\]
Uniformly, $d/2\le v\le4d$ and $v/t\le1/2$ if $C$ is sufficiently
large and $h,d$ are small. Subsequently a Brownian motion with drift
$1/t$ starting at $v$ need only remain in $(0,2)$.

For that process $X$, let $\tau_0,\tau_2$ be its boundary hitting times.
The probability of never hitting zero is $1-e^{-2v/t}\ge c_1v/t$.
Set
\[
 f_t(x)=e^{-x/t}\sinh(2x/t),\qquad \lambda_t=3/(2t^2).
\]
Direct differentiation gives
$\tfrac12f_t''+t^{-1}f_t'=\lambda_t f_t$.
Let $\tau=\tau_0\wedge\tau_2$. The stopped process
$M_u=e^{-\lambda_t(u\wedge\tau)}f_t(X_{u\wedge\tau})$
is a bounded nonnegative martingale, so $\E_v M_t=f_t(v)$.
On $\{\tau_2\le t,\tau_2<\tau_0\}$,
$M_t\ge e^{-\lambda_t t}f_t(2)$. Since $v/t\le1/2$,
\[
 \Prob_v(\tau_2\le t,\tau_2<\tau_0)
 \le e^{\lambda_t t}\frac{f_t(v)}{f_t(2)}
 \le C_1(v/t)e^{-1/(2t)}.
\]
After reducing $t_0$, subtraction from the probability of never hitting
zero yields
\[
 \Prob_v(\tau_0>t,\tau_2>t)\ge c_2v/t.
\]
The upper-boundary error retains the factor $v/t$; a merely small
additive error would not suffice. Thus the candidate fits at least through
$h\lfloor t/h\rfloor$, and maximal-prefix acceptance gives
$\Prob(L\ge t-h\mid\F_r)\ge c_3d/t$.
Integration over $Cd\le t\le t_0$ proves~\eqref{eq:meanlower}.
The slope depends on the target time. No independence is assigned to the carry.
\end{proof}

\begin{proof}[Proof of Theorem~\ref{thm:max}]
Fix $a>0$ and a path-height cutoff $K$. On
$\{\sup_{u\le T}|B_u|\le K\}$, an interval clipped to duration at least
$a$ admits a feasible line with $|m|\le M=(2K+2)/a$, by its endpoint
errors. Its carry necessarily satisfies $w\ge-Mh$.

For small fixed $\gamma>0$, let $G_h(\gamma)$ say that all recordings
whose rejections precede $T$ satisfy $1-|\eta|\le\gamma$ and
$|z|\le\gamma$. Lemma~\ref{lem:uniform} and path continuity give
$\Prob(G_h(\gamma)^c)\to0$.
Let $A_{k,h}\in\F_{J_{k,h}}$ express these state restrictions together
with $J_{k,h}\le T$ and $w\ge-Mh$.
On $A_{k,h}$, $\sqrt h\le d\le2\gamma+\sqrt h$.

Divide~\eqref{eq:tail}, apart from its $O(h^3)$ term,
by~\eqref{eq:meanlower}. The resulting ratio is at most
\[
 \beta_h(\gamma)=C_{a,M}
 \sup_{\sqrt h\le d\le2\gamma+\sqrt h}
 \frac{d+\sqrt{h\log(1/h)}}{d\log(t_0/(Cd))}.
\]
Choose $\gamma$ small enough that the denominators are positive.
For $d\le h^{1/4}$ the ratio tends to zero at least as fast as
the inverse square root of $\log(1/h)$.
For larger $d$, the second numerator term is negligible relative to $d$.
It follows that
\begin{equation}\label{eq:betalim}
 \lim_{\gamma\downarrow0}\limsup_{h\downarrow0}\beta_h(\gamma)=0.
\end{equation}

The exact duration identity~\eqref{eq:durationidentity} gives the
pathwise bound
\begin{equation}\label{eq:budget}
 \sum_{k:J_{k,h}\le T}(L_{k+1,h}\wedge t_0)\le T+t_0.
\end{equation}
There is at most one crossing interval. A union bound over
$A_{k,h}\cap H_k(a,M)$, the tower property, and~\eqref{eq:budget} give
\begin{align*}
 \Prob(\mathcal M_h(T)\ge a)
 &\le\Prob(\sup_{u\le T}|B_u|>K)+\Prob(G_h(\gamma)^c)\\
 &\quad+\beta_h(\gamma)(T+t_0)+O_{a,T}(h^2).
\end{align*}
There are at most $O(T/h)$ possible rejections for the additive errors.
Let $h\downarrow0$, then $\gamma\downarrow0$, then $K\uparrow\infty$.
The argument also covers a final interval cut off by $T_h$.
\end{proof}

\section{Conditional means and the sampled anchored lifetime}\label{app:conditional}

\subsection{A uniform block tail}

\begin{lemma}[Three-point blocks]\label{lem:threepointblock}
Fix a tolerance $E>0$, an input history and its carried value $D_1=x$
relative to a fitted anchor at index zero.  Let
$D_i=x+W_{i-1}$ for $i\ge1$, where $W$ has fresh unit Gaussian increments,
and let $N$ be the maximal feasible prefix length.  Put
\[
 r_E=\lceil8E^2\rceil,\qquad
 p_* =\Prob(|Z|\le1)<1,\qquad Z\sim N(0,1).
\]
Let $\mathcal H_n$ be generated by the input history and observations
$D_1,\ldots,D_n$.
For all integers $n\ge1$ and $b\ge0$,
\begin{equation}\label{eq:threepointconditional}
 \Prob(N\ge n+2r_Eb\mid\mathcal H_n)
 \le\ind_{\{N\ge n\}}p_*^b.
\end{equation}
In particular, uniformly over the input history and $x$,
\begin{equation}\label{eq:threepointmean}
 \Prob(N\ge1+2r_Eb)\le p_*^b,\qquad
 \E N\le H_E:=\frac{2r_E}{1-p_*}.
\end{equation}
\end{lemma}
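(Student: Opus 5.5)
The plan is to show that a single affine line within tolerance $E$ of the data on a long prefix forces a second-difference constraint on each of $b$ disjoint blocks of three equally spaced samples. Each such constraint is a Gaussian event of probability at most $p_*$, and the blocks are independent of each other and of $\mathcal H_n$.

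\textbf{Step 1 (geometric constraint).} Suppose the prefix of length $n+2r_Eb$ is feasible. Then some slope $m$ satisfies $|D_i-mi|\le E$ for all $i\le n+2r_Eb$. Put $r=r_E$ and, for $\ell=1,\dots,b$, consider the three indices
\[
 i_\ell=n+2r(\ell-1),\qquad i_\ell+r,\qquad i_\ell+2r .
\]
The affine part $mi$ has zero second difference over equally spaced indices. Hence
\[
 \Delta_\ell:=D_{i_\ell+2r}-2D_{i_\ell+r}+D_{i_\ell}
\]
equals the second difference of the residuals, which gives $|\Delta_\ell|\le E+2E+E=4E$. So $\{N\ge n+2rb\}\subset\{N\ge n\}\cap\bigcap_{\ell=1}^b\{|\Delta_\ell|\le4E\}$. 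This holds whatever the carry $x$ and the history are, because $x$ cancels in every second difference.

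\textbf{Step 2 (law of the blocks).} Since $D_i=x+W_{i-1}$, we have
\[
 \Delta_\ell=(W_{i_\ell+2r-1}-W_{i_\ell+r-1})-(W_{i_\ell+r-1}-W_{i_\ell-1}).
\]
This is a difference of sums of $r$ increments over two disjoint stretches, so $\Delta_\ell\sim N(0,2r)$. The increments used by $\Delta_\ell$ are those with indices in $(i_\ell-1,\,i_\ell+2r-1]$. For different $\ell$ these ranges are disjoint, and all of them lie strictly after index $n-1$. On the other hand, $\mathcal H_n$ is generated by the history together with $W_0,\dots,W_{n-1}$. Hence $\Delta_1,\dots,\Delta_b$ are independent of each other and of $\mathcal H_n$, and $\{N\ge n\}\in\mathcal H_n$.

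Because $2r\ge16E^2$, each block satisfies
\[
 \Prob(|\Delta_\ell|\le4E)=\Prob\bigl(|Z|\le4E/\sqrt{2r}\bigr)\le\Prob(|Z|\le1)=p_*.
\]
Conditioning on $\mathcal H_n$ and multiplying over the blocks gives \eqref{eq:threepointconditional}. The case $b=0$ is trivial.

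\textbf{Step 3 (unconditional bounds).} Taking $n=1$ and expectations gives the tail bound in \eqref{eq:threepointmean}. For the mean, group the tail sum into runs of length $2r$:
\[
 \E N=\sum_{m\ge1}\Prob(N\ge m)\le\sum_{b\ge0}2r\,\Prob(N\ge1+2rb)\le\frac{2r}{1-p_*}=H_E .
\]
Both bounds are uniform in the history and in $x$.

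The only real care point is the indexing in Step 2: the second differences must use only increments strictly after those revealed in $\mathcal H_n$. The offset $D_i=x+W_{i-1}$ is exactly what guarantees this, so I expect no substantive obstacle beyond that bookkeeping.
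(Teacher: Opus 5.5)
Your proposal is correct and follows the paper's proof essentially step for step. You use the same ingredients: the second-difference constraint $|D_{i+2r_E}-2D_{i+r_E}+D_i|\le4E$ on the disjoint blocks starting at $n+2r_E(\ell-1)$, the fact that these are independent $N(0,2r_E)$ variables independent of $\mathcal H_n$, the bound $4E/\sqrt{2r_E}\le1$, and summation of the tail in runs of length $2r_E$ for the mean.
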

\begin{proof}
An affine line fitting three equally spaced accepted points within $E$
requires
\[
 |D_{i+2r_E}-2D_{i+r_E}+D_i|\le4E.
\]
For $i=n+2r_Ea$, $0\le a<b$, these second differences use disjoint
increments after $D_n$.  They are independent $N(0,2r_E)$ variables,
independent of $\mathcal H_n$; the known value $D_n$ cancels.
Since $4E/\sqrt{2r_E}\le1$, survival requires $b$ independent events
of probability at most $p_*$.  The event $\{N\ge n\}$ is
$\mathcal H_n$-measurable, proving~\eqref{eq:threepointconditional}.
A single carried point is always feasible, and summing the resulting
tail in blocks of $2r_E$ proves~\eqref{eq:threepointmean}.
\end{proof}

\subsection{A discrete survival estimate}

Write $Q_n^a(x)=\Prob(x+W_j+aj\ge0,\ 1\le j\le n)$ and
$Q_\infty^a(x)=\lim_n Q_n^a(x)$.

\begin{lemma}\label{lem:ballot}
For $x\ge0$, $0\le a\le1/2$, and integer $n\ge1$,
\[
 Q_n^a(x)\le C(x+1)(a+n^{-1/2}).
\]
\end{lemma}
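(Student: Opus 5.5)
The plan is to remove the drift by stopping at a truncated horizon and then apply a uniform driftless ballot bound. Because survival is monotone in time, $Q_n^a(x)\le Q_m^a(x)$ for every $m\le n$. I will take
\[
 m=\min\{n,\lceil a^{-2}\rceil\}\quad(a>0),\qquad m=n\quad(a=0).
\]
Since $a\le1/2$, we have $a^{-2}\ge4$, so $m\ge1$. Moreover $1/\sqrt m\le C(a+n^{-1/2})$: if $m=n$ this is immediate, and otherwise $\sqrt m\ge a^{-1}$.

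On the first $m$ steps the drift contributes at most $am\le a(a^{-2}+1)\le 2/a$. When $m=n<\lceil a^{-2}\rceil$ we also have $am\le a\sqrt m\,\sqrt m\le 2\sqrt m$. In every case, therefore, $am\le 2\sqrt m$, using $\sqrt m\ge a^{-1}$ in the truncated case. The event defining $Q_m^a(x)$ then implies the driftless event
\[
 x+2\sqrt m+W_j\ge0,\qquad 1\le j\le m.
\]
This sacrifices only an additive $O(\sqrt m)$ shift, which is harmless at this horizon.

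The key input is a uniform driftless ballot estimate for the Gaussian walk: for all $z\ge0$ and $m\ge1$,
\[
 \Prob(z+W_j\ge0,\ 1\le j\le m)\le C\,\frac{1+z}{\sqrt m}.
\]
I would prove it in two steps.
\begin{enumerate}
\item For $z=0$, fluctuation theory (Sparre Andersen, or Feller's bound for symmetric continuous walks) gives $\Prob(W_j\ge0,\ j\le m)\le C/\sqrt m$.
\item For general $z$, I would compare with the descending ladder height process. Let $U$ be its renewal function. Decomposing at the last strict descending ladder epoch before $m$ gives the standard bound $\Prob_z(\tau^->m)\le C\,U(z)\,\Prob_0(\tau^->m/2)$. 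Here $\tau^-$ is the first time the walk $z+W_j$ goes below zero. This can be proved by splitting $[0,m]$ in halves and using the Markov property at the ladder epochs.
\end{enumerate}
Since Gaussian ladder heights have finite mean, $U(z)\le C(1+z)$. Combining the two steps gives the displayed estimate. An alternative for step 2 would be an optional-stopping argument with the martingale $z+W_{j\wedge\tau^-}$ and a second-moment bound on the Gaussian overshoot.

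Applying this with $z=x+2\sqrt m$ gives $C(1+x+2\sqrt m)/\sqrt m\le C(1+x)/\sqrt m+2C$. The additive constant $2C$ is not acceptable, so the shift must be absorbed differently. The fix is to split $[0,m]$ into two halves and keep the drift shift only where it is needed.
\begin{enumerate}
\item Let $m'=\lfloor m/2\rfloor$. On the first half the drift adds at most $am'\le\sqrt m$.
\item Conditioning at time $m'$, the walk sits at $x+W_{m'}+am'$. Survival over the first half has probability at most $C(1+x)/\sqrt{m'}$, computed with the driftless walk over a shorter horizon, where the drift shift is absorbed by comparing with the walk started at $x+O(1)$ on the scale $a j\le 1/a$ after rescaling.
\item It is cleaner still to use a change of measure. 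The Girsanov factor for adding drift $a$ over $m$ steps, $\exp(aW_m+a^2m/2)$, has bounded moments because $a^2m\le 2$. Hölder's inequality then reduces $Q_m^a(x)$ to a power $p$ close to $1$ of the driftless probability, but this loses the linear dependence.
\end{enumerate}
The route I prefer is the first one: choose the horizon so that the drift displacement over the window is $O(1)$, namely $m=\min\{n,\lceil a^{-1}\rceil\}$, and accept the bound $1/\sqrt m\le C(\sqrt a+n^{-1/2})$. Then recover the stated rate $a$ from the drifted ballot asymptotic $Q_\infty^a(x)\sim c\,a\,U(x)$, which follows from Wald's identity at the ascending ladder of the drifted walk.

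I expect this last step to be the main obstacle: obtaining the sharp factor $a$, rather than $\sqrt a$, uniformly in $x$ with the linear factor $(1+x)$. My plan there is an optional-stopping argument with the martingale $\exp(-2a(x+W_j+aj))$ adapted to Gaussian increments. This gives
\[
 \Prob(\text{never below }0)\le 1-e^{-2a(x+\E O)}\le 2a(x+C),
\]
where $O$ is the overshoot below zero, which has bounded moments for Gaussian steps. Finally, since survival to time $n$ implies both survival forever or crossing after $n$, I would take the minimum with the driftless $n^{-1/2}$ bound obtained in the case $a\le n^{-1/2}$.
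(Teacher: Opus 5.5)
\textbf{There is a genuine gap.} It sits in the regime you yourself flag as the main obstacle: a horizon $m$ over which the drift displacement $am$ is of the same order as $\sqrt m$. This regime is unavoidable. It arises directly when $a\le n^{-1/2}$, and it arises again after your truncation $m=\lceil a^{-2}\rceil$ when $a>n^{-1/2}$. None of your routes yields $Q_m^a(x)\le C(1+x)/\sqrt m$ there:
\begin{itemize}
\item the additive shift $z=x+am$ leaves an $O(1)$ term, as you note;
\item H\"older after Girsanov loses the linear factor $(1+x)$;
\item the horizon $\lceil a^{-1}\rceil$ only gives $\sqrt a$;
\item the asymptotic $Q_\infty^a(x)\sim c\,aU(x)$ is neither uniform in $x$ nor a finite-horizon statement.
\end{itemize}

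Your closing step is also logically backwards. With $\tau$ the first time $x+W_j+aj<0$, you have $Q_n^a(x)=Q_\infty^a(x)+\Prob(n<\tau<\infty)$. An exponential-martingale bound $Q_\infty^a(x)\le Ca(x+1)$ therefore controls only the smaller quantity. The term $\Prob(n<\tau<\infty)$, surviving to time $n$ and dying later, is never estimated. ``Taking the minimum'' with a driftless bound is not a valid upper bound for $Q_n^a$. Moreover, that driftless $n^{-1/2}$ bound for $a\le n^{-1/2}$ was never actually obtained, because it runs into the same additive-shift problem.

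Your displayed inequality $\Prob(\text{never below }0)\le1-e^{-2a(x+\E O)}$ is also in the wrong direction if it comes from Jensen. The usable bound needs an upper estimate $\E[e^{2aO}\mid\tau<\infty]\le1+Ca$, which gives $Q_\infty^a(x)\le 2ax+Ca$.

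\textbf{The missing idea.} This is how the paper proceeds: stop at the first exit $\tau$ from a strip $[0,R]$ with $aR\le1/2$, rather than at the first passage below zero. Survival to time $n$ then requires one of two events.
\begin{itemize}
\item Exit through the top. Optional stopping of $e^{-2aX_j}$, with the uniform overshoot bound $\E e^{2aO}\le1+Ca$, bounds its probability by $C(x+1)/R$. When $a=0$, stop $X_j$ itself instead.
\item $\tau>n$. Markov's inequality bounds this by $\E\tau/n$. The quadratic function $f(u)=u(R-u)$ has conditional one-step change $a(R-2X_j)-1-a^2\le-1/2$ inside the strip, precisely because $aR\le1/2$. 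At either exit $f(X_\tau)=-RO-O^2$, so overshoot moments give $\E\tau\le CR(x+1)$.
\end{itemize}
Choosing $R=\min\{\sqrt n,1/(2a)\}$ gives $C(x+1)(R^{-1}+R/n)\le C(x+1)(a+n^{-1/2})$ in one stroke. The strip is what turns the infinite-horizon ruin estimate into a finite-horizon one. This route also avoids any need for a driftless ballot theorem or the ladder renewal function $U$.
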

\begin{proof}
If $G$ is standard normal, the conditional excess $G-r$ given $G>r$
has uniformly bounded moments of orders one and two, and uniformly
bounded $\E[(G-r)e^{G-r}]$, for $r\ge-1/2$.
For $r\ge0$, its upper tail is at most $e^{-u^2/2}$, by shifting the
Gaussian density in the tail integral. For $-1/2\le r<0$, bound the
conditional tail by $2\bar\Phi(u-1/2)$. In particular an exit overshoot
$O$ satisfies $\E e^{2aO}\le1+Ca$ for $a\le1/2$.

Put $X_j=x+W_j+aj$ and stop at the first exit $\tau$ from $[0,R]$,
where $R\ge1$, $aR\le1/2$, and $x<R$.
For $f(u)=u(R-u)$, the conditional one-step change before exit is
$a(R-2X_j)-1-a^2\le-1/2$. At exit,
$f(X_\tau)=-RO-O^2$ for either exit direction.
Stopping first at $\tau\wedge m$, using the uniform overshoot moments,
and then letting $m\to\infty$ gives
$\E\tau\le CR(x+1)$.

Let $p$ be the upper-exit probability. For $a>0$, optional stopping
of $e^{-2aX_j}$ is justified by the uniform exponential moments of the
lower-exit overshoot. It gives
\[
 e^{-2ax}\le(1-p)(1+Ca)+pe^{-2aR},\qquad
 p\le\frac{1+Ca-e^{-2ax}}{1+Ca-e^{-2aR}}
 \le C(x+1)/R.
\]
For $a=0$, stop $X_j$ instead and use $x\ge pR-(1-p)C$.
Choose $R=\min\{\sqrt n,1/(2a)\}$, with the second term infinite
when $a=0$. If $x\ge R$, the desired bound is trivial after increasing
$C$. Otherwise survival to $n$ requires upper exit or $\tau>n$.
Thus it is at most $C(x+1)(R^{-1}+R/n)$, which proves the lemma.
\end{proof}

\subsection{Small drift and a uniform duration bound}

For $a>0$ and $x\ge0$, let $\tau_{a,x}$ be the first negative position of
$x+W_n+an$, and $O_{a,x}$ its overshoot on hitting. Conditional on a
particular exit history and exit, its tail is bounded by $e^{-u^2/2}$.
The exponential martingale, stopped at exit or a finite time, therefore gives
\begin{equation}\label{eq:overshootidentity}
 e^{-2ax}=\E[e^{2aO_{a,x}};\tau_{a,x}<\infty].
\end{equation}
On paths not hitting, its value tends to zero by the strong law.
As $a\downarrow0$ and $x\to w\ge0$, $x\ge0$, the hitting prefix and
overshoot converge almost surely to those at zero drift. Indeed, the
latter hitting time is finite, and none of its finitely many noninitial
positions equals zero almost surely. Uniform overshoot moments give
convergence in mean. Expanding~\eqref{eq:overshootidentity} yields
\begin{equation}\label{eq:smalldrift}
 Q_\infty^a(w+a)/(2a)\longrightarrow V(w).
\end{equation}
The same coupling proves continuity of $V$. Monotonicity in the starting
position and a finite partition of a compact interval prove uniformity
of~\eqref{eq:smalldrift} there.

The probability of eventual ruin from $y\ge0$ is at most $e^{-2ay}$
by~\eqref{eq:overshootidentity}. Conditioning at time $n$ and changing
Gaussian drift give
\begin{align}
0\le Q_n^a(x)-Q_\infty^a(x)
&\le\E_x^a[e^{-2aX_n};\tau>n]
 =e^{-2ax}Q_n^{-a}(x)\notag\\
&\le e^{-ax-a^2n/2}Q_n^0(x)
 \le C(x+1)n^{-1/2}e^{-a^2n/2}.
\label{eq:finiteinfinite}
\end{align}
The middle inequality uses $W_n\ge-x$ on zero-drift survival.

For the duration, fix $n\ge4E+2$ and put $N=n-1$.
Feasibility through $n$ requires the last lower slope
$m_0=(w+W_N-2E)/n$ to lie below every upper slope.
Conditional on $W_N=y$, set $x=w-m_0$. The carry requires $x\ge0$,
and the other necessary inequalities become
\[
 x+\beta_j+\frac{2E-x}{N}j\ge0,\qquad 0\le j\le N,
\]
where $\beta_j=W_j-(j/N)W_N$ is a Gaussian bridge independent of $y$.
Its first $M=\lfloor N/2\rfloor$ steps have density relative to a
Gaussian walk bounded by
$p_{N-M}(-W_M)/p_N(0)\le\sqrt2$, where $p_t$ is the $N(0,t)$ density.
The displayed drift is at most $2E/N\le1/2$; if negative, replace it
by zero to enlarge survival. Lemma~\ref{lem:ballot} bounds the conditional
probability by $C(x+1)(E/n+n^{-1/2})\ind_{\{x\ge0\}}$.
Since $x_+\le w_++2E/n+|y|/n$, averaging gives
\begin{equation}\label{eq:durationtail}
 \Prob(L_E(w)\ge n)\le C(1+w_+)E/n,
 \qquad4E+2\le n\le n_0:=\lceil E^2\rceil.
\end{equation}

Apply Lemma~\ref{lem:threepointblock} at the accepted-prefix index $n_0$.
Its blocks start at $W_{n_0-1}$ and use only subsequent increments,
so averaging~\eqref{eq:threepointconditional} and then using
\eqref{eq:durationtail} gives
\[
 \Prob(L_E(w)\ge n_0+2r_Em)\le C(1+w_+)E^{-1}p_*^m,
 \qquad
 \sum_{n\ge n_0}\Prob(L_E(w)\ge n)\le CE(1+w_+).
\]
Together with~\eqref{eq:durationtail} and the trivial bound for
$n<4E+2$, this proves~\eqref{eq:meanuniform}.

\subsection{The exact logarithmic coefficient}

Fix $0<\delta<1/2$, and restrict to
$E^{1+\delta}\le n\le E^{2-\delta}$.
For fixed $0<r_0<1$, let
$D_{E,n}=\{\max_{0\le j<n}|w+W_j|\le r_0E\}$.
Uniformly for $w$ in a fixed compact interval, the reflection principle
gives $\Prob(D_{E,n}^c)\le4e^{-c_{r_0}E^\delta}$.
Put $a_\pm=(2\pm r_0)E/n$.
On $D_{E,n}$, infinite lower-band survival with slope $-a_-$ also
satisfies the upper-band constraint through $n$. Conversely, feasibility
and $D_{E,n}$ force the last lower slope to be at least $-a_+$.
For $w\ge0$ this gives
\[
 Q_\infty^{a_-}(w+a_-)-\Prob(D_{E,n}^c)
 \le\Prob(L_E(w)\ge n)
 \le Q_{n-1}^{a_+}(w+a_+)+\Prob(D_{E,n}^c).
\]
Here $a_\pm\to0$ uniformly and $a_\pm^2(n-1)\ge cE^\delta$.
Equations~\eqref{eq:smalldrift} and~\eqref{eq:finiteinfinite}, followed
by $r_0\downarrow0$, imply
\[
 \sup_{E^{1+\delta}\le n\le E^{2-\delta}}
 \left|\frac nE\Prob(L_E(w)\ge n)-4V(w)\right|\longrightarrow0.
\]
All errors remain negligible after division by $E/n$.
Summing over this interval gives
$(4V(w)+o(1))(1-2\delta)E\log E$.
The two omitted logarithmic ranges contribute at most
$C(1+w_+)\delta E\log E+O(E(1+w_+))$
by~\eqref{eq:durationtail}; the tail beyond $E^2$ contributes $O(E(1+w_+))$.
Let $E\to\infty$ and then $\delta\downarrow0$.
This proves the main limit, uniformly on nonnegative compact intervals.

Janssen and van Leeuwaarden~\cite{jvl}, Theorem 1 and equation (2.1),
give $Q_\infty^a(0)\sim\sqrt2 a$ by reflecting their negative-drift
Gaussian walk. The proof of~\eqref{eq:smalldrift} with $x=0$ identifies
the coefficient as $2V(0)$, giving $V(0)=1/\sqrt2$.
Their result supplies the Gaussian overshoot constant; the conditional
Swing duration asymptotic is proved above.

\subsection{The sampled anchored lifetime}
\label{app:firstmeanproof}

\begin{proof}[Proof of Theorem~\ref{thm:firstmean}]
On a single Brownian probability space, $E B(j/E^2)$ has the law of
unit Gaussian partial sums.  Dividing values by $E$ and sample indices
by $E^2$ preserves feasibility and turns the first recording duration
into that of unit-tolerance Brownian sampling on mesh $E^{-2}$.
Theorem~\ref{thm:fixed} therefore gives
$L_{1,E}/E^2\Rightarrow\sigma$.
This coupling proves a distributional assertion for the original walk;
it does not assert almost-sure convergence on one fixed Gaussian sequence.

Condition on its first observation and apply Lemma~\ref{lem:threepointblock}.
For $E\ge1$, its block length satisfies $2r_E\le18E^2$, so
\[
 \Prob(L_{1,E}/E^2>x)\le p_*^{\lfloor x/18\rfloor},\qquad x\ge0.
\]
This integrable envelope gives uniform integrability and hence
$\E L_{1,E}/E^2\to\E\sigma$.
The value of $\E\sigma$ is the anchored Brownian constant in
\cite[Theorem~1.2]{chenBrownian}.
\end{proof}

\subsection{Averaging limits and the boundary at zero}

Averaging a conditional coefficient requires control of the
restart-state distribution.
For example, if actual states $w_E$ converge in law to $W$, have uniformly
integrable positive parts, and $\Prob(W=0)=0$, then
\[
 \frac{\E L_E(w_E)}{E\log E}
 \longrightarrow4\E[V(W)\ind_{\{W>0\}}],
\]
provided the future conditional on each state has the specified fresh
Gaussian law. Use compact uniformity for positive states, the negative
carry tail bound away from zero,~\eqref{eq:meanuniform}, and truncation.
Appendix~\ref{app:transientconvergence} uses this mean-passage argument
for every subsequential law of the actual initialization.
An atom at zero cannot be treated by weak convergence alone: $w_E=0$
gives coefficient $2\sqrt2$, whereas $w_E=-1/\sqrt{\log E}\to0$ gives
coefficient zero. For the latter, the negative carry bound implies
$g_E(-c)\le2+4E/c+8/c^2$, and substitute $c=1/\sqrt{\log E}$.

\section{The fixed-tolerance restart chain}\label{app:stationary}

\begin{proof}[Proof of Theorem~\ref{thm:stationary}]
Lemma~\ref{lem:threepointblock} gives $N<\infty$ almost surely and
$1\le h_E(x)\le H_E$ uniformly over $x$.
Let $Z$ denote a standard normal variable.

Next, comparison of the first two slope intervals gives
\[
 N=1\quad\Longleftrightarrow\quad |W_1-x|>3E.
\]
On this event least squares gives $m=x$ and $x'=W_1$. Hence, with
$\varphi$ denoting the standard normal density,
\[
 P_E(x,dy)\ge\varphi(y)\ind_{\{|y-x|>3E\}}\,dy.
\]
Choose
\[
 A_+=[4E+1,4E+2],\quad A_-=-A_+,\quad B=[8E+4,8E+5].
\]
For $x\le0$, an increment in $A_+$ produces a one-interval segment;
for $x>0$, use $A_-$. From either resulting interval, another fresh
increment in $B$ again produces a one-interval segment.
Let $p_E=\Prob(Z\in A_+)=\Prob(Z\in A_-)>0$,
$q_E=\Prob(Z\in B)>0$, $\alpha_E=p_Eq_E$, and
$\nu_E=\mathcal L(Z\mid Z\in B)$. Then, for every $x$,
\[
 P_E^2(x,\cdot)\ge\alpha_E\nu_E(\cdot).
\]
Writing $Q=P_E^2=\alpha_E\nu_E+(1-\alpha_E)R$ for a probability
kernel $R$ shows that $\mu\mapsto\mu Q$ contracts total variation
by $1-\alpha_E$. The space of probability measures is complete in
this metric, so the contraction has a unique fixed point $\pi_E$.
Since $\pi_EP_E$ is also invariant under $Q$, uniqueness gives
$\pi_EP_E=\pi_E$. Any $P_E$-invariant law is $Q$-invariant, proving
uniqueness for $P_E$ as well. Iterating the contraction and using
nonexpansion for the remaining odd step proves the displayed rate.
Finally, apply it to the nonnegative bounded function $h_E$ in
\eqref{eq:meanrecursion}.
\end{proof}

\section{Uniform stationary margin bounds}\label{app:margins}

\begin{proof}[Proof of Theorem~\ref{thm:stationaryscale}]
All probabilities and expectations in this proof start the exact
state chain in $\pi_E$, unless a starting state or an accepted history
is specified.  Given $x$, let $W_n=Z_1+\cdots+Z_n$ be the fresh
Gaussian walk, and retain the notation $D_i=x+W_{i-1}$, $N$, $m$, and
$x'=x+W_N-mN$ from Section~\ref{sec:stationary}.  Thus $N$ is a stopping
time for the fresh walk and $\{N\ge n\}$ is measurable before $Z_n$.
We first use only the already proved bound $1\le C_E\le C(1+E^2)$.
Constants below are independent of $E$.

\paragraph{A pathwise bound for the completed SSE slope.}
Set
\[
 c_x=(|x|-E)_+,\qquad
 G=\sup_{i\ge1}\frac{|W_{i-1}|}{i},\qquad B=c_x+2G.
\]
The variable $G$ is independent of $x$.  For $t\ge1$,
\[
 \Prob(G>t)
 \le\sum_{i\ge2}2\exp\!\left(-\frac{t^2i^2}{2(i-1)}\right)
 \le C e^{-t^2},
\]
so $G$ has moments of every fixed order.
For $x\ge0$, the candidate slope $c_x+G$ satisfies
\[
 -E\le x-Bi\le D_i-(c_x+G)i\le x-c_xi\le E
 \quad\text{whenever }i\le E/B.
\]
For $x<0$, reflect the data.  Consequently
\begin{equation}\label{eq:stationarydurationlower}
 N\ge\left\lfloor\frac{E}{c_x+2G}\right\rfloor.
\end{equation}
The denominator is positive almost surely.  The unconstrained SSE slope
satisfies
\[
 |m_{\rm LS}|\le\frac{3|x|}{2N+1}+G.
\]
If $B\le E/2$, then $N\ge E/(2B)$ and $|x|\le E+B$, giving
$|m_{\rm LS}|\le5B$.  If $B>E/2$, use $N\ge1$ to obtain
$|m_{\rm LS}|\le E+c_x+G\le3B$.
The cone's lower endpoint is at most $c_x+G$, and its upper endpoint
is at least $-c_x-G$.  Projection of $m_{\rm LS}$ onto this cone
therefore gives
\begin{equation}\label{eq:stationaryslope}
 |m|\le5c_x+10G.
\end{equation}

\paragraph{Two controls on the next margin.}
Let $R'=E-|x'|$.  Since the rejected point cannot fit the old chosen
line, $|x'-m|>E$, and therefore
\begin{equation}\label{eq:stationarymarginpositive}
 (R')_+\le |m|\le5R_-+10G.
\end{equation}
The last accepted residual $r_N=D_N-mN$ lies in $[-E,E]$, while
$x'=r_N+Z_N$.  Hence $(R')_-\le|Z_N|$.  Although $Z_N$ is selected
by the stopping rule, a predictable stopped sum gives
\begin{equation}\label{eq:stationaryrawtail}
 \begin{split}
 \Prob_{\pi_E}(R_->t)
 &=\Prob_{\pi_E}((R')_->t)\\
 &\le\E_{\pi_E}\sum_{n=1}^N\ind_{\{|Z_n|>t\}}
 =C_E\Prob(|Z_1|>t)\le2C_Ee^{-t^2/2}.
 \end{split}
\end{equation}
Combining this with~\eqref{eq:stationarymarginpositive} and the tail
of $G$ yields absolute $c_0,C_0>0$ such that
\begin{equation}\label{eq:stationaryrawmargintail}
 \Prob_{\pi_E}(|R|>t)\le C_0(C_E+1)e^{-c_0t^2},\qquad t\ge20.
\end{equation}
In particular, the fixed-$E$ state moments used below are finite.

\paragraph{A predictable favorable rejection branch.}
Let $\mathcal G$ be the union of two completion events: the old SSE
slope is $m=a_N\in[0,E]$ and the new point is a lower rejection; or
$m=b_N\in[-E,0]$ and it is an upper rejection.  Here $[a_N,b_N]$
is the old nonempty cone.  For each prospective final index $n$, its
accepted history, cone, and SSE slope $m_n$ are measurable before
$Z_n$.  This predictability will be used when conditioning on the
last increment.

We first estimate the probability of the complementary branch.
Take $\rho=1/40$ and $\delta=1/20$.  On the sufficient event
\begin{equation}\label{eq:stationarypathgood}
 |R|\le\rho E,\qquad \max_{0\le j\le N}|W_j|\le\rho E,
\end{equation}
all $D_i$, including the rejected $D_{N+1}$, have the sign of $x$
and absolute values in $[(1-\delta)E,(1+\delta)E]$.
Suppose first that they are positive.  The old upper slope bound is
at least $(2-\delta)E/N$, whereas the incoming lower slope is at
most $\delta E/(N+1)$.  Upper rejection is impossible.  Lower
rejection instead implies
\[
 a_N>\frac{(2-\delta)E}{N+1},\qquad a_N\le\delta E,
\]
and thus $N\ge39$.  Moreover,
\[
 m_{\rm LS}\le\frac{3(1+\delta)E}{2N+1}
 \le\frac{(2-\delta)E}{N+1}<a_N.
\]
The middle inequality is equivalent to
$(1-5\delta)N\ge1+4\delta$.  Thus SSE clips to
$m=a_N\in(0,\delta E]$.  Negative data give the reflected branch.
This proves that~\eqref{eq:stationarypathgood} implies $\mathcal G$.

Wald's second-moment identity and the maximal inequality for the
stopped square submartingale give
\[
 \E_{\pi_E}W_N^2=C_E,\qquad
 \Prob_{\pi_E}\!\left(\max_{j\le N}|W_j|>\rho E\right)
 \le\frac{C_E}{\rho^2E^2}.
\]
Together with~\eqref{eq:stationaryrawmargintail} and the coarse
bound on $C_E$, this gives, for all sufficiently large $E$,
\begin{equation}\label{eq:stationarybadbranch}
 \Prob_{\pi_E}(\mathcal G^c)\le K\frac{C_E}{E^2}.
\end{equation}

\paragraph{A uniform tail on the favorable branch.}
Consider an old accepted history with $m_n=a_n\in[0,E]$, and write
its last residual as $r_n=D_n-m_nn=-E+g$, with $g\ge0$.
The next increment $\zeta=Z_n$ causes a lower rejection exactly when
\begin{equation}\label{eq:stationaryrejectthreshold}
 \zeta<m_n-g.
\end{equation}
On this event $x'=r_n+\zeta<m_n-E\le0$, so
$(R')_-=(-g-\zeta)_+$.  Consequently, for $t\ge0$,
\[
 \frac{\Prob(\zeta<-g-t)}{\Prob(\zeta<m_n-g)}
 \le\frac{\Phi(-g-t)}{\Phi(-g)}\le e^{-t^2/2}.
\]
The upper rejection branch is its reflection.  Summing over the
predictable accepted histories and the disjoint final-index and
rejection-direction events gives
\[
 \Prob_{\pi_E}((R')_->t,\mathcal G)\le e^{-t^2/2}.
\]
This conditions only on the accepted history and the indicated
rejection direction, not on the future-dependent sufficient event
\eqref{eq:stationarypathgood}.  On the complementary branch,
\eqref{eq:stationarybadbranch} and the stopped-sum bound give
\begin{equation}\label{eq:stationaryrefinedtail}
 \Prob_{\pi_E}(R_->t)
 \le e^{-t^2/2}
 +\min\!\left\{K\frac{C_E}{E^2},\,2C_Ee^{-t^2/2}\right\}.
\end{equation}
Integrate at $t=\sqrt{8\log E}$, using the two alternatives on
the respective sides of this threshold.  For every fixed $p>0$,
\begin{equation}\label{eq:stationarynegativemoment}
 \E_{\pi_E}R_-^p
 \le K_p+K_p\frac{C_E}{E^2}(\log E)^{p/2}.
\end{equation}
The Gaussian integral above the threshold has an $E^{-4}$ factor,
which absorbs every fixed power of $\sqrt{\log E}$.

\paragraph{Absorption and stationary moment bounds.}
Take $p=1$ in~\eqref{eq:stationarynegativemoment}, and then use
\eqref{eq:stationarymarginpositive} and $\E G<\infty$.
The uniform conditional mean bound~\eqref{eq:meanuniform} gives
\begin{equation}\label{eq:stationaryabsorption}
 C_E\le KE\log E
       +K\frac{(\log E)^{3/2}}{E}\,C_E.
\end{equation}
The last coefficient tends to zero, so it can be absorbed into the
left side.  This proves $C_E\le KE\log E$.
Substitution into~\eqref{eq:stationarynegativemoment} now gives
\[
 \E_{\pi_E}R_-^p
 \le K_p+K_p\frac{(\log E)^{1+p/2}}{E}\le K'_p.
\]
Equation~\eqref{eq:stationarymarginpositive} and the moments of $G$
give the same conclusion for $R_+$, proving
\eqref{eq:stationarymarginmoments}.  Equation~\eqref{eq:stationaryslope}
also bounds every fixed stationary moment of $|m|$ uniformly.
In addition, $\Prob_{\pi_E}(\mathcal G^c)=O(\log E/E)$.

\end{proof}

\section{Identification of the limiting restart law}\label{app:limit}

\begin{proof}[Proof of Theorem~\ref{thm:leadingcoefficient}]
We first identify the distant-crossing representation and then pass to
the invariant laws.

\paragraph{The exact distant-crossing representation.}
By reflection, a margin $r$ is represented by input state $x=E-r$.
The observations relative to the current fitted anchor are
$D_i=E-r+W_{i-1}$, and the lower slope bound is
\[
 a_n=\max_{i\le n}\frac{W_{i-1}-r}{i}.
\]
Along any fixed path, for all sufficiently large $E$ the finite maximizing
index in~\eqref{eq:leadingslope} is in the accepted prefix, so $a_N=A_r=:a$.
The frozen-state duration limit, after reflection, gives $N/E\to2/a$.
The strong law then yields
\[
 \frac{m_{\rm LS}}a
 =\frac{(E-r)\sum_{i\le N}i+\sum_{i\le N}iW_{i-1}}
        {a\sum_{i\le N}i^2}\longrightarrow\frac34.
\]
Hence constrained least squares selects $m=a$.  The first violated upper
constraint gives
\begin{equation}\label{eq:distantcrossing}
 N=\inf\{n\ge0:an-W_n>2E-r-a\}.
\end{equation}
Writing $B=2E-r-a$ and $O_B=aN-W_N-B>0$, the exact next state is
\[
 x'=E-r+W_N-aN=-E+a-O_B.
\]
For $E>a$ it is negative, and therefore
\begin{equation}\label{eq:marginovershootidentity}
 R'=E-|x'|=a-O_B.
\end{equation}
The rejected observation is present in this overshoot.  Replacing it by an
independent observation would change the transition kernel.

\paragraph{A uniform renewal statement.}
We first verify the renewal estimate needed when the drift varies over a
compact interval $[\delta,M]\subset(0,\infty)$.  Conditional on the position
$-d\le0$ just before a first upcrossing, the final overshoot is distributed as
$a+Z-d$ conditional on being positive.  Monotonicity of the Gaussian hazard
function gives, after mixing over the disjoint crossing histories,
\begin{equation}\label{eq:ladderuniformtail}
 \Prob(H_a>t)\le
 \frac{\overline\Phi(t-M)}{\overline\Phi(-M)},\qquad t\ge0.
\end{equation}
Consequently every fixed exponential moment is uniformly bounded.
The event of crossing on the first step also gives
\[
 \operatorname{Law}(H_a)(dh)\ge
 \varphi(h-a)\ind_{\{h>0\}}\,dh.
\]
On $[1,2]$ this contains a common positive multiple of the uniform law.
It follows that the ladder-height laws are uniformly strongly nonlattice and
that $\inf_{a\in[\delta,M]}\mu_a>0$, where $\mu_a=\E H_a$.

Blanchet and Glynn~\cite[Theorem~1(i), p.~1073]{bg} give an exponential
renewal remainder for a uniformly strongly nonlattice family with
uniformly bounded exponential moments, with a factor $\mu_a^2$ on the
remainder.  The uniform positive lower bound for $\mu_a$ verified above
therefore gives, for these ladder heights,
\begin{equation}\label{eq:uniformladderrenewal}
 U_a(u)=\frac{u}{\mu_a}+
 \frac{\E H_a^2}{2\mu_a^2}+D_a(u),\qquad
 |D_a(u)|\le Ce^{-cu},\qquad u\ge0,
\end{equation}
with $c,C>0$ independent of $a\in[\delta,M]$.
The overshoot $O_a(b)$ of the original positive-drift walk above $b$ is the
residual life of its strict ladder-height renewal process.  Thus
\begin{equation}\label{eq:overshootrenewalidentity}
 \Prob(O_a(b)>t)=\int_{[0,b]}\Prob(H_a>b+t-s)\,U_a(ds).
\end{equation}
Stieltjes integration by parts in~\eqref{eq:overshootrenewalidentity}, using
\eqref{eq:uniformladderrenewal}, shows that the linear term converges to
\[
 \mu_a^{-1}\int_t^\infty\Prob(H_a>v)\,dv.
\]
This is the tail of~\eqref{eq:equilibriumovershoot}.  The constant term leaves only a distant
ladder-height tail.  For the remainder, split its Stieltjes integral at
$s=b/2$: the first part is bounded by the distant tail
\eqref{eq:ladderuniformtail}, and the second by $Ce^{-cb/2}$.
The resulting tail convergence is uniform in $a\in[\delta,M]$.
Truncating the overshoot variable then gives the same uniform convergence
for bounded Lipschitz test functions.  No renewal estimate uniform as
$a\downarrow0$ is required below.

\paragraph{Decoupling the random slope from the distant overshoot.}
Conditioning directly on $A_r$ would condition on the entire future.
Instead fix $K_0<\infty$, let
\[
 A_{r,J}=\max_{1\le i\le J}\frac{W_{i-1}-r}{i},
\]
and consider all $|r|\le K_0$.  Given any positive error probability, first
choose $\delta>0$, and then $M,J<\infty$, so that the event
\begin{equation}\label{eq:earlyslopelocalization}
 2\delta\le A_{K_0,J}\le A_{-K_0,J}\le M,
 \qquad
 \sup_{i>J}\frac{|W_{i-1}|+K_0}{i}<\delta
\end{equation}
has complement smaller than that error.  This is possible because
$A_{K_0}>0$, $A_{-K_0}<\infty$, and the strong law holds.
On this event, $A_r=A_{r,J}\in[2\delta,M]$ simultaneously for every
$|r|\le K_0$.

For a substitute output, use only the prefix through $W_{J-1}$ to set
$a=A_{r,J}$, and let the independent increments after that prefix cross the
level
\begin{equation}\label{eq:prefixcrossinglevel}
 b_{E,r,J}=2E-r-aJ+W_{J-1}.
\end{equation}
Define the substitute as zero if $a\le0$ or $b_{E,r,J}\le0$; otherwise it is
$a$ minus this overshoot.  On~\eqref{eq:earlyslopelocalization} those exceptional
cases eventually disappear.  For large $E$, slope $a$ satisfies the finitely
many constraints through $J$.  All subsequent lower bounds are at most $a$,
so failure occurs exactly when an upper bound falls below $a$.
The first crossing is in $[E/M,3E/\delta]$ uniformly over $|r|\le K_0$.
The strong law on this range gives uniformly $N/E=2/a+o(1)$ and
$m_{\rm LS}/a=3/4+o(1)$.  Hence the true and substitute outputs eventually
coincide simultaneously on~\eqref{eq:earlyslopelocalization}.

This event is used only to bound the replacement error; it is not included
in a conditional law of the future.  In the actual conditional expectation
we condition only on $\sigma(W_0,\ldots,W_{J-1})$.  Then $a$ and
$b_{E,r,J}$ are known, and the remaining increments are independent.
The uniform renewal result applies for $a\in[2\delta,M]$; the crossing level
tends to infinity uniformly in $|r|\le K_0$ for each fixed prefix.
First let $E\to\infty$ and then send the truncation error to zero.
For every bounded Lipschitz $f$, the original reflected-margin kernel $K_E$
therefore satisfies
\begin{equation}\label{eq:marginkernellocalconvergence}
 \sup_{|r|\le K_0}|K_Ef(r)-Kf(r)|\longrightarrow0.
\end{equation}
For $E\ge K_0$ this compact set lies in the domain $r\le E$ of $K_E$.

\paragraph{Continuity and uniqueness of an invariant law.}
The map $r\mapsto A_r$ is pathwise 1-Lipschitz.
To examine the second part of the kernel, couple all drifts using the same
Gaussian walk.  For each fixed $a>0$, the first crossing time is finite and
all positions with indices 1 through that time are nonzero almost surely.  If $a_j\to a$,
the crossing times are eventually identical, and $H_{a_j}\to H_a$ almost surely.
The uniform tail bound~\eqref{eq:ladderuniformtail} gives
$\E|H_{a_j}-H_a|\to0$ and $\mu_{a_j}\to\mu_a>0$.  Moreover,
\[
 \int_0^\infty|\Prob(H_{a_j}>o)-\Prob(H_a>o)|\,do
 \le\E|H_{a_j}-H_a|\longrightarrow0.
\]
Thus $q_{a_j}\to q_a$ in $L^1$.  Translation continuity in $L^1$ implies
that $\operatorname{Law}(a_j-O_{a_j})$ converges in total variation to
$\operatorname{Law}(a-O_a)$.  Dominated convergence and the pathwise
continuity of $A_r$ show that $Kf$ is continuous for every bounded measurable
$f$: the kernel is strong Feller.

For each $r,y\in\R$, the density in~\eqref{eq:limitingmargindensity} is strictly
positive.  Indeed $A_r$ has unbounded upper support already from its term
$(W_1-r)/2$, and $H_a$ has unbounded upper support from a first-step crossing.
Consequently every $K(r,\cdot)$ is equivalent to Lebesgue measure, as is any
invariant probability.
For a direct uniqueness argument, suppose that $\nu_1,\nu_2$ are invariant,
put $\nu=(\nu_1+\nu_2)/2$ and $f=d\nu_1/d\nu\in[0,2]$, and take a stationary
pair $X\sim\nu$, $Y\mid X\sim K(X,\cdot)$.  Invariance gives
$f(Y)=\E[f(X)\mid Y]$.  Since $X$ and $Y$ have the same law, equality holds
in the conditional quadratic Jensen inequality, forcing $f(X)=f(Y)$ almost
surely.  The pair law is equivalent to $\nu\otimes\nu$, so $f$ is constant.
Normalization gives $\nu_1=\nu_2$.

\paragraph{Passing the stationary laws and the means to the limit.}
Theorem~\ref{thm:stationaryscale} gives tightness of $\nu_E$ and uniform
integrability of every fixed positive power of the margins.
If $\nu_{E_j}\Rightarrow\nu$, then
\eqref{eq:marginkernellocalconvergence}, tightness, continuity of $Kf$, and
$\nu_EK_E=\nu_E$ imply $\nu Kf=\nu f$ for every bounded Lipschitz $f$.
This class determines measures, so $\nu$ is invariant.
The uniqueness just proved implies existence of a unique limit $\nu_\infty$
and~\eqref{eq:marginweaklimit}.  Its density implies in particular
$\nu_\infty(\{0\})=0$.

On each compact positive interval,
$g_E(r)/(E\log E)\to4V(r)$ uniformly by
Theorem~\ref{thm:conditionalmean}.  On compact negative intervals bounded away
from zero it converges uniformly to zero, by the Gaussian tail estimate in
the negative-state part of Theorem~\ref{thm:frozen}.
The uniform bound~\eqref{eq:meanuniform} controls a neighborhood of zero by
its probability mass; that mass vanishes as the neighborhood shrinks,
because the limit has no atom there.  The same bound and the uniform margin
moments control the remote tails.  Integrating the conditional means against
$\nu_E$ proves~\eqref{eq:leadingcoefficient}.  Its value is finite since
$V(r)\le r+C$, and positive since $V(r)>0$ on the positive half-line and
$\nu_\infty$ is equivalent to Lebesgue measure.
\end{proof}

\section{Uniform moments at the first restart}
\label{app:firstrestart}

The first restart does not require a separate microscopic entrance law
for the moment bounds used below.  We control the actual selected
rejection increment, keeping its conditioning on first rejection.

\begin{lemma}[First-restart moments]\label{lem:firstmarginmoments}
Let $Y_0=0$, let $Y_n=\sum_{i=1}^nZ_i$ with iid standard Gaussian
increments, and initialize Swing at its true first sample.  Write
$N=L_{1,E}$, $J=N+1$, let $[a_N,b_N]$ be the last feasible slope
interval, and let $m$ be its selected slope.  Put
$R_{1,E}=E-|Y_J-Nm|$.  There are absolute constants $c,C>0$ such that,
for all $E\ge8$ and $t\ge0$,
\begin{equation}\label{eq:firstselectedtail}
 \Prob(|Z_J|>t)\le Ce^{-ct}.
\end{equation}
Consequently, for every fixed $p>0$,
\begin{equation}\label{eq:firstmarginmoments}
 \sup_{E\ge8}\E|R_{1,E}|^p<\infty.
\end{equation}
\end{lemma}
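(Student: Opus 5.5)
The plan has two parts: a tail bound for the selected rejection increment $Z_J$, via a predictable stopped sum over candidate final indices, and then pathwise comparisons that bound the margin $R_{1,E}$ by $Z_J$ and the first slope.

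\textbf{Tail of $Z_J$.} The event $\{N=n\}$ is the event that the cone $I_n$ is nonempty and the constraint carried by $Z_{n+1}$ misses it. Since $I_n$ is measurable with respect to $Z_1,\dots,Z_n$, I would write
\[
\Prob(|Z_J|>t)=\sum_{n\ge1}\E\bigl[\ind_{\{N\ge n\}}\Prob(|Z_{n+1}|>t,\ \text{rejection at }n+1\mid\mathcal H_n)\bigr].
\]
A crude bound $\le \E N\cdot\Prob(|Z|>t)$ loses a factor $E^2$, which is fatal for uniformity in $E$. So I would instead condition on rejection, as in the favorable-branch argument of Appendix~\ref{app:margins}. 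Fix an accepted history through $n$ with cone $[a_n,b_n]$ and last sample $Y_n$. Upper rejection by $Y_{n+1}=Y_n+\zeta$ happens exactly when $\zeta>\theta_n:=(n+1)a_n+E-Y_n$. Since the point at $n$ is feasible, $Y_n\le na_n+E$, so $\theta_n\ge a_n$. When $a_n\ge-1$, the threshold is $\theta_n\ge-1$, and the Gaussian conditional excess bound gives
\[
\Prob(\zeta>\theta_n+s\mid\zeta>\theta_n)\le Ce^{-s^2/2}.
\]
Then $|\zeta|>t$ forces an excess of at least $t-|\theta_n|$. This is still not a bound in $t$ alone, because $\theta_n$ may be large: rejection can occur at a large increment. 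That part is handled by the unconditional tail. I would split on whether $|\theta_n|\le t/2$. If it is, the conditional excess tail gives $e^{-ct^2}$ times the rejection probability at $n$, and these sum to at most one. If $|\theta_n|>t/2$ with $\theta_n$ positive, the rejection itself requires $\zeta>t/2$. I would bound $\sum_n\Prob(N\ge n,\ \zeta_{n+1}>t/2)$ by $e^{-t^2/8}\E N$ only for $t\ge C\sqrt{\log E}$, and for smaller $t$ bound the probability by one. That yields a tail of the form $\min\{1,\ CE^2e^{-t^2/8}\}+e^{-ct^2}$. This is not uniform, so I need a better device.

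\textbf{Main obstacle: uniformity in $E$.} To remove the $E^2$ factor, I would show that the bad case, a large threshold $\theta_n$, is rare per unit of rejection probability. I would use the fact that $a_n$ tends to zero and that the path stays near the boundary at rejection, as in Theorem~\ref{thm:fixed} under the Gaussian scaling of Theorem~\ref{thm:firstmean}. That theorem shows that the first rejection occurs near the pinch, where the old line's residual is near $\pm E$. Concretely, I would bound $\Prob(|Y_N-Na_N\mp E|>s)$: either an increment of size $s$ jumps over the band, at cost $e^{-cs^2}$ per step but only $O(1)$ expected rejection-type events, or a gap $s$ exists at the last accepted point. I would first try the direct hazard-ratio bound summed over the one actual rejection, giving $\E\sum_n\ind_{\{N=n\}}(\cdot)$ rather than $\E N$. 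The exponential (rather than Gaussian) rate in \eqref{eq:firstselectedtail} suggests absorbing a geometric loss from this gap, so I expect this step to carry the real work.

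\textbf{Margin moments.} Once \eqref{eq:firstselectedtail} holds, the rest is pathwise. Write $r_N=Y_N-Nm\in[-E,E]$, so that $Y_J-Nm=r_N+Z_J$. This gives $(R_{1,E})_-\le|Z_J|$. For the positive part, the rejection inequality gives $(R_{1,E})_+\le|m|$. Endpoint feasibility bounds $|m|$ by $(2E+\max_{i\le N}|Y_i|)/N$. Using Theorem~\ref{thm:firstmean} (so $N\asymp E^2$ with uniformly integrable inverse-moment control via small-$N$ tail estimates) and $\max_{i\le N}|Y_i|\lesssim E$, the slope is $O(1/E)$ in every moment. The small-$N$ event, for instance $\Prob(N\le E^2\delta)$ being small, needs a bound that is polynomial in $\delta$; it comes from Brownian scaling and a cone-collapse estimate.
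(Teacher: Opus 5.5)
\textbf{There is a genuine gap, at the step you call ``the real work'': bounding, uniformly in $E$, the event that the first rejection comes from a prefix with a large threshold.}

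Your case of threshold at most $t/2$ is correct and matches the paper. Two small corrections first. Upper rejection is governed by the upper endpoint: it occurs when $Z_{n+1}>r:=(n+1)b_n+E-Y_n$, and $r\ge b_n$. Your $\theta_n$ uses $a_n$, which is the wrong endpoint.

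None of your suggested routes closes the large-threshold case.
\begin{itemize}
\item Theorem~\ref{thm:fixed}, transported through the Gaussian scaling, only controls quantities at scale $E$: it says endpoint errors are within $o(E)$ of the boundary. The selected increment $Z_J$ lives at unit scale, so this gives no $E$-free tail.
\item Estimating residual gaps at accepted points again pays for the $\asymp E^2$ indices at which a large jump could occur. That is the same $\E N$ loss you already found.
\item Hazard ratios conditional on rejection do not help either. Once the threshold exceeds $t$, every such rejection has $Z_J>t$. What must be bounded is the probability that the first rejection comes from such a prefix at all.
\end{itemize}

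\textbf{The missing idea is a two-step charging comparison.} Take an accepted prefix through $n$ with old cone $[\ell,u]$, $u\le M$, and $r>t/2$. Replace the rejecting increment by $Z_{n+1}=r-d$ with $d\in[1/2,1]$.
\begin{itemize}
\item The new constraints are $u-d/(n+1)$ and $u+(2E-d)/(n+1)$. So the sample is accepted, the upper endpoint stays $u$, and the next upper-rejection threshold is $u+d\le M+1$.
\item This accept-then-reject event therefore has conditional probability at least $p(r)\overline\Phi(M+1)$, where $p(r)=\int_{r-1}^{r-1/2}\varphi$.
\item By Mills' ratio, $\overline\Phi(r)\le Ce^{-r/2}p(r)$.
\item The two-step events are disjoint across $n$, since they are indexed by the first rejection time $n+2$. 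Their total mass is at most one.
\end{itemize}
Hence the large-threshold contribution is at most $Ce^{-t/4}/\overline\Phi(M+1)$, with no factor $\E N$. Choosing $M=\sqrt t/4$ yields the exponential rate, which is why the lemma has $e^{-ct}$ rather than a Gaussian tail.

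\textbf{You also need a uniform Gaussian tail for the final cone endpoints.} It handles the remaining cases: prefixes with $u>M$, and upper rejection with $Z_J<-t$, which forces $b_N<Z_J<-t$. This replaces your restriction to a nonnegative-ish endpoint. The candidate-line bound $N\ge\lfloor E/B\rfloor$, with $B=(|Z_1|-E)_++2G$ and $G=\sup_i|W_{i-1}|/i$, gives $\max(|a_N|,|b_N|)\le 6B$.

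That same bound handles the positive part at once: $(R_{1,E})_+\le|m|$ and $m\in[a_N,b_N]$. Your route through inverse moments of $N/E^2$ can be completed, because slope zero must fail before the first segment ends. It is unnecessary, however, and your claim $\max_{i\le N}|Y_i|\lesssim E$ would itself need a stopping-time maximal inequality.
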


\begin{proof}
We first bound both endpoints of the final feasible interval.  In the
state coordinates of \eqref{eq:stateupdate}, write $x=Z_1$ and
$W_n=Z_2+\cdots+Z_{n+1}$, so $Y_i=x+W_{i-1}$.  Set
\[
 c_x=(|x|-E)_+,\qquad
 G=\sup_{i\ge1}\frac{|W_{i-1}|}{i},\qquad B=c_x+2G.
\]
The candidate-line bound \eqref{eq:stationarydurationlower} gives
$N\ge\lfloor E/B\rfloor$, and always $N\ge1$.  The cone inequalities
give $a_N\le c_x+G$, $b_N\ge-c_x-G$; using its final accepted sample
also gives
\[
 a_N\ge-\frac{|x|+E}{N}-G,
 \qquad b_N\le\frac{|x|+E}{N}+G.
\]
If $B\le E/2$, then $N\ge E/(2B)$ and $|x|\le E+B$, hence
$(|x|+E)/N+G\le6B$.  If $B>E/2$, the same bound follows from
$N\ge1$ and $2E+c_x+G\le5B$.  Therefore
\begin{equation}\label{eq:firstconebound}
 \max(|a_N|,|b_N|)\le6B=6c_x+12G.
\end{equation}
A Gaussian union bound gives $\Prob(G>t)\le Ce^{-ct^2}$ for $t\ge1$.
Since $c_x\le|Z_1|$, it follows that
\begin{equation}\label{eq:firstconetail}
 \Prob\bigl(\max(|a_N|,|b_N|)>t\bigr)\le Ce^{-ct^2},
 \qquad E\ge8,\quad t\ge0.
\end{equation}

We next compare one-step and two-step first-rejection events.  On
$\{J>n\}$, condition on $\mathcal F_n$, and denote the old cone by
$[\ell,u]$.  An upper rejection at the next sample occurs exactly when
\[
 Z_{n+1}>r:=(n+1)u+E-Y_n.
\]
Because slope $u$ fits the last accepted sample, $r\ge u$.
Fix large $t$ and a cutoff $M>0$.  The contribution of an upper rejection
with $u>M$ is bounded by \eqref{eq:firstconetail}, since this $u$ is
then the final endpoint $b_N$.

On prefixes with $u\le M$ and $r\le t/2$, a Gaussian tail ratio gives
\[
 \bar\Phi(t)\le Ce^{-ct^2}\bar\Phi(r).
\]
Thus, after summing over the disjoint first-rejection events, these
prefixes contribute at most $Ce^{-ct^2}$ to the probability of upper
rejection with $Z_J>t$.

For the remaining prefixes, $u\le M$ and $r>t/2>1$, replace the
prospective increment by a value $z=r-d$ with $1/2\le d\le1$.
The two new slope constraints are then
\[
 u-\frac{d}{n+1},\qquad u+\frac{2E-d}{n+1}.
\]
They intersect the old cone and leave its upper endpoint equal to $u$.
The sample is therefore accepted.  At the following sample, the upper
rejection threshold is exactly $u+d\le M+1$.  Define
\[
 p(r)=\int_{r-1}^{r-1/2}\varphi(z)\,dz.
\]
Given this prefix, the probability of the described acceptance followed
by an upper first rejection is at least $p(r)\bar\Phi(M+1)$.
For $r>1$, monotonicity of the Gaussian density and Mills' bound give
\[
 \frac{\bar\Phi(r)}{p(r)}
 \le\frac{2\varphi(r)}{r\varphi(r-1/2)}
 \le Ce^{-r/2}.
\]
The original probability of upper rejection with $Z_{n+1}>t$ is at
most $\bar\Phi(r)$, and hence at most $Ce^{-t/4}p(r)$.

To sum this comparison correctly, let $B_n$ be the proxy event that
$J>n$, the prefix satisfies $u\le M,r>t/2$, the next increment lies
in $[r-1,r-1/2]$, and the first rejection occurs from above at $n+2$.
The events $B_n$ are disjoint as $n$ varies because each specifies the
first rejection index.  Their total probability is at most one.  Taking
expectations of the preceding conditional comparisons and summing thus
gives
\begin{equation}\label{eq:firstchargingtail}
 \Prob(Z_J>t,\text{upper rejection})
 \le Ce^{-cM^2}+Ce^{-ct^2}
       +\frac{Ce^{-t/4}}{\bar\Phi(M+1)}.
\end{equation}
There is no factor $\E J$ in the last term, and the selected increment
$Z_J$ has not been treated as an independent Gaussian variable.

Take $M=\sqrt t/4$.  Mills' lower bound implies, for all sufficiently
large $t$,
\[
 \frac{e^{-t/4}}{\bar\Phi(M+1)}
 \le C(1+M)\exp\{-t/4+(M+1)^2/2\}\le Ce^{-ct}.
\]
Equation~\eqref{eq:firstchargingtail} now has an exponential upper
bound.  An upper rejection with $Z_J<-t$ forces $b_N<-t$, since
$Z_J>r\ge b_N$; its probability is covered by
\eqref{eq:firstconetail}.  Reflecting the path handles lower rejection.
Increasing the constant to cover bounded $t$ proves
\eqref{eq:firstselectedtail}.

Finally, the last accepted residual $Y_N-Nm$ lies in $[-E,E]$, so
$(R_{1,E})_-\le|Z_J|$.  Rejection of the selected old line gives
$|Y_J-(N+1)m|>E$, and hence $(R_{1,E})_+\le|m|$.
Since $m\in[a_N,b_N]$, \eqref{eq:firstconetail} and
\eqref{eq:firstselectedtail} imply \eqref{eq:firstmarginmoments} by
tail integration.  The constants may depend on $p$, but not on $E$.
\end{proof}

In the relative-state notation, the final fresh increment is $W_N-W_{N-1}$;
it is the original increment $Z_{N+1}=Z_J$.  This one-index shift is why
the proof above uses the actual first-rejection index throughout.

\section{The special second segment}
\label{app:secondsegment}

We use unit Gaussian increments, a true initial anchor, the original
constrained least-squares selector, the carried rejected sample, and no
maximum-lag constraint. Constants below are absolute unless indicated
otherwise.

\begin{lemma}\label{lem:secondnegativemean}
For all sufficiently large $E$ and every $c>0$,
\begin{equation}\label{eq:secondnegativemean}
 g_E(-c)\le CE[1+\log_+(1/c)].
\end{equation}
\end{lemma}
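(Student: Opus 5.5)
The plan is to bound the survival tail $\Prob(L_E(-c)\ge n)$ on three ranges of $n$ and then sum. Two estimates already proved in the paper do the work. The first is the ballot-type bound \eqref{eq:durationtail}, which for $w=-c<0$ (so $w_+=0$) reads $\Prob(L_E(-c)\ge n)\le CE/n$ for $4E+2\le n\le n_0=\lceil E^2\rceil$. The second is the negative-carry Gaussian tail from the proof of Theorem~\ref{thm:frozen}, $\Prob(L_E(-c)\ge n)\le e^{-c^2(n-1)/8}$ for $n-1\ge4E/c$. Beyond $n_0$ I will reuse the block argument of Appendix~\ref{app:conditional}: Lemma~\ref{lem:threepointblock} applied at index $n_0$, combined with \eqref{eq:durationtail}, gives $\sum_{n\ge n_0}\Prob(L_E(-c)\ge n)\le CE$ uniformly in $c>0$. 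Since $g_E(-c)$ is nonincreasing as $c$ grows (more negative carry only shrinks the feasible cone), it suffices to prove the bound for $c\le1$. For $c\ge1$ the same splitting gives $O(E)$ directly, because the Gaussian sum is at most $2+4E/c+8/c^2=O(E)$.

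\emph{Step 1 (short and middle ranges).} For $n<4E+2$, the trivial bound contributes $O(E)$. If $c\le 1/E$, then $\log(1/c)\ge\log E$. In that case summing $CE/n$ over $4E+2\le n\le n_0$ gives $CE\log E\le CE\log(1/c)$, the block bound handles $n\ge n_0$, and the proof is complete. So the remaining case is $1/E<c\le1$.

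\emph{Step 2 (the Gaussian range, $1/E<c\le1$).} Here I will use the minimum of the two tails, $\Prob(L_E(-c)\ge n)\le\min\{CE/n,\ e^{-c^2(n-1)/8}\}$, valid for $n\ge 1+4E/c$ and $n\le n_0$. Choose the cutoff
\[
 n_2=\max\Bigl\{1+\tfrac{4E}{c},\ \tfrac{8}{c^2}\log\bigl(1+\tfrac{1}{Ec^2}\bigr)+1\Bigr\}.
\]
Beyond $n_2$, the geometric Gaussian sum is at most $(8/c^2+1)e^{-c^2(n_2-1)/8}\le C(1/c^2)\cdot Ec^2/(1+Ec^2)+1\le CE$. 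Between $4E+2$ and $\min\{n_2,n_0\}$, the harmonic sum of $CE/n$ is at most $CE\log(n_2/E)$. I will then check that $n_2/E\le C c^{-1}\bigl(1+\log(1/c)\bigr)$: the first term of the max is $4E/c$, and the second term divided by $E$ is $\tfrac{8}{Ec^2}\log(1+\tfrac1{Ec^2})$, which is at most $C/c$ because $Ec^2\ge c\cdot(Ec)\ge c$ when $Ec>1$, and $u\log(1+u)$ with $u=1/(Ec^2)\le 1/c$ is bounded by $C c^{-1}\log(1/c)$ up to constants. This yields $\log(n_2/E)\le C(1+\log(1/c))$. Any portion of the range above $n_0$ is again covered by the block bound.

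The main obstacle is this bookkeeping at the crossover between the harmonic $E/n$ regime and the Gaussian regime when $c$ is between $E^{-1}$ and $E^{-1/2}$. Naively summing the Gaussian tail from $4E/c$ produces $8/c^2$, which can be as large as $E^2$. The fix is to keep using the harmonic bound until $n\approx c^{-2}\log(1/(Ec^2))$, and to verify that the logarithm thereby incurred is still $O(\log(1/c))$. Collecting the three ranges then gives $g_E(-c)\le CE[1+\log_+(1/c)]$ with an absolute constant $C$ for all sufficiently large $E$, namely those $E$ for which \eqref{eq:meanuniform} and \eqref{eq:durationtail} hold.
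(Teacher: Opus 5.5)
Your proof is correct, but it handles the range $n\gtrsim 4E/c$ differently from the paper.

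\textbf{The paper's route.} It reopens the Gaussian-bridge estimate behind \eqref{eq:durationtail} and keeps the endpoint indicator $\ind_{\{x\ge0\}}$, with $x=(2E-cN-y)/n$. Integrating this against the law of $W_N$ gives the product tail \eqref{eq:secondnegativeexponentialtail}, namely $\Prob(L_E(-c)\ge n)\le (CE/n)e^{-c^2(n-1)/8}$. The cutoff can then sit at $n_c=\lceil4E/c\rceil+2$. The sum above it is an exponential integral with lower limit $Ec/2\ge4$, hence $O(E)$. The case $c<8/E$ is handed to \eqref{eq:meanuniform}, which your direct harmonic sum for $c\le1/E$ reproduces.

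\textbf{Your route.} You use only the two tails already proved, \eqref{eq:durationtail} and the negative-carry Gaussian tail of Theorem~\ref{thm:frozen}, and take their minimum. As you note, summing the bare Gaussian tail from $4E/c$ costs $8/c^2$. So you extend the harmonic range to $n_2\approx8c^{-2}\log(1+1/(Ec^2))$. This is harmless because $1/(Ec^2)<1/c$ when $Ec>1$, which gives $\log(n_2/E)\le C(1+\log(1/c))$. The Gaussian remainder beyond $n_2$ is at most $(1+8/c^2)\,Ec^2/(1+Ec^2)\le 1+8E$. Your argument needs no new probabilistic estimate, while the paper's sharper tail buys a cleaner cutoff and simpler bookkeeping.

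\textbf{Two small repairs.}
\begin{enumerate}
\item The second term of $n_2/E$ is bounded by $Cc^{-1}\log(1+1/c)$, not by $C/c$ as you first state. For instance, $c=2/E$ makes it of order $\log E$ times $1/c$. Your final inequality for $\log(n_2/E)$ is unaffected.
\item The claim that $g_E(-c)$ is nonincreasing in $c$, because a more negative carry ``only shrinks the feasible cone'', is false as a pathwise statement. A vertical shift of the data cannot be absorbed by a line through the anchor. For example, at $E=1$ with $W_1=-1$ and $W_2=-4.8$, one finds $L_E(0)=2<3\le L_E(-1)$. Monotonicity in expectation is not established either. Since you bound $c\ge1$ directly via $2+4E/c+8/c^2$, simply delete that remark.
\end{enumerate}
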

\begin{proof}
We retain the necessary endpoint event in the Gaussian bridge estimate
used to prove Theorem~\ref{thm:conditionalmean}. Put $N=n-1$ and
$w=-c$. Conditional on $W_N=y$, its starting allowance is
$x=(2E-cN-y)/n$. For $4E+2\le n\le n_0=\lceil E^2\rceil$, that
estimate gives
\[
 \Prob(L_E(-c)\ge n)
 \le \frac{CE}{n}\E[(1+x)\ind_{\{x\ge0\}}].
\]
If $N\ge4E/c$, put $b=2E-cN\le-cN/2$. Gaussian integration yields
\[
 \E[(1+x)\ind_{\{x\ge0\}}]
 =\Phi(b/\sqrt N)
   +\frac{b\Phi(b/\sqrt N)+\sqrt N\varphi(b/\sqrt N)}n
 \le C e^{-c^2N/8}.
\]
Consequently
\begin{equation}\label{eq:secondnegativeexponentialtail}
 \Prob(L_E(-c)\ge n)
 \le\frac{CE}{n}e^{-c^2(n-1)/8},
 \qquad n-1\ge4E/c,\quad n\le n_0.
\end{equation}

For $8/E\le c\le1$, put $n_c=\lceil4E/c\rceil+2\le n_0$.
The trivial initial bound and \eqref{eq:durationtail} give a total
of at most $CE[1+\log(1/c)]$ below $n_c$. Above $n_c$, the sum in
\eqref{eq:secondnegativeexponentialtail} is bounded by
\[
 CE+CE\int_{4E/c}^{\infty}\frac{e^{-c^2u/8}}u\,du
 \le CE,
\]
since the lower limit becomes $Ec/2\ge4$ after substitution.
The block tail beyond $n_0$ in the proof of
Theorem~\ref{thm:conditionalmean} contributes at most $CE$.

If $0<c<8/E$, the uniform bound \eqref{eq:meanuniform} is absorbed
by the right side of \eqref{eq:secondnegativemean}. If $c>1$,
the carried sample forces slope at most $-c$, so the last-point
Gaussian tail gives
$g_E(-c)\le2+4E/c+8/c^2\le CE$.
These three ranges prove the claim.
\end{proof}

\begin{lemma}\label{lem:outputanticoncentration}
Fix $B>0$ and $E\ge1$. Consider one completed segment with any given
input history and fresh Gaussian increments after its carried sample.
If its selected feasible slope depends only on the old accepted prefix,
then its output margin $R'=E-|X'|$ satisfies
\begin{equation}\label{eq:outputanticoncentration}
 \Prob(R'\in I\mid\hbox{input history})\le C_B|I|
 \qquad\text{for every interval }I\subset[-B,B].
\end{equation}
The constant does not depend on the input history, $E$, or the segment
index. In particular this holds for the original constrained
least-squares selector.
\end{lemma}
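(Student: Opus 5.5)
The plan is to decompose the output margin over the final accepted index and the rejection direction. Then bound the conditional law of the selected rejected increment given the accepted prefix. Large hazards are handled by a charging argument modelled on the proof of Lemma~\ref{lem:firstmarginmoments}.

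\emph{Step 1 (predictable decomposition).} Fix the input history and write $X'=r_n+Z_{n+1}$ on $\{N=n\}$. Here $r_n=D_n-m_nn\in[-E,E]$ is the last accepted residual. The cone $[\ell_n,u_n]$, the slope $m_n$ and $r_n$ are all measurable before $Z_{n+1}$, by the hypothesis that the slope depends only on the old accepted prefix. Upper rejection at $n+1$ is exactly $\{Z_{n+1}>\beta_n\}$ with $\beta_n=(n+1)u_n+E-D_n\ge u_n$, and lower rejection is its reflection. Since the events $\{N=n,\text{upper}\}$ and $\{N=n,\text{lower}\}$ are disjoint, it suffices to bound
\[
 \E\Bigl[\ind_{\{N\ge n\}}\Prob\bigl(Z_{n+1}>\beta_n,\ E-|r_n+Z_{n+1}|\in I\bigm|\mathcal H_n\bigr)\Bigr]
\]
summed over $n$, by $C_B|I|$ times a quantity whose sum over $n$ is at most one. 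On upper rejection, $X'>E+m_n$. So $R'\in I\subset[-B,B]$ forces $X'\in E-I$ (the branch $X'\approx-E$ needs $r_n+\beta_n<-E+B$, impossible since $r_n+\beta_n\ge E+u_n-n(m_n-u_n)$ is handled similarly). In particular it forces $u_n\le B$.

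\emph{Step 2 (moderate thresholds).} On prefixes with $\beta_n\le K$ for a constant $K=K_B$, the conditional density of $Z_{n+1}$ given upper rejection is $\varphi(z)\ind_{\{z>\beta_n\}}/\bar\Phi(\beta_n)\le 1/\bar\Phi(K)$. Hence the inner probability is at most $\bar\Phi(\beta_n)\cdot C_B|I|$. Summing $\E[\ind_{\{N\ge n\}}\bar\Phi(\beta_n)]=\Prob(N=n+1... \text{upper at }n+1)$ over $n$ gives total at most $C_B|I|$.

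\emph{Step 3 (large thresholds, the main obstacle).} When $\beta_n>K$ the hazard $\varphi(\beta_n)/\bar\Phi(\beta_n)\approx\beta_n$ is unbounded. Conditionally on such a rejection, $X'$ genuinely concentrates within $O(1/\beta_n)$ of its threshold, so the naive bound fails. I will not condition on the rejection. Instead I will bound the unconditional inner probability by $\sup\varphi\cdot\ind_{\{\beta_n\le \dots\}}|I|\cdot\varphi(\beta_n)$-type quantities, that is, by $C|I|\varphi(\beta_n-B')$ with $B'$ depending on $B$. These I charge to a proxy event, exactly as in Lemma~\ref{lem:firstmarginmoments}. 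Replace the increment by a value $z\in[\beta_n-1,\beta_n-1/2]$. This value is accepted: its new slope constraints are $u_n-d/(n+1)$ and $u_n+(2E-d)/(n+1)$, which keep the upper endpoint $u_n\le B$. The next upper threshold is then $u_n+d\le B+1$. The proxy probability is at least $p(\beta_n)\bar\Phi(B+1)$, and $\varphi(\beta_n-B')/p(\beta_n)$ is bounded for $\beta_n>K$. The proxy events are disjoint in $n$, because each fixes the first rejection index. Their total mass is therefore at most one, giving a contribution $C_B|I|$.

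The delicate point I expect is verifying that the replaced increment is accepted without changing $m$'s predictability requirements and for arbitrary feasible selectors. Only the cone, not $m$, enters the proxy, so I expect this to go through. The lower branch follows by reflection, and constrained least squares is a selector of the required kind.
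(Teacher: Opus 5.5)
Your route is the paper's: condition on the accepted prefix and split by final index and rejection direction. For moderate $\beta_n$ you use a bounded conditional density. For large $\beta_n$ you charge the target to the two-step proxy: an increment in $[\beta_n-1,\beta_n-1/2]$ that is accepted with upper endpoint still $u_n\le B$, followed by an upper rejection with threshold at most $B+1$. You then sum using disjointness of first-rejection indices. Two local statements are wrong as written, though each has a one-line fix.

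First, the large-threshold bound must carry no shift. The target event lies in $\{Z_{n+1}>\beta_n\}$ with $\beta_n>K\ge1$. The preimage of $I$ under $z\mapsto E-|r_n+z|$ has total length at most $2|I|$, so the target's conditional probability is at most $2|I|\varphi(\beta_n)$. Since $\varphi\ge\varphi(\beta_n)$ on $[\beta_n-1,\beta_n-1/2]$, the proxy has conditional probability at least $\tfrac12\varphi(\beta_n)\bar\Phi(B+1)$, and the ratio is $4|I|/\bar\Phi(B+1)$. A factor $\varphi(\beta_n-B')$ with $B'$ depending on $B$ does not work. With $p(\beta)=\Prob(Z\in[\beta-1,\beta-1/2])$ one has $p(\beta)\sim\varphi(\beta-1)/\beta$ as $\beta\to\infty$, so $\varphi(\beta-B')/p(\beta)$ is unbounded once $B'\ge1$.

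Second, your parenthetical claim that $X'$ cannot be near $-E$ on an upper rejection is false. Take $n=1$ and a carried value $D_1\le-4E$. Least squares then selects $m_1=D_1$, so $X'$ is the rejected increment itself, which need only exceed $\beta_1=D_1+3E\le-E$. The claim is also unnecessary: Steps~2 and~3 use only the length bound $2|I|$ on the two-interval preimage, which is exactly how the paper covers both branches. What you do need, to restrict to $u_n\le B$, is $X'>E+u_n+n(u_n-m_n)\ge E+u_n$, not just $X'>E+m_n$.
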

\begin{proof}
Condition throughout on the input history. Write $D_i$ for the data
relative to the old anchor, with the carry $D_1$ already known, and
let $T$ be the relative rejection index. On a surviving accepted prefix
through $n$, the old cone $[\ell,u]$ and selected slope $m$ are
measurable before the next increment $Z$. Here $\F_n$ includes the
input history and the observations through $D_n$. Put
\[
 r=(n+1)u+E-D_n.
\]
An upper rejection occurs when $Z>r$, and then
\[
 X'=D_n+Z-nm>E+u+n(u-m)\ge E+u.
\]
If $u>B$, this implies $R'<-B$, so the target event is impossible.
We may restrict attention to prefixes with $u\le B$.

Let $d=|I|$. The inverse image of $I$ under
$z\mapsto E-|D_n+z-nm|$ consists of at most two intervals of total
length at most $2d$. If $r\le1$, the Gaussian density bound and
$\overline\Phi(r)\ge\overline\Phi(1)$ therefore give
\[
 \Prob(R'\in I,T=n+1,\text{ upper}\mid\F_n)
 \le C_0d\,\Prob(T=n+1,\text{ upper}\mid\F_n),
 \qquad C_0=\frac{2\varphi(0)}{\overline\Phi(1)}.
\]

If $r>1$, the target probability is at most $2d\varphi(r)$.
For comparison, require the next increment to be
$Z=r-a$, with $1/2\le a\le1$. The resulting observation is
$(n+1)u+E-a$. Its lower and upper slope constraints are
\[
 u-\frac a{n+1}
 \quad\text{and}\quad
 u+\frac{2E-a}{n+1}.
\]
Since $2E\ge a$, this observation is accepted and the upper endpoint
remains $u$. The following increment causes an upper rejection with
threshold $u+a\le B+1$, so its conditional probability is at least
$\beta_B=\overline\Phi(B+1)>0$. On the proposed first-increment
interval $[r-1,r-1/2]$, the Gaussian density is at least $\varphi(r)$.
Thus the two-step event has probability at least
$\beta_B\varphi(r)/2$, and
\[
 \Prob(R'\in I,T=n+1,\text{ upper}\mid\F_n)
 \le \frac{4d}{\beta_B}
       \Prob(T=n+2,\text{ upper}\mid\F_n).
\]
The event on the right is a first rejection: its preceding observation
has just been shown to be accepted. No restriction on its eventual
output margin is imposed.

The lower-rejection argument is the reflection of the preceding one.
Average over each surviving prefix and sum over $n\ge1$ and both
directions. The one-step events are disjoint by their first rejection
index; so are the two-step events, indexed by $n+2$. Each total
probability is at most one. Hence \eqref{eq:outputanticoncentration}
holds with $C_B=C_0+4/\beta_B$. This summation introduces no expected
duration factor and makes no Gaussian-law assumption about a selected
increment.
\end{proof}

\begin{proof}[Proof of Theorem~\ref{thm:secondsegment}]
Write $Y_0=0$, $Y_n=\sum_{i=1}^nZ_i$, and let
$N=L_{1,E}$ and $J=N+1$ be the first accepted duration and true
rejection time. Let $m$ be the selected first slope and $q=Nm$ its
recording at time $N$. Define
\begin{equation}\label{eq:secondmargin}
 X_1=Y_J-q,\qquad R_1=E-|X_1|.
\end{equation}

Given $\F_J$, the next segment sees $X_1+W_{i-1}$, $i\ge1$,
with fresh Gaussian increments after the carry. If $X_1\le0$, its
slope constraints are
\[
 \left[\frac{R_1+W_{i-1}-2E}{i},
       \frac{R_1+W_{i-1}}i\right].
\]
If $X_1>0$, reflection gives the same form with $W$ replaced by $-W$.
The reflection is determined by $\F_J$ and preserves the conditional
Gaussian law. Thus, exactly,
\begin{equation}\label{eq:secondconditional}
 \E[L_{2,E}\mid\F_J]=g_E(R_1).
\end{equation}
Because the rejected sample violates every old feasible line,
$|Y_J-(q+m)|>E$. Consequently
\begin{equation}\label{eq:secondmarginpositive}
 (R_1)_+\le |m|.
\end{equation}

The first-cone tail bound~\eqref{eq:firstconetail} gives
$\E|m|^2\le C$ uniformly for large $E$.
Take $n_*=\lceil E^{3/2}\rceil$. If $N<n_*$, slope zero has failed
by time $n_*$; Gaussian embedding and the reflection principle give
\[
 \Prob(N<n_*)\le
 \Prob\left(\max_{i\le n_*}|Y_i|>E\right)
 \le4e^{-c\sqrt E}.
\]
On $\{N\ge n_*\}$ the selected line fits the observation at $n_*$,
so $|m|\le(E+|Y_{n_*}|)/n_*$. Cauchy--Schwarz therefore gives
\[
 \E|m|\le\frac E{n_*}+\frac{\E|Y_{n_*}|}{n_*}
 +(\E|m|^2)^{1/2}\Prob(N<n_*)^{1/2}
 =O(E^{-1/2}).
\]
In particular,
\begin{equation}\label{eq:secondpositiveexpectation}
 \E(R_1)_+=O(E^{-1/2}).
\end{equation}

Lemma~\ref{lem:outputanticoncentration} implies
$\Prob(|R_1|\le\delta)\le C\delta$ for $0<\delta\le1$, and
$\Prob(R_1=0)=0$. By layer-cake integration,
\begin{equation}\label{eq:secondlogarithmicmoment}
 \begin{split}
 \E\left[\log_+\frac1{-R_1};R_1<0\right]
 &=\int_0^\infty\Prob(-e^{-t}<R_1<0)\,dt\\
 &\le\int_0^\infty\min(1,Ce^{-t})\,dt\le C.
 \end{split}
\end{equation}
Equations~\eqref{eq:secondnegativemean} and
\eqref{eq:secondlogarithmicmoment} bound the negative-input
contribution to \eqref{eq:secondconditional} by $CE$.

For the positive contribution, put $\delta=E^{-1/4}$.
The same concentration bound and Markov's inequality give
\[
 \Prob(R_1>0)
 \le C\delta+\delta^{-1}\E(R_1)_+
 \le CE^{-1/4}.
\]
Using \eqref{eq:meanuniform} and
\eqref{eq:secondpositiveexpectation}, we obtain
\[
 \E[g_E(R_1);R_1>0]
 \le CE\log(eE)
       \{\Prob(R_1>0)+\E(R_1)_+\}
 \le CE^{3/4}\log E=o(E).
\]
Together these estimates prove $\E L_{2,E}\le CE$.

For the matching lower bound, Lemma~\ref{lem:firstmarginmoments} lets
us choose a fixed $M\ge1$ so that $\Prob(|R_1|\le M)\ge1/2$ for all
sufficiently large $E$. The independent future envelope
$G=\sup_{n\ge0}|W_n|/(n+1)$ satisfies $\Prob(G\le M)>0$ after
increasing $M$ if necessary. On their intersection, the pathwise
bound~\eqref{eq:stationarydurationlower} gives
\[
 L_{2,E}\ge\left\lfloor\frac{E}{(R_1)_-+2G}\right\rfloor
 \ge\frac{E}{3M}-1.
\]
This event has probability bounded below independently of $E$.
Taking expectations proves $\E L_{2,E}\ge cE$ and completes the theorem.
\end{proof}

\section{An explicit kernel for the stationary coefficient}
\label{sec:newconstant}

We reduce $\gamma_{\rm Swing}$ to a one-dimensional slope kernel,
specify its Gaussian inputs, and give one convergent evaluation.
The same contraction supplies the transient bound in
Theorem~\ref{thm:transientconvergence}. Throughout, the underlying
algorithm uses unit Gaussian increments, constrained least squares,
the carried rejected sample, and no maximum lag.

\subsection{Eliminating the equilibrium overshoot}
For $a>0$, let
\[
 M_a=\sup_{n\ge0}(W_n-an),\qquad
 F_a(x)=\Prob(M_a\le x),\qquad p_a=F_a(0),
\]
with $F_a(x)=0$ for $x<0$, and put
\begin{equation}\label{eq:newconstantG}
 G_a(r)=\E\Phi(r-M_a),\qquad r\in\R.
\end{equation}
The Lindley identity $M_a\overset d=(M_a+Z-a)_+$, with an independent
$Z\sim N(0,1)$ on the right, yields
\begin{equation}\label{eq:newconstantLindley}
 F_a(x)=G_a(x+a),\qquad x\ge0.
\end{equation}
For $r<a$, $G_a(r)$ remains strictly positive and must not be replaced by
$F_a(r-a)$, which is zero under our convention.

Let $H_a$ be the first strict ascending ladder height of a walk with
increments $N(a,1)$, and let $\mu_a=\E H_a$.
Reverse a finite path before its first positive crossing: the event that
all partial sums are nonpositive becomes the event that the terminal
point is a new descending record.  Noninitial ties have probability zero.
Summing over path lengths, including length zero, identifies the killed
occupation measure on the negative half-line with the descending
ladder-height renewal measure.  After reflection this is
$\operatorname{Law}(M_a)/p_a$, by the compound-geometric representation
of a negative-drift maximum.  Consequently
\begin{equation}\label{eq:newconstantheightdensity}
 f_{H_a}(h)=\frac1{p_a}\E\varphi(h+M_a-a),\qquad h>0.
\end{equation}
Set $D=(a-M_a-Z)_+$.  The first-moment Lindley balance gives $\E D=a$;
integrating~\eqref{eq:newconstantheightdensity} against $h$ therefore gives
\begin{equation}\label{eq:newconstantmeanheight}
 \mu_a p_a=a.
\end{equation}
Integrating the same density from $a-r$ to infinity and dividing by
$\mu_a$ eliminates the equilibrium overshoot from the conditional next-margin
law:
\begin{equation}\label{eq:newconstantoutput}
 j_a(r):=\frac{\Prob(R'\in dr\mid A=a)}{dr}
 =\frac1aG_a(r)\ind_{\{r<a\}}.
\end{equation}
Its normalization can also be checked directly:
\[
 \int_{-\infty}^aG_a(r)\,dr
 =\E(a-M_a-Z)_+=a.
\]
Since $\Prob(A_r\le a)=F_a(r+a)$, the limiting margin kernel is
\begin{equation}\label{eq:newconstantkernel}
 K(r,dy)=\left[\int_{(0,\infty)}
 \frac{G_a(y)}a\ind_{\{y<a\}}\,d_aF_a(r+a)\right]dy.
\end{equation}
The Stieltjes measure $d_aF_a(r+a)$ includes the atom at $a=-r$ when $r<0$;
it cannot be replaced by only an ordinary derivative.
The corresponding positive-slope chain has transition distribution
\begin{equation}\label{eq:newconstantslopekernel}
 T(a,[0,b])=\frac1a\int_{-b}^aG_a(r)F_b(r+b)\,dr,
 \qquad a,b>0.
\end{equation}

Let $\lambda$ be the stationary slope law induced by $\nu_\infty$, so that
$\lambda([0,a])=\int F_a(r+a)\,\nu_\infty(dr)$.
Alternating the margin-to-slope and slope-to-margin transitions and using
$\nu_\infty K=\nu_\infty$ gives $\lambda T=\lambda$.
The conditional reward at slope $a$ is
\begin{equation}\label{eq:newconstantslopereward}
 h(a)=\frac4a\int_0^a V(r)G_a(r)\,dr,\qquad
 \gamma_{\rm Swing}=\int_0^\infty h(a)\,\lambda(da).
\end{equation}
This follows directly from~\eqref{eq:newconstantoutput} and Tonelli's theorem.
There is also a boundary representation:
\begin{equation}\label{eq:newconstantboundarydensity}
 \lim_{a\downarrow0}\frac{\lambda([0,a])}{a}
 =\frac{\gamma_{\rm Swing}}2.
\end{equation}
Indeed, the small-drift survival formula~\eqref{eq:smalldrift} gives
$F_a(r+a)/a\to2V(r)$ for fixed $r\ge0$.
For $r<0$ only $r\ge-a$ can contribute, and Lemma~\ref{lem:ballot} gives
the global bound $F_a(r+a)/a\le C(1+r_+)$ for $0<a\le1/2$.
Since $\nu_\infty$ has no atom at zero and has integrable positive part,
dominated convergence proves~\eqref{eq:newconstantboundarydensity}.
Here the boundary density means precisely the right derivative of the
distribution function, without an additional assumption of pointwise
smoothness of a stationary density.
The two representations provide distinct checks on deterministic integration;
the convergent construction below does not take $\lambda$ as an input.

\subsection{Positive Gaussian input formulas}
Spitzer's identity, in the full transform form proved in
Baxter~\cite[Example 2, eq.~(3.12)]{baxter1960}, gives
\begin{align}
 p_a&=\exp\!\left[-\sum_{n\ge1}\frac{\Phi(-a\sqrt n)}n\right],
 \label{eq:newconstantp}\\
 \E M_a&=\sum_{n\ge1}\left[
 \frac{\varphi(a\sqrt n)}{\sqrt n}-a\Phi(-a\sqrt n)\right].
 \label{eq:newconstantmaximummean}
\end{align}
Define the convolution exponential on $[0,\infty)$ by
\begin{equation}\label{eq:constantconvolution}
 \ell_a(x)=\sum_{n\ge1}\frac{\varphi((x+an)/\sqrt n)}{n^{3/2}},
 \qquad
 \mathcal E_a=\delta_0+\sum_{j\ge1}\frac{(\ell_a(x)dx)^{*j}}{j!}.
\end{equation}
Then $\operatorname{Law}(M_a)=p_a\mathcal E_a$ for $a>0$, and
\begin{equation}\label{eq:constantVinput}
 V(x)=\frac{\mathcal E_0([0,x])}{\sqrt2},\qquad x\ge0.
\end{equation}
To justify the zero-drift formula, $\ell_a\uparrow\ell_0$ locally, with
$\ell_0\le\zeta(3/2)/\sqrt{2\pi}$, so the convolution series converges
locally uniformly in cumulative mass. The small-drift survival
limit~\eqref{eq:smalldrift}, together with $p_a/a\to\sqrt2$, identifies
this limit with $\sqrt2V$. The shift in that survival limit can be removed
on compact positive intervals by its local uniformity; at zero the
convolution atom gives $V(0)=1/\sqrt2$.
Equivalently, Gaussian change of measure at the ladder epoch gives
$\operatorname{Law}(M_a)(dx)=p_ae^{-2ax}\sum_{j\ge0}
\operatorname{Law}(H_a)^{*j}(dx)$, whose zero-drift limit is the renewal
identity for $V$. Only local finiteness of $\mathcal E_0$ is required.
The small-drift formulas used below are
\begin{equation}\label{eq:newconstantsmalldrift}
 p_a\sim\sqrt2a,\qquad
 \E M_a=\frac1{2a}-\rho_{\rm S}+O(a),\qquad
 \rho_{\rm S}=-\frac{\zeta(1/2)}{\sqrt{2\pi}},
\end{equation}
as in Janssen and van Leeuwaarden~\cite{jvl}, Theorems 1--2.
All the inputs to~\eqref{eq:newconstantslopekernel} and
\eqref{eq:newconstantslopereward} are thereby specified.

\subsection{A contraction and one convergent evaluation}
Write $W_1$ for the first Wasserstein distance. Set
\[
 L_h=4+\frac8{\sqrt\pi},\qquad
 g_*=1+\frac{e^{-1}}{2\sqrt{2\pi}}
              \frac{e^{-1/2}}{1-e^{-1/2}},\qquad M_0=\rho_{\rm S}+g_*.
\]
These are explicit bounds, not additional coefficients to estimate.
The evaluation index $m$ below counts applications of the limiting
operator $T$. Its input zero is a chosen slope boundary, not the state
of the second segment in an actual run. The resulting convergence rate
concerns computation of the coefficient; it does not by itself bound
finite-$E$ convergence in the native segment index $k$.

\begin{theorem}\label{thm:constantcontraction}
The margin law $B_a(dr)=G_a(r)\ind_{\{r<a\}}dr/a$ increases in stochastic
order with $a$, while $T$ reverses stochastic order. Moreover,
\begin{equation}\label{eq:constantcontraction}
 W_1(T(a,\cdot),T(c,\cdot))\le\tfrac34|a-c|.
\end{equation}
Both $T$ and $h$ extend continuously to zero, with $h(0)=0$.
For $q_m=(T^mh)(0)$,
\begin{equation}\label{eq:constantevaluation}
 q_{2m}\uparrow\gamma_{\rm Swing},\qquad
 q_{2m+1}\downarrow\gamma_{\rm Swing},\qquad
 |q_m-\gamma_{\rm Swing}|\le L_hM_0(3/4)^m.
\end{equation}
The adjacent brackets have width at most $L_hM_0(3/4)^{2m}$.
\end{theorem}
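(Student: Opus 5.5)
The plan has three layers: monotonicity, contraction, and then the bracketing of $q_m$.

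\emph{Monotonicity.} First I would show that $B_a$ increases in stochastic order. The tail $B_a((y,a))=a^{-1}\int_y^aG_a(r)\,dr$ should be handled through the representation $B_a=\operatorname{Law}(a-O_a)$ from~\eqref{eq:limitingmarginkernel}, or directly through $G_a(r)=\E\Phi(r-M_a)$. Here $M_a$ decreases stochastically in $a$, since it is coupled pathwise on one walk. Hence $G_a(r)$ increases in $a$, and the upper endpoint $a$ also increases. I would then verify, via $\int_{-\infty}^aG_a=a$, that the normalized distribution functions are ordered for every $y$.

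Next, by~\eqref{eq:leadingslope}, the map $r\mapsto A_r$ is pathwise nonincreasing. So the margin-to-slope step $r\mapsto\operatorname{Law}(A_r)$ reverses order. Since $T$ is ``sample $r\sim B_a$, then $A_r$'', the composition reverses order.

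\emph{Contraction~\eqref{eq:constantcontraction}.} On the real line, stochastically ordered laws satisfy $W_1=|\text{difference of means}|$. So $W_1(T(a,\cdot),T(c,\cdot))=|\bar T(a)-\bar T(c)|$, where $\bar T(a)=\int b\,T(a,db)$, and it suffices to show $|\bar T'(a)|\le3/4$. I would split this derivative into two factors:
\begin{itemize}
\item the derivative in $a$ of the quantile coupling of $B_a$;
\item the average pathwise slope $|\partial_rA_r|=1/i^*(r)$, where $i^*$ is the maximizing index.
\end{itemize}
We have $i^*=1$ only on the atom $A_r=-r$, $r<0$; otherwise $i^*\ge2$ gives factor at most $1/2$.

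For $B_a$, I would compute the mean as $a-\E[(a-M_a-Z)_+^2]/(2a)$. I would then bound its derivative using $\partial_a\E(a-M_a-Z)_+=1$ together with the Lindley identity~\eqref{eq:newconstantLindley}. Combining the mass on the atom branch with the $\le1/2$ branch should produce $3/4$.

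This constant is the step I expect to be the main obstacle. If the direct mean-derivative estimate fails, I would fall back on a synchronous coupling. This uses one walk for $A_r$ and quantile coupling for $O_a$, and bounds $\E|A_{r}-A_{r'}|\le\E[\ind_{\{i^*=1\}}]|r-r'|+\tfrac12|r-r'|$ with an explicit atom probability.

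\emph{Continuity at zero and the reward.} Continuity of $T$ and $h$ at $0$ comes from $G_a\to G_0$ and $B_a\Rightarrow\delta$-type mass concentrating on $r<a\to0$. Then $h(a)=\frac4a\int_0^aVG_a\le4\sup_{[0,a]}V\cdot a\cdot(\ldots)\to0$, so $h(0)=0$.

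For the Lipschitz bound $L_h$, I would differentiate~\eqref{eq:newconstantslopereward}. This uses:
\begin{itemize}
\item $V(r)\le r+C$ with $V(0)=1/\sqrt2$;
\item the Lipschitz property of $V$ from the renewal formula~\eqref{eq:constantVinput} with $\ell_0\le\zeta(3/2)/\sqrt{2\pi}$;
\item $|\partial_aG_a|\le\varphi$-type bounds, which give the $8/\sqrt\pi$ term.
\end{itemize}
Monotonicity of $h$ follows from $V\ge0$ increasing and $B_a$ increasing.

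\emph{Brackets~\eqref{eq:constantevaluation}.} Because $T$ reverses order and $h$ is increasing, $T^{2m}h$ is increasing and $T^{2m+1}h$ is decreasing. Starting from the extreme point $0$, we get $q_{2m}=T^{2m}h(0)\le\lambda T^{2m}h=\gamma_{\rm Swing}$ and $q_{2m+1}\ge\gamma_{\rm Swing}$, using $\lambda T=\lambda$ and~\eqref{eq:newconstantslopereward}.

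Monotonicity in $m$ follows from $q_{m+2}=T^mh$ evaluated against $T^2(0,\cdot)$, which dominates or is dominated by $\delta_0$ appropriately. Finally,
\[
 |q_m-\gamma_{\rm Swing}|=\Bigl|\int(T^mh(0)-T^mh(a))\lambda(da)\Bigr|\le L_h(3/4)^m\int a\,\lambda(da).
\]
I would bound $\int a\,\lambda(da)\le M_0$ using $A_r\le$ its value at the bottom margin, $\E M_a$-type estimates~\eqref{eq:newconstantsmalldrift} for $\rho_{\rm S}$, and the geometric Gaussian series defining $g_*$.

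The bracket-width claim follows from the same argument applied between $q_{2m}$ and $q_{2m+1}$.
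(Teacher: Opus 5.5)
Your bracket and iteration layer matches the paper's, but two of the quantitative inputs it rests on are not established by what you propose. First, the monotonicity step points the wrong way. For $y<a$ the distribution function of $B_a$ is $a^{-1}\int_{-\infty}^{y}G_a(r)\,dr$, so stochastic increase requires this to \emph{decrease} in $a$. The fact that $G_a(r)$ increases (because $M_a$ decreases) enlarges the numerator. The normalization $\int_{-\infty}^aG_a=a$ cannot settle the sign: it does not exclude $B_c$ having a heavier left tail together with a larger right endpoint. What is needed is that the density $G_a(r)/a$ is pointwise nonincreasing in $a$, so that the two densities cross only at $r=a$. The paper gets this from $G_a(r)=p_a\int\Phi(r-x)\,\mathcal E_a(dx)$. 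Since $\ell_a$ decreases in $a$, so does $\mathcal E_a$ as a measure. Moreover $\frac{d}{da}\log(p_a/a)=\frac{1}{\sqrt{2\pi}}\sum_{n\ge1}n^{-1/2}e^{-a^2n/2}-\frac1a<0$, by comparing the decreasing sum with its integral. The same fact yields the explicit $L_h$. In $h'(a)$ the interior term $4\int_0^aV(r)\,\partial_a(G_a(r)/a)\,dr$ is nonpositive, so $0\le h'(a)\le4V(a)G_a(a)/a=4V(a)p_a/a$, since Lindley gives $G_a(a)=p_a$. Then $V(a)\le a+\sqrt{2/\pi}$, $p_a\le1$ and $p_a/a\le\sqrt2$ give $4+8/\sqrt\pi$, and also $h(a)\le4V(a)p_a\to0$. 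Your route through bounds on $|\partial_aG_a|$ and a Lipschitz constant for $V$ would have to control that interior term, with its $1/a$ factor. It gives no reason to arrive at $4+8/\sqrt\pi$. Your continuity remark is also incorrect. As $a\downarrow0$, $G_a\to0$ pointwise because $M_a\to\infty$, and $B_a$ converges to the nondegenerate law of $-O_0$, with $\E_{B_a}R_-\to\rho_{\rm S}$. The paper instead extends $T$ to $0$ through the $W_1$-Lipschitz bound and completeness of the probability measures with finite first moment.

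Second, and more seriously, the constant $3/4$ cannot come from the slope step as you intend. At a margin $r<0$ the atom $\{A_r=-r\}$ has probability $\Prob(\sup_{j\ge1}(W_j-|r|j)\le0)=p_{|r|}$. This tends to $1$ as $r\to-\infty$ and is not bounded by $1/2$. Hence no average of the factors $1$ and $1/2$ is guaranteed to be at most $3/4$. Your fallback factor $\Prob(i^*=1)+\frac12$ even exceeds the trivial pathwise constant $1$ wherever $p_{|r|}>1/2$. In the paper the slope step is used only as the decreasing $1$-Lipschitz map $r\mapsto A_r$, and all of the $3/4$ comes from the margin step. The missing computation runs as follows. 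The second-moment Lindley balance, with $(M_a+Z-a)_+\overset{d}{=}M_a$, turns your (correct) expression $a-\E(a-M_a-Z)_+^2/(2a)$ into $\beta(a)=\E_{B_a}R=\frac a2-\frac1{2a}+\E M_a$. Differentiating Spitzer's series for $\E M_a$ gives $\beta'(a)=\frac12+\frac1{2a^2}-\sum_{n\ge1}\Phi(-a\sqrt n)$. The function $x\mapsto\Phi(-a\sqrt x)$ is decreasing and convex, with value $1/2$ at $0$ and integral $1/(2a^2)$. Rectangle and trapezoid comparisons therefore give $\frac12\le\beta'(a)\le\frac34$, and this bound is sharp as $a\downarrow0$. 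The ordered quantile coupling then gives $\E|R_c-R_a|=|\beta(c)-\beta(a)|\le\frac34|c-a|$. Composing with the $1$-Lipschitz slope map proves the contraction. Finally, your bound $\int a\,\lambda(da)\le M_0$ needs $\E_{B_a}R_-\le\rho_{\rm S}$ for every $a$. The paper derives this from the stochastic increase of $B_a$, which makes $\E_{B_a}R_-$ decreasing, together with its limit $\rho_{\rm S}$ as $a\downarrow0$. That bound therefore also depends on the missing monotonicity of $G_a/a$.
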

The proof combines the stochastic increase of the intermediate margin
with a bound $1/2\le\frac d{da}\E_{B_a}R\le3/4$. Coupling ordered margins
and then applying the 1-Lipschitz slope map gives the contraction.
The Lipschitz reward converts it into a coefficient error; order reversal
gives the even and odd brackets. Details are in Appendix~\ref{app:evaluation}.

This is a Wasserstein bound for the limiting kernel, not a total variation
rate or a finite-$E$ mixing bound. The remainder controls exact operator
iteration. Numerical quadrature, inner series, domain truncation and
rounding errors must be controlled separately.

\subsection{Deterministic evaluation of the limiting operator}
\label{sec:constantnumerics}

We evaluate the slope kernel~\eqref{eq:newconstantslopekernel} directly,
without finite-$E$ simulations or Monte Carlo samples. We use the positive
convolution formula~\eqref{eq:constantconvolution}
and retain the atom explicitly. The implementation sums the first 64 terms
of $\ell_a$, evaluates the remaining sum by Euler--Maclaurin quadrature,
and forms its convolution exponential by a tilted, zero-padded FFT.
Gaussian convolution then gives $G_a$. Independent Poisson summation of
$\ell_a$ agrees at selected points within $1.2\times10^{-15}$.

We use two outer discretizations: midpoint cells with differences of the
transition CDF, and Chebyshev differentiation of that CDF with
Gauss--Legendre quadrature. A normalized left dominant eigenvector gives
the approximate stationary slope law. The conditional reward
$(4/a)\int_0^aV(r)G_a(r)\,dr$ is then integrated against that law.
Separately, twice the derivative at zero of its CDF gives the same
coefficient in the exact boundary identity. On the finite grid, the
reported estimate is $2\partial_b[m_hQ_h(\cdot,b)]_{b=0}$, where
$Q_h$ interpolates the transition CDF and $m_hP_h=\lambda_hm_h$,
$m_h\mathbf1=1$. This uses the unnormalized one-step output CDF;
division by $\lambda_h$ gives the boundary estimate corresponding to the
normalized spectral density. It does not call the solver for $V$.

For the reward calculation we use $V(x)=x+e(x)$, where first-step
decomposition of the zero-drift overshoot gives
\[
 e(x)=\varphi(x)-x\Phi(-x)+\int_0^\infty e(y)\varphi(y-x)\,dy.
\]
The implementation replaces $e(y)$ by $\rho_{\rm S}$ for $y>24$,
solves the resulting equation with 384 Gauss--Legendre nodes on $(0,24)$,
and tabulates $V$ on $[0,12]$ at spacing $0.001$ for cubic-spline
interpolation. The substituted tail is
$\rho_{\rm S}\Phi(x-24)$. This truncation is a numerical approximation,
separate from the exact convolution formula~\eqref{eq:constantVinput}.

\begin{table}[ht]
\centering\small
\begin{tabular}{rrr}
\toprule
Maximum-coordinate mesh & Reward integral & Output-CDF boundary estimate\\
\midrule
$0.0025$   & $1.812070582079$ & $1.812071601731$\\
$0.00125$  & $1.812070394806$ & $1.812070649915$\\
$0.000625$ & $1.812070348001$ & $1.812070411803$\\
\midrule
Last-pair extrapolation & $1.812070332400$ & $1.812070332432$\\
\bottomrule
\end{tabular}
\caption{Deterministic evaluations of the limiting kernel. The slope domain
is $(0,7)$, with 64 spectral nodes and 96 inner quadrature nodes. The last
row is second-order Richardson extrapolation, not a certified interval.}
\label{tab:constantdeterministic}
\end{table}

At maximum-coordinate mesh $0.005$ on slope domain $(0,6)$, changing the
spectral order from 32 to 64 to 96 changes the coefficient by less than
$3\times10^{-14}$. Expanding the slope domain from $(0,6)$ to $(0,7)$
changes it by approximately $2.1\times10^{-9}$; expansion from $(0,7)$ to
$(0,8)$ changes it by about $2.8\times10^{-12}$ at a shared finer mesh.
The midpoint-cell calculation converges independently toward the same value.

As a further check on the maximum distribution, a separate solver directly
discretizes the Lindley density equation
\[
 f_a(x)=p_a\varphi(x+a)+\int_0^\infty
             f_a(y)\varphi(x+a-y)\,dy,\qquad x>0.
\]
This check uses neither the compound-Poisson representation nor FFTs.
It shares the separately checked atom $p_a$ with the main solver.
At $a=0.5,1,2$, a 512-node Gauss--Legendre rule on $(0,24/a)$ agrees with
the extrapolated FFT values of $G_a$ within $4.3\times10^{-13}$ at the
tested points. These comparisons are numerical diagnostics, not rigorous
bounds for every drift or argument.

The finest matrix in Table~\ref{tab:constantdeterministic} has maximum
absolute row-mass defect approximately $4.79\times10^{-8}$ and dominant
eigenvalue $1.00000003041947$. Normalizing the eigenvector does not remove
the operator error; dividing the boundary estimate by this eigenvalue
changes it by about $5.51\times10^{-8}$. No negative matrix entries are
silently clipped.
The observed convergence supports the numerical value
\[
 \gamma_{\rm Swing}\approx1.8120703.
\]
It does not certify the displayed extrapolation digits. In particular,
the exact iteration remainder in
Theorem~\ref{thm:constantcontraction} does not include the numerical
integration, transform approximation, domain truncation, or rounding
errors of this implementation. The accompanying source bundle contains
the solver, saved matrices and a numerical README with commands for
reproducing the reward, boundary estimates, extrapolation and Lindley check.

\section{Transient means and the limiting coefficient}
\label{app:transientconvergence}

We prove the logarithmic mean and contraction statements of
Theorem~\ref{thm:transientconvergence} for the actual initialization.
Write $R_{j,E}=E-|X_j|$ for the margin after segment $j$ and
$N_j=L_{j,E}$ for its length. Segment $j$ starts from $R_{j-1,E}$;
its future increments are fresh after the true input rejection.

\subsection{Propagating the first-restart moment bounds}

Lemma~\ref{lem:firstmarginmoments} supplies the base case
$\sup_{E\ge E_0}\E|R_{1,E}|^p<\infty$ for every fixed $p>0$.
Suppose the same holds for the input to segment $j\ge2$.
The uniform conditional mean bound~\eqref{eq:meanuniform} gives
$\E N_j\le C_jE\log E$.
Let $\mathcal G_j$ be the predictable favorable-rejection branch defined
in Appendix~\ref{app:margins}. The pathwise implication
\eqref{eq:stationarypathgood}, Markov's inequality for the input margin,
and the stopped-square maximal inequality give
\begin{equation}\label{eq:transient-badbranch}
 \varepsilon_{j,E}:=\Prob(\mathcal G_j^c)
 \le\Prob(|R_{j-1,E}|>E/40)+C\E N_j/E^2
 \le C_j\frac{\log E}{E}.
\end{equation}
No stationary input law is needed here.

The tail calculation on that branch is also conditional on an accepted
prefix, hence applies to this initialization. For a lower rejection with
old slope $m\in[0,E]$ and last residual $-E+g$, $g\ge0$, rejection means
$Z<m-g$. The exterior output margin is $(-g-Z)_+$, so
\[
 \frac{\Prob(Z<-g-t)}{\Prob(Z<m-g)}
 \le\frac{\Phi(-g-t)}{\Phi(-g)}\le e^{-t^2/2}.
\]
Reflect for upper rejection and sum over the disjoint completion events.
On the complementary branch, the exterior margin is at most the absolute
selected increment. Predictable summation therefore yields
\begin{equation}\label{eq:transient-refinedtail}
 \Prob((R_{j,E})_->t)
 \le e^{-t^2/2}
 +\min\{\varepsilon_{j,E},\,2\E N_j e^{-t^2/2}\}.
\end{equation}
The sufficient path event in~\eqref{eq:stationarypathgood} is used only
to estimate the branch probability, not to condition a future increment.

Integrate~\eqref{eq:transient-refinedtail}, splitting at
$\sqrt{8\log E}$. For every fixed $p>0$,
\begin{equation}\label{eq:transient-negativemoments}
 \limsup_{E\to\infty}\E(R_{j,E})_-^p
 \le p\int_0^\infty t^{p-1}e^{-t^2/2}\,dt,\qquad j\ge2.
\end{equation}
Indeed, the extra integral below the cutoff is
$O_{j,p}((\log E)^{1+p/2}/E)$; the Gaussian tail above it also vanishes.
The pointwise bound~\eqref{eq:stationarymarginpositive} gives
\[
 (R_{j,E})_+\le5(R_{j-1,E})_-+10G_j,
 \qquad G_j=\sup_{n\ge0}\frac{|W_n|}{n+1},
\]
where $G_j$ has Gaussian tails. This completes induction and proves
\begin{equation}\label{eq:transient-fullmoments}
 \limsup_{E\to\infty}\E|R_{j,E}|^p<\infty,
 \qquad j\ge1,\quad p>0.
\end{equation}
Constants may depend on the fixed $j,p$. In particular,
$\E L_{k,E}=O_k(E\log E)$ for all fixed $k\ge2$; the second segment
has the stronger bound in Theorem~\ref{thm:secondsegment}.

\subsection{A positive logarithmic contribution from the third segment}

From any sequence $E\to\infty$, extract a subsequence such that
$R_{1,E}\Rightarrow\rho_1$. Only this weak convergence is needed for the mean-passage argument.
Compact-uniform kernel convergence~\eqref{eq:marginkernellocalconvergence},
tightness and the Feller property give, successively,
\begin{equation}\label{eq:transient-subsequentiallaws}
 R_{j,E}\Rightarrow\rho_1K^{j-1},\qquad j\ge1,
\end{equation}
for each fixed $j$ along that subsequence. For one step, test against a
bounded Lipschitz function, restrict the input to a compact set, and use
uniform convergence there; continuity of $Kf$ handles the limiting input.

Put $v(r)=4V(r)\ind_{\{r>0\}}$. A single application of $K$ has a
strictly positive density. Therefore $\rho_1K^{j-1}$ has no atom at
zero and has positive mass on every positive interval when $j\ge2$.
For each fixed $k\ge3$, the conditional mean theorem and
\eqref{eq:transient-fullmoments} give
\begin{equation}\label{eq:transient-subsequentialmeans}
 \frac{\E L_{k,E}}{E\log E}
 \longrightarrow\rho_1K^{k-2}v
 \quad\hbox{along the chosen subsequence}.
\end{equation}
To justify averaging, use compact-uniform conditional convergence on
each side of zero, the vanishing limiting mass of a shrinking zero
neighborhood, and the bound $g_E(r)/(E\log E)\le C(1+r_+)$.
Any moment of order greater than one in~\eqref{eq:transient-fullmoments}
controls the remote positive tail; tightness controls the negative tail.

The positive lower bound is uniform over possible subsequences for each
fixed $k\ge3$. Choose $M$ such that every possible $\rho_1$ puts mass
at least $1/2$ on $[-M,M]$. For a nonzero continuous function
$0\le\psi\le1$ supported in $(1,2)$, strict positivity and the Feller
property imply
\[
 \min_{|r|\le M}K^{k-2}\psi(r)>0.
\]
Since $v\ge c\psi$ for some $c>0$,~\eqref{eq:transient-subsequentialmeans}
has a positive lower bound depending only on $k$. This proves
$\E L_{k,E}=\Theta_k(E\log E)$ for every fixed $k\ge3$.

\subsection{A geometric bound on all subsequential coefficients}

Let $\mathsf A(r,da)=\operatorname{Law}(A_r)(da)$ and
$\mathsf B(a,dr)=\operatorname{Law}(a-O_a)(dr)$, so
$K=\mathsf A\mathsf B$, $T=\mathsf B\mathsf A$ and $h=\mathsf Bv$.
With $\bar{\alpha}=\rho_1\mathsf A$, the coefficient above is
\begin{equation}\label{eq:transient-slopecoefficient}
 \rho_1K^{k-2}v=\bar{\alpha}T^{k-3}h,\qquad k\ge3.
\end{equation}
Lemma~\ref{lem:firstmarginmoments} supplies a finite absolute constant
$C_1$ with $\int r_-\,\rho_1(dr)\le C_1$ for every possible $\rho_1$.
The bounds from Appendix~\ref{sec:newconstant},
$A_r\le r_-+G_+$, $\E G_+\le g_*$ and
$\int a\,\lambda(da)\le M_0$, give
\[
 W_1(\bar{\alpha},\lambda)\le C_1+g_*+M_0.
\]
Here $G_+=\sup_{n\ge0}W_n/(n+1)$ uses an independent Gaussian walk.
The contraction and the Lipschitz bound for $h$ therefore imply
\[
 |\bar{\alpha}T^{k-3}h-\lambda h|
 \le L_h(C_1+g_*+M_0)(3/4)^{k-3}.
\]
This bound is uniform over every subsequential first-restart law.
Consequently it proves the full-sequence statement
\begin{equation}\label{eq:transient-geometric-bound}
 \limsup_{E\to\infty}
 \left|\frac{\E L_{k,E}}{E\log E}-\gamma_{\rm Swing}\right|
 \le B(3/4)^{k-3},\qquad k\ge3,
\end{equation}
where $B=L_h(C_1+g_*+M_0)<\infty$ is independent of $k$. In particular,
\begin{equation}\label{eq:transient-iteratedlimit}
 \lim_{k\to\infty}\limsup_{E\to\infty}
 \left|\frac{\E L_{k,E}}{E\log E}-\gamma_{\rm Swing}\right|=0.
\end{equation}
The eventual tolerance thresholds may depend on the fixed index.
The argument does not require identifying the entrance law and does
not give a finite-$E$ mixing time. Each subsequential entrance
$\bar{\alpha}$ is distinct from the chosen zero-slope input used for
deterministic coefficient evaluation.

\section{Contraction of the coefficient-evaluation kernel}\label{app:evaluation}

\begin{proof}[Proof of Theorem~\ref{thm:constantcontraction}]
Since $\ell_a$ decreases with $a$, its positive convolution exponential
$\mathcal E_a$ decreases as a measure. Also
\[
 \frac d{da}\log\frac{p_a}{a}
 =\frac1{\sqrt{2\pi}}\sum_{n\ge1}n^{-1/2}e^{-a^2n/2}-\frac1a<0;
\]
the inequality compares a decreasing positive sum with its integral over
$(0,\infty)$. Thus $G_a(r)/a$ decreases at fixed $r$.
Integrating below $r<a$, and using that the CDF is one for $r\ge a$,
proves the stochastic increase of $B_a$.

The second-moment Lindley balance gives
\[
 \beta(a):=\E_{B_a}R=\frac a2-\frac1{2a}+\E M_a.
\]
For example, put $X=M_a+Z$ and integrate $rG_a(r)/a$ up to $a$;
$(X-a)_+\overset d=M_a$ cancels the second moments.
Differentiating~\eqref{eq:newconstantmaximummean} gives
\[
 \beta'(a)=\frac12+\frac1{2a^2}-\sum_{n\ge1}\Phi(-a\sqrt n).
\]
The function $x\mapsto\Phi(-a\sqrt x)$ is decreasing and convex, with
value $1/2$ at zero and integral $1/(2a^2)$. Rectangular and trapezoidal
bounds give $1/2\le\beta'(a)\le3/4$. An ordered quantile coupling has
$\E|R_c-R_a|=|\beta(c)-\beta(a)|\le(3/4)|c-a|$.
On a common independent Gaussian path,
$A_r=\sup_{n\ge0}(W_n-r)/(n+1)$ is decreasing and 1-Lipschitz in $r$.
This proves order reversal and~\eqref{eq:constantcontraction}.

The reward is increasing because $4V(r)\ind_{\{r>0\}}$ is increasing.
Differentiating its finite integral gives
\[
 0\le h'(a)\le4V(a)p_a/a\le L_h.
\]
Here $V(a)\le a+\sqrt{2/\pi}$ follows from the Gaussian overshoot bound;
$p_a\le1$ and $p_a/a\le\sqrt2$ follow from the preceding monotonicity
and~\eqref{eq:newconstantsmalldrift}. These bounds also imply $h(0)=0$.

Stochastic order makes $\E_{B_a}R_-$ decreasing. Since $\E R_+\le a/2$
and $\beta(a)\to-\rho_{\rm S}$, its limit at zero is $\rho_{\rm S}$; hence
$\E_{B_a}R_-\le\rho_{\rm S}$ for every $a$. Further $A_r\le r_-+G$, where
$G=\sup_{n\ge0}W_n/(n+1)$. Writing $c_n=(n+1)/\sqrt n$, a Gaussian
union bound gives
\[
 \E G\le1+\sum_{n\ge1}\int_1^\infty\Phi(-c_nt)dt
 \le1+\sum_{n\ge1}\frac{\varphi(c_n)}{c_n}\le g_*.
\]
Thus $T\operatorname{id}(a)\le M_0$, and $\lambda\operatorname{id}\le M_0$.
The Wasserstein estimate extends $T$ to zero by completeness of the
probability measures with finite first moment; this extension is
specified by the explicit kernel and introduces no unknown input law.
Iteration gives $W_1(\delta_0T^m,\lambda)\le(3/4)^mM_0$.
Pairing with the Lipschitz reward proves the error bound.
Finally $\delta_0$ is the smallest nonnegative input law, $T$ reverses
order and $T^2$ preserves it. Thus its even iterates increase and its odd
iterates decrease, with every even law below every odd law.
This proves the brackets. Their width follows from
$W_1(\delta_0T^{2m},\delta_0T^{2m+1})\le(3/4)^{2m}T\operatorname{id}(0)$.
\end{proof}

\end{document}